%% file: main.tex
\RequirePackage{etex}
\documentclass[11pt]{article}
\usepackage[margin=.95in]{geometry} 
\usepackage[T1]{fontenc}    
\usepackage{lmodern}
\usepackage{nicefrac}       
\usepackage{refcount}
\usepackage{amssymb}
\usepackage[LGR]{fontenc}
\usepackage[greek,english]{babel}
\usepackage{tikz}
\usetikzlibrary{arrows.meta,positioning,shapes.geometric,calc}

\usepackage{tcolorbox}

\usepackage{fullpage}
\usepackage{authblk}
\usepackage{complexity}
\usepackage{nicefrac}
\usepackage{gensymb}

\usepackage{enumitem}
\usepackage{microtype}
\usepackage{graphicx}
\usepackage{subfigure}
\usepackage{booktabs} 
\usepackage{pifont}

\usepackage{tikz}
\usetikzlibrary{positioning}

\usepackage{bbm}

\usepackage{natbib}

\newcommand{\CLS}{\textsf{CLS}\xspace}

\renewcommand{\PLS}{\textsf{PLS}\xspace}

\renewcommand{\PPAD}{\textsf{PPAD}\xspace}
\renewcommand{\PCP}{\textsf{PCP}\xspace}

\newcommand{\ind}{\mathbf{1}}

\newcommand{\PureCircuit}{\textsc{Pure-Circuit}}
\newcommand{\GeneralizedCircuit}{\textsc{Generalized-Circuit}}
\newcommand{\NOT}{\textsc{NOT}}
\newcommand{\OR}{\textsc{OR}}
\newcommand{\PURIFY}{\textsc{PURIFY}}
\newcommand{\Controller}{C}
\newcommand{\RewardPlayer}{R}

\newcommand{\PlayerSet}{[m]}
\newcommand{\StateSpace}{\mathcal{S}}
\newcommand{\NodeSet}{\mathcal{V}}

\newcommand{\GateFamily}{\mathcal H}
\newcommand{\GateSet}{\GateFamily}

\newcommand{\InitialDist}{\mu}

\newcommand{\ActionSymbol}{\mathcal{A}}
\newcommand{\ActionSet}[2]{\ActionSymbol_{#1}(#2)}
\newcommand{\ActionProfileSet}[1]{\ActionSymbol(#1)}
\newcommand{\Action}[1]{a_{#1}}
\newcommand{\RewardSymbol}{r}

\newcommand{\TransKernel}{{P}}
\newcommand{\Simplex}[1]{\Delta(#1)}
\newcommand{\Discount}{\gamma}
\newcommand{\Approx}{\varepsilon}
\newcommand{\Policy}{\pi}

\newcommand{\OtherPolicy}[1]{\Policy_{-#1}}
\newcommand{\DeviationPolicy}[1]{\Policy'_{#1}}
\newcommand{\DeviationProfile}[1]{\DeviationPolicy{#1}\times\OtherPolicy{#1}}

\newcommand{\Expect}{\mathbb E}
\newcommand{\Prob}{\mathbb P}

\newcommand{\ValSym}{V}
\newcommand{\QSym}{Q}
\newcommand{\AvgQSym}{\overline Q}
\newcommand{\AdvSym}{\mathsf{Adv}}

\newcommand{\StageRegOp}{\rho}
\newcommand{\Value}[3]{\ValSym_{#1}^{#2}(#3)}

\newcommand{\QValue}[3]{\QSym_{#1}^{#2}(#3)}
\newcommand{\AvgQValue}[4]{\AvgQSym_{#1}^{#2}(#3,#4)}
\newcommand{\CoarseAdvantage}[4]{\AdvSym_{#1}^{#2,#3}(#4)}
\newcommand{\ActionAdvantage}[4]{\AdvSym_{#1}^{#2}(#3,#4)}

\newcommand{\ControllerBestGap}[1]{\StageRegOp_{\Controller,#1}^{\Policy}}
\newcommand{\VisitDist}[3]{d_{#1}^{#2}(#3)}

\newcommand{\ContVal}[2]{W_{#1}^{\Policy}(#2)}
\newcommand{\FutureVal}[1]{U_{#1}}

\newcommand{\State}[1]{s_{#1}}
\newcommand{\Sink}[1]{z_{#1}}
\newcommand{\CtrlMarg}[1]{q_{#1}}
\newcommand{\SignalMarg}[1]{p_{#1}}
\newcommand{\RewMarg}[1]{\SignalMarg{#1}}

\newcommand{\MatchProb}[1]{M_{#1}}

\newcommand{\Decoded}[1]{x[#1]}

\newcommand{\Bits}{\{0,1\}}
\newcommand{\BoolOrUnknown}{\{0,1,\bot\}}
\newcommand{\LowCutoff}{\frac14}
\newcommand{\HighCutoff}{\frac34}
\newcommand{\FixedDiscount}{\frac1{16}}
\newcommand{\HighCtrlThreshold}{\frac{15}{16}}
\newcommand{\SmallGap}{\frac1{256}}

\newcommand{\Gates}{\mathcal H}
\newcommand{\OutputStates}{\mathcal O}
\newcommand{\ProdMatch}[1]{M^{\mathrm{prod}}_{#1}}
\newcommand{\LowCtrlThreshold}{\frac1{16}}
\newcommand{\GateGap}{\Delta}
\newcommand{\Copy}{\mathcal{E}_{\mathrm{copy}}}
\newcommand{\Force}{\mathcal{E}_{\mathrm{force}}}
\newcommand{\Bad}{\mathsf{Bad}}

\usepackage{url}
\usepackage{amsmath}
\usepackage{amssymb}
\usepackage{amsthm}
\usepackage{bbm}
\usepackage{algorithm}
\usepackage[noend]{algorithmic}
\usepackage[algo2e,vlined,ruled,linesnumbered]{algorithm2e}
\usepackage{footmisc}
\usepackage[table]{xcolor}
\makeatletter
\newcommand\footnoteref[1]{\protected@xdef\@thefnmark{\ref{#1}}\@footnotemark}
\makeatother
\usepackage{arydshln}

\SetAlFnt{\small}
\SetAlCapFnt{\small}
\SetAlCapNameFnt{\small}
\SetAlCapHSkip{0pt}
\IncMargin{-\parindent}

\usepackage[hidelinks, colorlinks = true, linkcolor=red, citecolor=red]{hyperref}
\usepackage{cleveref}
\newcommand{\declarecolor}[2]{\definecolor{#1}{RGB}{#2}\expandafter\newcommand\csname #1\endcsname[1]{\textcolor{#1}{##1}}}
\declarecolor{White}{255, 255, 255}
\declarecolor{Black}{0, 0, 0}
\declarecolor{Maroon}{128, 0, 0}
\declarecolor{Coral}{255, 127, 80}
\declarecolor{Red}{182, 21, 21}
\declarecolor{LimeGreen}{50, 205, 50}
\declarecolor{DarkGreen}{0, 80, 0}
\declarecolor{Purple}{146, 42, 158}
\declarecolor{Navy}{0, 0, 128}
\declarecolor{LightBlue}{84, 101, 202}
\definecolor{mydarkblue}{rgb}{0,0.08,0.45}
\usepackage{tcolorbox}

\usepackage{xcolor}	

\definecolor{CardinalRed}{HTML}{C41E3A}
\definecolor{Dartmouth}{HTML}{00693E}
\definecolor{SapphireBlue}{HTML}{0F52BA}

\colorlet{MyRed}{CardinalRed}
\colorlet{MyGreen}{Dartmouth}

\colorlet{MyLightRed}{MyRed!25}
\colorlet{MyLightGreen}{MyGreen!25}

\colorlet{AlertColor}{MyRed}	
\colorlet{BadColor}{MyRed}	
\colorlet{FocusColor}{MyRed}	
\colorlet{GoodColor}{MyGreen}	
\colorlet{MacroColor}{MyRed}	

\hypersetup{ %
    pdftitle={},
    pdfkeywords={},
    pdfborder=0 0 0,
    pdfpagemode=UseNone,
    colorlinks=true,
    linkcolor=MyRed,
    citecolor=MyGreen,
    filecolor=Purple,
    urlcolor=blue,
}

\usepackage[%
linewidth=2pt,
linecolor=gray,
middlelinecolor= black,
middlelinewidth=0.4pt,
roundcorner=1pt,
topline = false,
rightline = false,
bottomline = false,
rightmargin=0pt,
skipabove=0pt,
skipbelow=0pt,
leftmargin=0pt,
innerleftmargin=4pt,
innerrightmargin=0pt,
innertopmargin=0pt,
innerbottommargin=0pt,
]{mdframed}

\definecolor{TheoremBlueFrame}{RGB}{25,92,170}
\definecolor{LemmaGreenFrame}{RGB}{46,125,50}

\definecolor{TheoremBlueBack}{RGB}{245,249,255}
\definecolor{LemmaGreenBack}{RGB}{247,252,247}

\newenvironment{whitebox}
  {\begin{tcolorbox}[
    colframe=black,
    colback=white,
    top=2pt,
    left=2pt,
    right=2pt,
    bottom=2pt
  ]}
  {\end{tcolorbox}}

\newenvironment{gadgetbox}
  {\begin{tcolorbox}[
    colframe=MyRed,
    colback=red!5,
    top=2pt,
    left=2pt,
    right=2pt,
    bottom=2pt
  ]}
  {\end{tcolorbox}}

\newenvironment{statementbox}
  {\begin{tcolorbox}[
    colframe=TheoremBlueFrame,
    colback=TheoremBlueBack,
    top=2pt,
    left=2pt,
    right=2pt,
    bottom=2pt
  ]}
  {\end{tcolorbox}}

\let\oldnl\nl
\newcommand{\nonl}{\renewcommand{\nl}{\let\nl\oldnl}}

\newcounter{protocol}


\DeclareMathOperator*{\argmax}{arg\,max}

\usepackage{array}
\usepackage[utf8]{inputenc} 
\usepackage[T1]{fontenc}    
\usepackage{booktabs}       
\usepackage{amsfonts}       
\usepackage{nicefrac}       
\usepackage{microtype}      
\usepackage{amsmath}
\usepackage{amssymb}
\usepackage{mathtools}
\usepackage{amsthm}
\usepackage{enumitem}

\usepackage{multirow}
\usepackage{xspace}
\usepackage{xurl} 

\newcommand{\gadgetC}{{\color{MyRed}(\textbf{C})}\xspace}
\newcommand{\gadgetF}{{\color{MyRed}(\textbf{F})}\xspace}

\theoremstyle{plain}
\newtheorem*{theorem*}{Theorem}
\newtheorem{theorem}{Theorem}[section]
\newtheorem{proposition}[theorem]{Proposition}
\newtheorem{lemma}[theorem]{Lemma}
\newtheorem*{lemma*}{Lemma}

\newtheorem*{corollary*}{Corollary}
\theoremstyle{definition}
\newtheorem{definition}[theorem]{Definition}
\newtheorem*{definition*}{Definition}

\newtheorem{remark}[theorem]{Remark}
\theoremstyle{observation}

\newtheorem{conjecture}{Conjecture}

\definecolor{actionzero}{RGB}{230,145,56}
\definecolor{actionone}{RGB}{61,133,198}
\definecolor{ownerA}{RGB}{204,0,0}
\definecolor{ownerBfill}{RGB}{220,220,220}
\definecolor{sinkfill}{RGB}{245,245,245}

\tikzset{
  Astate/.style={draw=ownerA, very thick, rectangle, minimum width=10mm, minimum height=8mm, inner sep=1pt},
  Bstate/.style={draw=black, fill=ownerBfill, very thick, circle, minimum size=8mm, inner sep=0pt},
  sinkstate/.style={draw=black, fill=sinkfill, rounded corners=2pt, minimum width=14mm, minimum height=8mm, inner sep=1pt},
  zeropath/.style={-{Latex[length=2mm]}, very thick, actionzero},
  onepath/.style={-{Latex[length=2mm]}, very thick, actionone},
  lab/.style={font=\small}
}

\usepackage[textsize=tiny]{todonotes}

\usepackage{dsfont}

\allowdisplaybreaks

\title{\fontsize{18}{22}\selectfont The Complexity of Computing Coarse Correlated Equilibria\\
in Markov Games with a Single Controller}

\author[1]{Gabriele Farina}
\author[2,3]{Andreas Kontogiannis\textsuperscript{$\dagger$}}
\author[4]{Ioannis Panageas}
\author[3,5]{Vasilis Pollatos\textsuperscript{$\dagger$}}

\affil[1]{Massachusetts Institute of Technology}
\affil[2]{National Technical University of Athens}
\affil[3]{Archimedes, Athena Research Center, Greece}
\affil[4]{University of California, Irvine}
\affil[5]{National and Kapodistrian University of Athens}

\begin{document}

\maketitle

\begingroup
\renewcommand{\thefootnote}{\fnsymbol{footnote}}
\footnotetext[2]{%
Part of this work was completed during a research visit to University of California, Irvine.\\
Corresponding authors:
\nolinkurl{gfarina@mit.edu},
\nolinkurl{andreaskontogiannis@mail.ntua.gr},
\nolinkurl{ipanagea@ics.uci.edu},
\nolinkurl{vaspoll@math.uoa.gr}.%
}
\endgroup

\setcounter{footnote}{0}
\renewcommand{\thefootnote}{\arabic{footnote}}
\begin{abstract}

\input{abstract}

\end{abstract}
\thispagestyle{empty}
\newpage

\tableofcontents
\thispagestyle{empty}
\clearpage
\setcounter{page}{1}
\newpage

\input{introduction}

\section{Preliminaries}\label{sec: prelims}

\subsection{Markov games, stationary policies, and value functions}

\begin{definition}[Markov game]
An infinite-horizon discounted Markov game is a tuple
\[
\Gamma=(\StateSpace,\PlayerSet,(\ActionSet{i}{s})_{i\in\PlayerSet,s\in\StateSpace},(\RewardSymbol_i)_{i\in\PlayerSet},\TransKernel,\InitialDist,\Discount),
\]
where $\StateSpace$ is a finite state space, $\PlayerSet$ is a finite set of players, $\ActionSet{i}{s}$ is the action set of player $i$ at state $s$, and $\ActionProfileSet{s}=\prod_{i\in\PlayerSet}\ActionSet{i}{s}$.  The reward function of player $i$ is
$\RewardSymbol_i:\bigcup_{s\in\StateSpace}\{s\}\times\ActionProfileSet{s}\to[0,1]$.
For every $s$ and $a\in\ActionProfileSet{s}$, the transition kernel is $\TransKernel(\cdot\mid s,a)\in\Simplex{\StateSpace}$; $\InitialDist\in\Simplex{\StateSpace}$ is the initial distribution; and $\Discount\in[0,1)$ is the discount factor.
\end{definition}

A stationary Markov joint policy is a map $\Policy:\StateSpace\to\Simplex{\ActionProfileSet{s}}$.  We write $\Policy(a\mid s)$ for the probability of joint action $a$ at $s$.  The normalized value function of player $i$ from state $s$ is
\begin{equation}
\Value{i}{\Policy}{s}
=(1-\Discount)\Expect_s^{\Policy}\left[\sum_{t\ge0}\Discount^t \RewardSymbol_i(s_t,a_t)\right].
\label{eq:value}
\end{equation}
All values lie in $[0,1]$.  The action-value function is
\begin{equation}
\QValue{i}{\Policy}{s,a}
=(1-\Discount)\RewardSymbol_i(s,a)+\Discount\sum_{s'\in\StateSpace}\TransKernel(s'\mid s,a)\Value{i}{\Policy}{s'}.
\label{eq:qvalue}
\end{equation}
For an initial distribution $\mu\in\Simplex{\StateSpace}$, we abbreviate the
corresponding expected value by
\[
\Value{i}{\Policy}{\mu}
:=
\Expect_{s_0\sim\mu}\!\left[\Value{i}{\Policy}{s_0}\right]
=
\sum_{s\in\StateSpace}\mu(s)\Value{i}{\Policy}{s}.
\]

\begin{definition}[\textit{Single-controller Markov game}]
A Markov game is single-controller if there is a specific player $\Controller\in\PlayerSet$ such that, for every state $s$ and action profile $a$, the transition kernel $\TransKernel(\cdot\mid s,a)$ depends on $a$ only through $a_{\Controller}$.
\end{definition}

\subsection{Markov Coarse Correlated Equilibria}

At every visit to $s$, the correlation device samples a full action profile $a\sim\Policy(\cdot\mid s)$.  We write $\OtherPolicy{i}(\cdot\mid s)$ for the marginal over the actions of all players except $i$.  A stationary one-player deviation for player $i$ is a policy $\DeviationPolicy{i}:\StateSpace\to\Simplex{\ActionSet{i}{s}}$.  The deviated joint policy $\DeviationProfile{i}$ samples $b_i\sim\DeviationPolicy{i}(\cdot\mid s)$ independently of $a_{-i}\sim\OtherPolicy{i}(\cdot\mid s)$:
\begin{equation}
(\DeviationProfile{i})(b_i,a_{-i}\mid s)
=
\DeviationPolicy{i}(b_i\mid s)\sum_{a_i\in\ActionSet{i}{s}}\Policy(a_i,a_{-i}\mid s).
\label{eq:deviation-product-policy}
\end{equation}
A deterministic stationary coarse deviation is the special case in which $\DeviationPolicy{i}(b_i\mid s)=\ind[b_i=b_i^\star(s)]$ for a state-dependent map $b_i^\star$.

\begin{definition}[Perfect Markov $\Approx$-CCE]
A stationary Markov joint policy $\Policy$ is a perfect Markov $\Approx$-coarse correlated equilibrium if, for every player $i$, every state $s$, and every stationary one-player deviation $\DeviationPolicy{i}$,
\[
\Value{i}{\DeviationProfile{i}}{s}-\Value{i}{\Policy}{s}\le \Approx.
\]
\end{definition}

\begin{definition}[Non-perfect Markov $\Approx$-CCE]
A stationary Markov joint policy $\Policy$ is a non-perfect Markov $\Approx$-coarse correlated equilibrium from $\InitialDist$ if, for every player $i$ and every stationary one-player deviation $\DeviationPolicy{i}$,
\[
\Value{i}{\DeviationProfile{i}}{\InitialDist}-\Value{i}{\Policy}{\InitialDist}\le \Approx.
\]
\end{definition}

For a stationary policy profile $\pi$, define the normalized discounted state-visitation distribution from $\mu$ by
\[
\VisitDist{\mu}{\pi}{s}:=(1-\Discount)\sum_{t\ge0}\Discount^t\Pr^{\pi}[s_t=s\mid s_0\sim\mu].
\]
Moreover, it is easy to see that
\begin{equation}
\VisitDist{\mu}{\pi}{s}\ge (1-\Discount)\mu(s)
\quad\text{for every }s \in \StateSpace.
\label{eq:visit-lower}
\end{equation}

\subsection{Performance-difference lemma for one-player deviations}

For a stationary joint policy $\Policy$ and a stationary one-player deviation
policy $\DeviationPolicy{i}$, define the local deviation advantage by
\[
\CoarseAdvantage{i}{\Policy}{\DeviationPolicy{i}}{s}
:=
\sum_{b_i,a_{-i}}
(\DeviationProfile{i})(b_i,a_{-i}\mid s)
\QValue{i}{\Policy}{s,(b_i,a_{-i})}
-
\Value{i}{\Policy}{s}.
\]

Let
$\VisitDist{\State{0}}{\DeviationProfile{i}}{s}$ be the normalized discounted
state-visitation distribution under $\DeviationProfile{i}$ starting from
$\State{0}$:
\[
\VisitDist{\State{0}}{\DeviationProfile{i}}{s}
:=
(1-\Discount)\sum_{t\ge0}\Discount^t
\Prob[\State{t}=s\mid \State{0},\DeviationProfile{i}].
\]

Next, we provide the performance-difference lemma \cite{kakadeBook,agarwal2021theory} variant for one-player deviations:

\begin{lemma}[Performance-difference  lemma for one-player deviations]
\label{lem:performance-difference-deviation-policy}
Fix a stationary joint policy $\Policy$ and a deviation
policy $\DeviationPolicy{i}$ for player $i$.  
Then
\begin{equation}
\Value{i}{\DeviationProfile{i}}{\State{0}}-\Value{i}{\Policy}{\State{0}}
=
\frac{1}{1-\Discount}
\sum_{s\in\StateSpace}
\VisitDist{\State{0}}{\DeviationProfile{i}}{s}
\CoarseAdvantage{i}{\Policy}{\DeviationPolicy{i}}{s}.
\label{eq:coarse-perf-diff}
\end{equation}
\end{lemma}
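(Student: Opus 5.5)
The plan is the standard telescoping argument behind the performance-difference lemma, with the deviated profile $\DeviationProfile{i}$ playing the role of the ``new'' policy and $\Policy$ the ``old'' one. Write $\sigma:=\DeviationProfile{i}$ for brevity; it is itself a stationary Markov joint policy, given by \eqref{eq:deviation-product-policy}.

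\textbf{Step 1: one-step identity.} For every state $s$, averaging \eqref{eq:qvalue} over $a\sim\sigma(\cdot\mid s)$ and using the definition of $\CoarseAdvantage{i}{\Policy}{\DeviationPolicy{i}}{s}$ gives
\[
\CoarseAdvantage{i}{\Policy}{\DeviationPolicy{i}}{s}
=\Expect_{a\sim\sigma(\cdot\mid s)}\Bigl[(1-\Discount)\RewardSymbol_i(s,a)+\Discount\textstyle\sum_{s'}\TransKernel(s'\mid s,a)\Value{i}{\Policy}{s'}\Bigr]-\Value{i}{\Policy}{s}.
\]
Here one checks that $\sum_{b_i,a_{-i}}\sigma(b_i,a_{-i}\mid s)$ is exactly the expectation over joint actions drawn from $\sigma$ at $s$. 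This is immediate from \eqref{eq:deviation-product-policy}, since summing $\sigma$ over all profiles gives $1$.

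\textbf{Step 2: telescoping along trajectories of $\sigma$.} Let $(s_t,a_t)_{t\ge0}$ be the trajectory generated by $\sigma$ from $\State{0}$. By Step 1 and the Markov property, conditioning on $s_t$ gives
\[
\Expect^{\sigma}\bigl[\CoarseAdvantage{i}{\Policy}{\DeviationPolicy{i}}{s_t}\bigr]
=\Expect^{\sigma}\bigl[(1-\Discount)\RewardSymbol_i(s_t,a_t)+\Discount\Value{i}{\Policy}{s_{t+1}}-\Value{i}{\Policy}{s_t}\bigr].
\]
Multiply by $\Discount^t$ and sum over $t\ge0$. The value terms telescope:
\[
\sum_t\Discount^t\bigl(\Discount\Value{i}{\Policy}{s_{t+1}}-\Value{i}{\Policy}{s_t}\bigr)=-\Value{i}{\Policy}{\State{0}}.
\]
This holds because values lie in $[0,1]$ and $\Discount^t\to0$, so the partial sums converge and the tail vanishes. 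The reward terms give exactly $\Value{i}{\sigma}{\State{0}}$ by \eqref{eq:value}. Hence
\[
\Value{i}{\sigma}{\State{0}}-\Value{i}{\Policy}{\State{0}}=\sum_{t\ge0}\Discount^t\,\Expect^{\sigma}\bigl[\CoarseAdvantage{i}{\Policy}{\DeviationPolicy{i}}{s_t}\bigr].
\]

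\textbf{Step 3: rewrite via visitation.} Expand the expectation as
\[
\Expect^{\sigma}\bigl[\CoarseAdvantage{i}{\Policy}{\DeviationPolicy{i}}{s_t}\bigr]=\sum_s\Prob[\State{t}=s\mid\State{0},\sigma]\,\CoarseAdvantage{i}{\Policy}{\DeviationPolicy{i}}{s}.
\]
Then exchange the sums over $t$ and $s$; this is justified since everything is bounded and the series converges absolutely. By the definition of $\VisitDist{\State{0}}{\DeviationProfile{i}}{s}$ this yields \eqref{eq:coarse-perf-diff}, with the prefactor $\frac1{1-\Discount}$ compensating the normalization.

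\textbf{Main obstacle.} The only delicate point is Step 2's interchange of expectation and infinite sum and the vanishing telescoping tail. Boundedness ($|\CoarseAdvantage{i}{\Policy}{\DeviationPolicy{i}}{s}|\le 1$, values in $[0,1]$) together with dominated convergence handles both. Everything else is bookkeeping; in particular, the single-controller structure is not needed.
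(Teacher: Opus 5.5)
Your proposal is correct and is essentially the paper's argument: the standard performance-difference telescoping, with the same boundedness used to kill the $\gamma^T$ tail and the same final rewriting through the visitation distribution. The only cosmetic difference is that the paper iterates a one-step Bellman recursion for $D(s)=V_i^{\sigma}(s)-V_i^{\pi}(s)$, whereas you telescope $V_i^{\pi}$ along $\sigma$-trajectories and read off $V_i^{\sigma}(s_0)$ directly from the discounted reward sum.
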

\noindent
For the interested reader, we defer the proof of Lemma \ref{lem:performance-difference-deviation-policy} to Appendix \ref{appendix:performance}.

\medskip

\subsection{\PureCircuit{} and the \PCP-for-\PPAD hypothesis}

Next, we provide the formal definition of \PureCircuit{}, the \PPAD-complete problem that we will use to prove our hardness results. 

\begin{whitebox}
\begin{definition}[\PureCircuit{} \cite{purecircuit}]
A \textsc{Pure-Circuit} instance consists of a node set $\NodeSet$ and a gate set $\GateSet$ 
of types \NOT, \OR, and \PURIFY. 
Each node is the output node of at most one gate. A solution is an assignment
$
x\in\BoolOrUnknown^V
$
satisfying:
\begin{itemize}[leftmargin=7mm]
\item $\NOT(u,v)$:
\[
x[u]=0\Rightarrow x[v]=1,
\qquad
x[u]=1\Rightarrow x[v]=0.
\]
\item $\OR(u_0,u_1,w)$:
\[
x[u_0]=x[u_1]=0\Rightarrow x[w]=0,
\]
and
\[
x[u_0]=1\ \text{or}\ x[u_1]=1\Rightarrow x[w]=1.
\]
\item $\PURIFY(u,v_0,v_1)$:
\[
\{x[v_0],x[v_1]\}\cap\Bits\neq\emptyset,
\]
and if $x[u]\in\Bits$, then
\[
x[v_0]=x[v_1]=x[u].
\]
\end{itemize}
\end{definition}
\end{whitebox}

\PureCircuit{} is \PPAD-complete for this gate basis; output uniqueness is used only to ensure that one circuit node receives at most one transition rule in the reduction.

\begin{definition}[$\delta$-\PureCircuit{}]
For $\delta\in(0,1)$, the problem $\delta$-\PureCircuit{} asks, given a \PureCircuit{} instance, to find an assignment $x:\NodeSet\to\BoolOrUnknown$ satisfying at least a $(1-\delta)$ fraction of its gates.
\end{definition}

\begin{conjecture}[\PCP-for-\PPAD, \PureCircuit{} version]
\label[conjecture]{conj:pcp-ppad}
There exists a constant $\delta_{\mathrm{pcp}}>0$ such that $\delta_{\mathrm{pcp}}$-\PureCircuit{} is \PPAD-hard, even when every node is the input to exactly one gate.
\end{conjecture}

This is the \PureCircuit{} formulation of the \PCP-for-\PPAD conjecture~\cite{babichenko2016can} stated in \cite{deligkas2026fisher}.  The constant-accuracy non-perfect theorem in Section \ref{sec:non-perfect} is conditional on \Cref{conj:pcp-ppad}; the inverse-polynomial-accuracy non-perfect theorem and the perfect theorem (Section \ref{sec:perfect}) are unconditional.

\section{The single-controller game construction}
\label{sec:construction}

\subsection{Game definition}

Fix $\Discount=\FixedDiscount$.  Given a \PureCircuit{} instance $I$, construct a two-player discounted Markov game $G_\lambda(I)$ with players $\Controller$ and $\RewardPlayer$; 
$\Controller$ (\textit{controller}) is the player who controls the transitions; $\RewardPlayer$ is called the \textit{reward player}.
The parameter $\lambda\in[0,1)$ controls the controller's decorrelation bonus, as we will next define formally. 
The reduction for perfect Markov CCE uses $\lambda=0$; the non-perfect reduction uses a fixed $\lambda \in (0,1/180)$.
A graphical illustration of our construction can be found in Figure \ref{fig:gates}.

\paragraph{Variable states and marginals.}
For every node $u\in\NodeSet$, create a variable state $\State{u}$.  At $\State{u}$, both players have action set $\Bits$.  For a stationary joint policy $\Policy$, define
\begin{equation}
\CtrlMarg{u}:=\sum_{a\in\ActionProfileSet{\State{u}}:a_{\Controller}=1}\Policy(a\mid\State{u}),
\qquad
\RewMarg{u}:=\sum_{a\in\ActionProfileSet{\State{u}}:a_{\RewardPlayer}=1}\Policy(a\mid\State{u}),
\label{eq:pq-def}
\end{equation}
and the matching probability
\begin{equation}
\MatchProb{u}:=\sum_{a\in\ActionProfileSet{\State{u}}:a_{\Controller}=a_{\RewardPlayer}}\Policy(a\mid\State{u}).
\label{eq:M-def}
\end{equation}
We use the following decoded circuit assignment:
\begin{equation}
x[u]=
\begin{cases}
0,&\RewMarg{u}\le\LowCutoff,\\[1mm]
1,&\RewMarg{u}\ge\HighCutoff,\\[1mm]
\bot,&\text{otherwise.}
\end{cases}
\label{eq:decode}
\end{equation}

\paragraph{Sink states.}
For every rational number $\theta$ used in the gadgets, create an absorbing sink $\Sink{\theta}$.  At $\Sink{\theta}$ both players have a unique dummy action, the state self-loops, the controller receives reward $\theta$, and the reward player receives reward $0$.  Hence
\[
\Value{\Controller}{\Policy}{\Sink{\theta}}=\theta
\quad\text{and}\quad
\Value{\RewardPlayer}{\Policy}{\Sink{\theta}}=0
\]
for every stationary policy $\Policy$.

\begin{figure}[t]
\centering
\makebox[\textwidth][l]{%
\hspace*{0.12\textwidth}%
\begin{minipage}{0.34\textwidth}
\centering
\[
\renewcommand{\arraystretch}{1.18}
\begin{array}{@{}c@{\;}c|cc@{}}
\multicolumn{2}{c|}{}&\multicolumn{2}{c}{\RewardPlayer}\\
& r_{\Controller}^{\lambda} & 0 & 1\\ \hline
\raisebox{-0.65\normalbaselineskip}[0pt][0pt]{\Controller} & 0 & {\color{MyRed}0} & {\color{MyRed}1}\\
& 1 & {\color{MyRed}\lambda} & {\color{MyRed}1-\lambda}
\end{array}
\]
\end{minipage}%
\hspace*{0.05\textwidth}%
\begin{minipage}{0.30\textwidth}
\centering
\[
\renewcommand{\arraystretch}{1.18}
\begin{array}{@{}c@{\;}c|cc@{}}
\multicolumn{2}{c|}{}&\multicolumn{2}{c}{\RewardPlayer}\\
& r_{\RewardPlayer} & 0 & 1\\ \hline
\raisebox{-0.65\normalbaselineskip}[0pt][0pt]{\Controller} & 0 & {\color{MyRed}1} & {\color{MyRed}0}\\
& 1 & {\color{MyRed}0} & {\color{MyRed}1}
\end{array}
\]
\end{minipage}%
}

\vspace{.5em}

\begin{minipage}{0.28\textwidth}
\centering
\begin{tikzpicture}[scale=0.82,transform shape]
\node[Astate] (v) at (0,0) {$v$};
\node[Bstate] (u) at (0,-1.9) {$u$};
\node[sinkstate] (z) at (1.9,-1.9) {$\Sink{1/2}$};
\draw[zeropath] (v) -- (u);
\draw[onepath] (v) -- (z);
\end{tikzpicture}

\smallskip
{\small (a) \NOT}
\end{minipage}
\hfill
\begin{minipage}{0.32\textwidth}
\centering
\begin{tikzpicture}[scale=0.82,transform shape]
\node[Astate] (w) at (0,0) {$w$};
\node[sinkstate] (z) at (-1.75,-1.9) {$\Sink{21/64}$};
\node[Bstate] (u0) at (0,-1.9) {$u_0$};
\node[Bstate] (u1) at (1.75,-1.9) {$u_1$};
\draw[zeropath] (w) -- (z);
\draw[onepath] (w) -- node[pos=0.55,right,lab,black] {$1/2$} (u0);
\draw[onepath] (w) -- node[pos=0.65,above right,lab,black] {$1/2$} (u1);
\end{tikzpicture}

\smallskip
{\small (b) \OR}
\end{minipage}
\hfill
\begin{minipage}{0.36\textwidth}
\centering
\begin{tikzpicture}[scale=0.82,transform shape]
\node[Astate] (v0) at (-1.05,0) {$v_0$};
\node[Astate] (v1) at (1.05,0) {$v_1$};
\node[Bstate] (u) at (0,-1.9) {$u$};
\node[sinkstate] (z0) at (-2.35,-1.9) {$\Sink{89/128}$};
\node[sinkstate] (z1) at (2.35,-1.9) {$\Sink{39/128}$};
\draw[zeropath] (v0) -- (z0);
\draw[onepath] (v0) -- (u);
\draw[onepath] (v1) -- (u);
\draw[zeropath] (v1) -- (z1);
\end{tikzpicture}

\smallskip
{\small (c) \PURIFY}
\end{minipage}
\caption{Our single-controller Markov game construction: Player $C$ denotes the transition-controller, and player $R$ denotes the non-controller player. Rectangles are output states, circles are input states, and black rectangles are absorbing sinks.  Orange arrows correspond to controller action $0$ and blue arrows to controller action $1$.  The controller reward table is the $\lambda$-modified reward used in the unified construction; for the perfect theorem, $\lambda=0$. 
}
\label{fig:gates}
\end{figure}

\paragraph{Rewards at variable states.}
At every variable state $\State{u}$, define
\begin{equation}
\RewardSymbol_{\Controller}^{\lambda}(\State{u},a_{\Controller},a_{\RewardPlayer})
=(1-\lambda)a_{\RewardPlayer}+\lambda\ind[a_{\Controller}\ne a_{\RewardPlayer}],
\qquad
\RewardSymbol_{\RewardPlayer}(\State{u},a_{\Controller},a_{\RewardPlayer})
=\ind[a_{\Controller}=a_{\RewardPlayer}].
\label{eq:lambda-reward}
\end{equation}
For $\lambda=0$, the controller reward is just the reward player's signal bit.  
For $\lambda>0$, the second term is the \textit{decorrelation bonus} used only in the non-perfect argument.  
Since $0\le\lambda<1$, both rewards lie in $[0,1]$.

\paragraph{Transitions.}
Only the controller's action $a_{\Controller}$ affects transitions.  If $v$ is the output of $\NOT(u,v)$, then at $\State{v}$,
\[
a_{\Controller}=0\to\State{u},
\qquad
 a_{\Controller}=1\to\Sink{1/2}.
\]
If $w$ is the output of $\OR(u_0,u_1,w)$, then at $\State{w}$,
\[
a_{\Controller}=0\to\Sink{21/64},
\qquad
 a_{\Controller}=1\to
\begin{cases}
\State{u_0},&\text{with probability }1/2,\\
\State{u_1},&\text{with probability }1/2.
\end{cases}
\]
If $v_0,v_1$ are the outputs of $\PURIFY(u,v_0,v_1)$, then at $\State{v_0}$,
\[
a_{\Controller}=0\to\Sink{89/128},
\qquad
 a_{\Controller}=1\to\State{u},
\]
and at $\State{v_1}$,
\[
a_{\Controller}=0\to\Sink{39/128},
\qquad
 a_{\Controller}=1\to\State{u}.
\]
If a node is not the output of any gate, both controller actions transition to $\Sink{0}$.  The output-uniqueness of the source instance ensures that these rules never conflict.

\paragraph{Further useful quantities.}
For a variable state $\State{v}$ and $b\in\Bits$, define the controller's expected continuation value if the current controller action is fixed to $b$ as follows
\begin{equation}
\ContVal{b}{\State{v}}
:=\sum_{s'\in\StateSpace}\TransKernel(s'\mid \State{v},b)\Value{\Controller}{\Policy}{s'}.
\label{eq:W-def}
\end{equation}
Also, define the controller's current action value averaged against the reward player's marginal:
\begin{equation}
\AvgQValue{\Controller}{\Policy}{\State{v}}{b}
:=\sum_{a_{\RewardPlayer}}\Policy_{\RewardPlayer}(a_{\RewardPlayer}\mid\State{v})
\QValue{\Controller}{\Policy}{\State{v},(b,a_{\RewardPlayer})}.
\label{eq:barQ-def}
\end{equation}
For $b\in\Bits$, define the controller's action advantage at a variable state $s$ by
\begin{equation}
\ActionAdvantage{\Controller}{\Policy}{s}{b}
:=\AvgQValue{\Controller}{\Policy}{s}{b}-\Value{\Controller}{\Policy}{s},
\label{eq: adv_b}
\end{equation}
and define the controller's best local advantage
\begin{equation}
\ControllerBestGap{s}:=\max_{b\in\Bits}\ActionAdvantage{\Controller}{\Policy}{s}{b}.
\label{eq: best_advantage}
\end{equation}


\subsection{Single-controller gadgets for implementing the gates}
\label{sec:gadgets}

This section will serve as the key theoretical framework for proving our hardness results. 
It does not use a particular equilibrium definition directly. Instead it assumes the following two gadget conditions, which are proved in the case-specific sections.
In particular, we will prove these conditions holds under both perfect and non-perfect Markov $\Approx$-CCE. 

Formally, our analysis leverages the following two conditions of a stationary policy $\Policy$ on the constructed game:

\begin{gadgetbox}
\paragraph{{\color{MyRed}(\textbf{C})} Copy gadget.}\phantomsection\label{copy}  For every variable state $\State{u}$, it holds that
\[
\CtrlMarg{u}\ge\HighCtrlThreshold
\Rightarrow
\RewMarg{u}>\HighCutoff,
\qquad
\CtrlMarg{u}\le\FixedDiscount
\Rightarrow
\RewMarg{u}<\LowCutoff.
\]
\paragraph{{\color{MyRed}(\textbf{F})} Controller-forcing gadget.}\phantomsection\label{controller-forcing}  For every output variable state
$\State{v}$ (i.e., a state $s_v$ corresponding to a node $v$ which is an output node of some gate) and gap $\Delta:=\SmallGap$, it holds that
\[
\ContVal{1}{\State{v}}-\ContVal{0}{\State{v}}\ge\Delta
\Rightarrow
\CtrlMarg{v}\ge\HighCtrlThreshold,
\]
and
\[
\ContVal{0}{\State{v}}-\ContVal{1}{\State{v}}\ge\Delta
\Rightarrow
\CtrlMarg{v}\le\FixedDiscount.
\]
\end{gadgetbox}

\paragraph{High-level idea.}
The copy gadget~\hyperref[copy]{\gadgetC} transfers the controller's almost-Boolean choice at a variable state to the reward player's marginal action. 
At state $\State{u}$, the reward player $\RewardPlayer$ receives payoff $1$ exactly when its action matches the controller's action. 
Hence, if the controller plays action $1$ with very high probability but the reward player does not play $1$ with high probability, then the reward player has a large immediate gain from the coarse deviation that plays $1$ at $\State{u}$.
Similarly, if the controller plays action $1$ with very low probability but the reward player does not play $0$ with high probability, then the reward player has a large immediate gain from the coarse deviation that plays $0$ at $\State{u}$. 
Since the reward-player deviations do not affect transitions, the worst possible continuation loss is at most $\Discount$. 
With our choice $\Discount=1/16$, the immediate gain in either bad case dominates this continuation loss.

The controller-forcing gadget~\hyperref[controller-forcing]{\gadgetF} is the mechanism that makes each gate compute its output. 
When $\lambda=0$, at an output state $\State{v}$, the controller's immediate reward is independent of the controller's own action; it depends only on the reward player's action. 
Therefore, in this case, the controller's incentive between actions $0$ and $1$ comes only from the discounted continuation values $\ContVal{0}{\State{v}}$ and $\ContVal{1}{\State{v}}$. 
For example, if one action has continuation value larger by a fixed gap, then a perfect Markov CCE can put only a small probability on the worse action; otherwise the controller would have a profitable stationary coarse deviation to the better fixed action. 
Hence a continuation-value gap forces $\CtrlMarg{v}$ close to $0$ or close to $1$, and the copy gadget then turns this into a rounded output signal $\RewMarg{v}$.

Next, we provide a useful lemma which effectively bounds the value of the controller at any state of our construction:

\begin{lemma}[Controller value interval]
\label{lem:controller-value}
For every variable state $\State{u}$,
\begin{equation}
(1-\Discount)(1-\lambda)\RewMarg{u}
\le
\Value{\Controller}{\Policy}{\State{u}}
\le
(1-\Discount)\bigl((1-\lambda)\RewMarg{u}+\lambda\bigr)+\Discount.
\label{eq:controller-value-interval}
\end{equation}
In particular, when $\Discount=1/16$, the following hold:
\begin{align}
\RewMarg{u}\le\frac14&\Rightarrow \Value{\Controller}{\Policy}{\State{u}}\le \frac{19}{64}+\frac{45\lambda}{64},
\label{eq:read-low-lambda}\\
\RewMarg{u}\ge\frac34&\Rightarrow \Value{\Controller}{\Policy}{\State{u}}\ge \frac{45}{64}-\frac{45\lambda}{64},
\label{eq:read-high-lambda}\\
\RewMarg{u}\le\frac23&\Rightarrow \Value{\Controller}{\Policy}{\State{u}}\le \frac{88}{128}+\frac{40\lambda}{128},
\label{eq:read-two-thirds-lambda}\\
\RewMarg{u}\ge\frac13&\Rightarrow \Value{\Controller}{\Policy}{\State{u}}\ge \frac{40}{128}-\frac{40\lambda}{128}.
\label{eq:read-one-third-lambda}
\end{align}
\end{lemma}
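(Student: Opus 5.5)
The plan is to write the Bellman equation at $\State{u}$ and bound the two terms separately. Since the process moves to $\State{u}$'s successors after one step, we have
\[
\Value{\Controller}{\Policy}{\State{u}}=(1-\Discount)\,\Expect_{a\sim\Policy(\cdot\mid\State{u})}\bigl[\RewardSymbol_{\Controller}^{\lambda}(\State{u},a)\bigr]+\Discount\sum_{a}\Policy(a\mid\State{u})\sum_{s'}\TransKernel(s'\mid\State{u},a_{\Controller})\Value{\Controller}{\Policy}{s'} .
\]
This follows from \eqref{eq:value} and \eqref{eq:qvalue} by averaging $\QValue{\Controller}{\Policy}{\State{u},a}$ over $a\sim\Policy(\cdot\mid\State{u})$.

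\textbf{Immediate term.} By \eqref{eq:lambda-reward}, the expected immediate reward is
\[
(1-\lambda)\RewMarg{u}+\lambda(1-\MatchProb{u}).
\]
Using $0\le 1-\MatchProb{u}\le 1$, it lies in $[(1-\lambda)\RewMarg{u},\,(1-\lambda)\RewMarg{u}+\lambda]$.

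\textbf{Continuation term.} All rewards lie in $[0,1]$, so every normalized value lies in $[0,1]$; this includes sink values $\theta\in[0,1]$, since all gadget thresholds are in $[0,1]$. The continuation term is therefore in $[0,\Discount]$. Combining the two bounds gives \eqref{eq:controller-value-interval}. The only point needing care is that the normalization $(1-\Discount)$ multiplies the immediate reward but not the discounted continuation.

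\textbf{Specializations.} With $\Discount=1/16$ we have $1-\Discount=15/16$, and the four implications follow because both bounds are monotone in $\RewMarg{u}$ (the coefficient $1-\lambda$ is nonnegative).

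For \eqref{eq:read-low-lambda}, take $\RewMarg{u}\le 1/4$:
\[
\tfrac{15}{16}\bigl(\tfrac{1-\lambda}{4}+\lambda\bigr)+\tfrac1{16}=\tfrac{15}{64}+\tfrac{45\lambda}{64}+\tfrac{4}{64}=\tfrac{19}{64}+\tfrac{45\lambda}{64}.
\]

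For \eqref{eq:read-high-lambda}, take $\RewMarg{u}\ge 3/4$:
\[
\tfrac{15}{16}\cdot\tfrac34(1-\lambda)=\tfrac{45}{64}(1-\lambda).
\]

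For \eqref{eq:read-two-thirds-lambda}, take $\RewMarg{u}\le 2/3$:
\[
\tfrac{15}{16}\bigl(\tfrac23(1-\lambda)+\lambda\bigr)+\tfrac1{16}=\tfrac{10}{16}+\tfrac{5\lambda}{16}+\tfrac{1}{16}=\tfrac{88}{128}+\tfrac{40\lambda}{128}.
\]

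For \eqref{eq:read-one-third-lambda}, take $\RewMarg{u}\ge 1/3$:
\[
\tfrac{15}{16}\cdot\tfrac13(1-\lambda)=\tfrac{5}{16}(1-\lambda)=\tfrac{40}{128}-\tfrac{40\lambda}{128}.
\]

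There is no real obstacle here; the argument is a routine one-step decomposition plus arithmetic.
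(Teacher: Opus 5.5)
Your proposal is correct and follows the paper's proof essentially verbatim: the same one-step Bellman decomposition with immediate reward $(1-\lambda)\RewMarg{u}+\lambda(1-\MatchProb{u})$, the bounds $1-\MatchProb{u}\in[0,1]$ and continuation value in $[0,1]$, and then substitution of $\Discount=1/16$. All four specializations are computed correctly, and the monotonicity-in-$\RewMarg{u}$ remark is the right justification for them.
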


\begin{proof}
Let $\FutureVal{u}:=\Expect[\Value{\Controller}{\Policy}{\State{1}}\mid \State{0}=\State{u}]$ be the expected next-state value of the controller under $\Policy$ after leaving $\State{u}$.
At $\State{u}$, the expected immediate reward of the controller is
\[
\Expect_{a\sim\Policy(\cdot\mid\State{u})}\left[(1-\lambda)a_{\RewardPlayer}+\lambda\ind[a_{\Controller}\ne a_{\RewardPlayer}]\right]
=(1-\lambda)\RewMarg{u}+\lambda(1-\MatchProb{u}).
\]
By the Bellman equation, we have
\[
\Value{\Controller}{\Policy}{\State{u}}
=(1-\Discount)\bigl((1-\lambda)\RewMarg{u}+\lambda(1-\MatchProb{u})\bigr)+\Discount U_u,
\]  
Since all values lie in $[0,1]$ and $1-\MatchProb{u}\in[0,1]$, we have $0\le U_u\le1$, which proves \eqref{eq:controller-value-interval}.  Substituting $\Discount=1/16$ and the stated bounds on $\RewMarg{u}$ gives the four displayed inequalities.
\end{proof}


Next, we implement the gates NOT, OR, and PURIFY assuming that the two gadget conditions, restricted to the input and output states of the gate under consideration, hold.  

\paragraph{Parameter choices.}
The sink values are chosen to sit strictly between the controller value ranges corresponding to the relevant decoded input regimes, as certified by Lemma~\ref{lem:controller-value}. The NOT sink $1/2$ separates the low-input and high-input value intervals, yielding the two gaps $(13-45\lambda)/64$ in Proposition~\ref{prop:gate-soundness}. The OR sink $21/64$ separates the all-low case from the case in which at least one input is high, yielding gaps $(2-45\lambda)/64$ and $(3-45\lambda)/128$. The PURIFY sinks $89/128$ and $39/128$ are placed around the value thresholds corresponding to $\RewMarg{u}=2/3,3/4,1/4,$ and $1/3$, yielding gaps $(1-40\lambda)/128$ and $(1-90\lambda)/128$. The tight constraints are the PURIFY inequalities with gap $(1-90\lambda)/128$; requiring this to exceed $\GateGap=1/256$ gives
\[
\frac{1-90\lambda}{128}>\frac{1}{256}
\quad\Longleftrightarrow\quad
\lambda<\frac{1}{180}.
\]
All other gate inequalities have larger slack under this choice.

\bigskip

\begin{proposition}[Gates Implementation]
\label{prop:gate-soundness}
Assume $0\le\lambda<1/180$.  If gadget conditions~\hyperref[copy]{\gadgetC} and~\hyperref[controller-forcing]{\gadgetF} hold at all output states of a gate $g \in \{\NOT{}, \OR{}, \PURIFY{} \}$, then the assignment decoded by \eqref{eq:decode} satisfies that gate.
\end{proposition}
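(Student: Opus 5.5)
The plan is a direct case analysis, one gate type at a time. Every case runs through the same three-step chain. First, a decoded input value bounds the reward-player marginal $\RewMarg{u}$ of the input state. Lemma~\ref{lem:controller-value} then turns this into a bound on $\Value{\Controller}{\Policy}{\State{u}}$. Comparing that bound with the relevant sink value gives a continuation gap $|\ContVal{1}{\State{v}}-\ContVal{0}{\State{v}}|\ge\GateGap=\SmallGap$ at the output state $\State{v}$. Condition~\hyperref[controller-forcing]{\gadgetF} then pins $\CtrlMarg{v}\ge\HighCtrlThreshold$ or $\CtrlMarg{v}\le\FixedDiscount$. Finally, condition~\hyperref[copy]{\gadgetC} at $\State{v}$ gives $\RewMarg{v}>\HighCutoff$ or $\RewMarg{v}<\LowCutoff$, which is $x[v]=1$ or $x[v]=0$ under \eqref{eq:decode}. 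Since sinks have controller value $\theta$, we have $\ContVal{b}{\State{v}}=\theta$ for a sink branch. For a branch into $\State{u}$ we have $\ContVal{b}{\State{v}}=\Value{\Controller}{\Policy}{\State{u}}$, and for the OR branch it is the average over the two inputs.

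\textbf{NOT.} At $\State{v}$ we have $\ContVal{0}{\State{v}}=\Value{\Controller}{\Policy}{\State{u}}$ and $\ContVal{1}{\State{v}}=1/2$. If $x[u]=0$, \eqref{eq:read-low-lambda} gives $\ContVal{1}{\State{v}}-\ContVal{0}{\State{v}}\ge 32/64-(19+45\lambda)/64=(13-45\lambda)/64$. This is at least $\GateGap$, so $x[v]=1$. If $x[u]=1$, \eqref{eq:read-high-lambda} gives the symmetric gap $(13-45\lambda)/64$, so $x[v]=0$. If $x[u]=\bot$, there is nothing to prove.

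\textbf{OR.} Here $\ContVal{0}{\State{w}}=21/64$ and $\ContVal{1}{\State{w}}=\tfrac12\bigl(\Value{\Controller}{\Policy}{\State{u_0}}+\Value{\Controller}{\Policy}{\State{u_1}}\bigr)$. If both inputs decode to $0$, then $\ContVal{1}{\State{w}}\le(19+45\lambda)/64$, which gives a gap $(2-45\lambda)/64$ in favour of action $0$, so $x[w]=0$. If some input decodes to $1$, I use \eqref{eq:read-high-lambda} for that input and only nonnegativity of value for the other. This gives $\ContVal{1}{\State{w}}\ge(45-45\lambda)/128$, hence a gap $(3-45\lambda)/128$ in favour of action $1$, so $x[w]=1$.

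\textbf{PURIFY.} The sink values are $89/128$ at $\State{v_0}$ and $39/128$ at $\State{v_1}$, and action $1$ leads to $\State{u}$ in both cases. For the consistency requirement, suppose $x[u]=1$. Then $\Value{\Controller}{\Policy}{\State{u}}\ge(90-90\lambda)/128$. This beats $89/128$ by $(1-90\lambda)/128$ and beats $39/128$ by even more, so both outputs decode to $1$. If $x[u]=0$, then $\Value{\Controller}{\Policy}{\State{u}}\le(38+90\lambda)/128$. This is below $39/128$ by $(1-90\lambda)/128$ and well below $89/128$, so both outputs decode to $0$.

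For the requirement that some output is Boolean, I use the intermediate bounds. If $\RewMarg{u}\le2/3$, \eqref{eq:read-two-thirds-lambda} gives $\ContVal{0}{\State{v_0}}-\ContVal{1}{\State{v_0}}\ge(1-40\lambda)/128$, so $x[v_0]=0$. If $\RewMarg{u}\ge1/3$, \eqref{eq:read-one-third-lambda} gives a gap $(1-40\lambda)/128$ at $\State{v_1}$ in favour of action $1$, so $x[v_1]=1$. Since every $\RewMarg{u}$ satisfies one of these two conditions, at least one output is Boolean.

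The only delicate point is bookkeeping of the constants. Every gap obtained must be at least $\GateGap=1/256$, and the binding case is $(1-90\lambda)/128\ge1/256$, which is exactly $\lambda<1/180$. The remaining gaps $(13-45\lambda)/64$, $(2-45\lambda)/64$, $(3-45\lambda)/128$ and $(1-40\lambda)/128$ all clearly exceed $1/256$ in this range. I also need to check that \hyperref[copy]{\gadgetC} and \hyperref[controller-forcing]{\gadgetF} are only ever invoked at output states of the gate, which is what the hypothesis provides. The input states enter only through Lemma~\ref{lem:controller-value}, which holds unconditionally.
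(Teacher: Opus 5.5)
Your proposal is correct and follows the paper's proof essentially step for step. It uses the same chain from a decoded input, through Lemma~\ref{lem:controller-value}, to a continuation gap of at least $\GateGap$, then \gadgetF and \gadgetC at the output state, with identical gap constants and the same binding constraint $(1-90\lambda)/128>1/256$; the only cosmetic difference is that for \PURIFY{} consistency you compare the Boolean-input value bound directly against both sinks, whereas the paper reuses the $\RewMarg{u}\le2/3$ and $\RewMarg{u}\ge1/3$ bounds for the other output.
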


\begin{proof}
We verify the three gate types as follows:

\paragraph{\NOT{}.}
Let $g=\NOT(u,v)$.  The transition rules give
\[
\ContVal{0}{\State{v}}=\Value{\Controller}{\Policy}{\State{u}},
\qquad
\ContVal{1}{\State{v}}=\Value{\Controller}{\Policy}{\Sink{1/2}}=\frac12.
\]
We examine the following cases:
\begin{itemize}
\item 
If $x[u]=0$, then $\RewMarg{u}\le1/4$.  By \eqref{eq:read-low-lambda}, we have
\[
\ContVal{1}{\State{v}}-\ContVal{0}{\State{v}}
\ge
\frac12-\left(\frac{19}{64}+\frac{45\lambda}{64}\right)
=
\frac{13-45\lambda}{64}.
\]
Since $\lambda<1/180$, this is strictly larger than $1/256=\GateGap$.  By condition~\hyperref[controller-forcing]{\gadgetF}, we have $\CtrlMarg{v}\ge15/16$, and then condition~\hyperref[copy]{\gadgetC} gives $\RewMarg{v}>3/4$.  Hence $x[v]=1$.
\item
If $x[u]=1$, then $\RewMarg{u}\ge3/4$.  By \eqref{eq:read-high-lambda}, we have 
\[
\ContVal{0}{\State{v}}-\ContVal{1}{\State{v}}
\ge
\left(\frac{45}{64}-\frac{45\lambda}{64}\right)-\frac12
=
\frac{13-45\lambda}{64}
>
\GateGap.
\]
Condition~\hyperref[controller-forcing]{\gadgetF} gives $\CtrlMarg{v}\le1/16$, and condition~\hyperref[copy]{\gadgetC} gives $\RewMarg{v}<1/4$.  Hence $x[v]=0$.
\end{itemize}

\paragraph{\OR{}.}
Let $g=\OR(u_0,u_1,w)$.  Then, we have
\[
\ContVal{0}{\State{w}}=\frac{21}{64},
\qquad
\ContVal{1}{\State{w}}
=\frac12\Value{\Controller}{\Policy}{\State{u_0}}
+\frac12\Value{\Controller}{\Policy}{\State{u_1}}.
\]
We examine the following cases:
\begin{itemize}
\item 
Suppose $x[u_0]=x[u_1]=0$.  Then both signal marginals are at most $1/4$, and by \eqref{eq:read-low-lambda}, both corresponding controller values are at most $19/64+45\lambda/64$.  Therefore
\[
\ContVal{0}{\State{w}}-\ContVal{1}{\State{w}}
\ge
\frac{21}{64}-\left(\frac{19}{64}+\frac{45\lambda}{64}\right)
=
\frac{2-45\lambda}{64}
>
\GateGap.
\]
Condition~\hyperref[controller-forcing]{\gadgetF} gives $\CtrlMarg{w}\le1/16$, and then condition~\hyperref[copy]{\gadgetC} gives $\RewMarg{w}<1/4$, so $x[w]=0$.

\item
Suppose at least one input is $1$, say $x[u_0]=1$.  Then $\RewMarg{u_0}\ge3/4$, so \eqref{eq:read-high-lambda} gives $\Value{\Controller}{\Policy}{\State{u_0}}\ge45/64-45\lambda/64$.  Since values are nonnegative,
\[
\ContVal{1}{\State{w}}-\ContVal{0}{\State{w}}
\ge
\frac12\left(\frac{45}{64}-\frac{45\lambda}{64}\right)-\frac{21}{64}
=
\frac{3-45\lambda}{128}
>
\GateGap.
\]
Condition~\hyperref[controller-forcing]{\gadgetF} gives $\CtrlMarg{w}\ge15/16$, and condition~\hyperref[copy]{\gadgetC} gives $\RewMarg{w}>3/4$.  Hence $x[w]=1$.
\end{itemize}

\paragraph{\PURIFY{}.}
Let $g=\PURIFY(u,v_0,v_1)$.  For the first output state, it holds that 
\[
\ContVal{0}{\State{v_0}}=\frac{89}{128},
\qquad
\ContVal{1}{\State{v_0}}=\Value{\Controller}{\Policy}{\State{u}}.
\]
We examine the following cases:

\begin{itemize}
\item 
If $\RewMarg{u}\le2/3$, then \eqref{eq:read-two-thirds-lambda} gives
\[
\ContVal{0}{\State{v_0}}-\ContVal{1}{\State{v_0}}
\ge
\frac{89}{128}-\left(\frac{88}{128}+\frac{40\lambda}{128}\right)
=
\frac{1-40\lambda}{128}
>
\GateGap.
\]
Thus by condition~\hyperref[controller-forcing]{\gadgetF} we get $\CtrlMarg{v_0}\le1/16$. Then using condition~\hyperref[copy]{\gadgetC} we get $\RewMarg{v_0}<1/4$, and therefore $x[v_0]=0$.
\item
If $\RewMarg{u}\ge3/4$, then
\[
\ContVal{1}{\State{v_0}}-\ContVal{0}{\State{v_0}}
\ge
\left(\frac{90}{128}-\frac{90\lambda}{128}\right)-\frac{89}{128}
=
\frac{1-90\lambda}{128}
>
\GateGap,
\]  
Hence by condition~\hyperref[controller-forcing]{\gadgetF} we get $\CtrlMarg{v_0}\ge15/16$. Then by condition~\hyperref[copy]{\gadgetC} we get $\RewMarg{v_0}>3/4$, and $x[v_0]=1$.
\end{itemize}
Similarly, for the second output state, it holds that
\[
\ContVal{0}{\State{v_1}}=\frac{39}{128},
\qquad
\ContVal{1}{\State{v_1}}=\Value{\Controller}{\Policy}{\State{u}}.
\]
We examine the following cases:

\begin{itemize}
\item
If $\RewMarg{u}\le1/4$, then \eqref{eq:read-low-lambda} gives
\[
\ContVal{0}{\State{v_1}}-\ContVal{1}{\State{v_1}}
\ge
\frac{39}{128}-\left(\frac{38}{128}+\frac{90\lambda}{128}\right)
=
\frac{1-90\lambda}{128}
>
\GateGap.
\]
Thus $x[v_1]=0$.  
\item
If $\RewMarg{u}\ge1/3$, then \eqref{eq:read-one-third-lambda} gives
\[
\ContVal{1}{\State{v_1}}-\ContVal{0}{\State{v_1}}
\ge
\left(\frac{40}{128}-\frac{40\lambda}{128}\right)-\frac{39}{128}
=
\frac{1-40\lambda}{128}
>
\GateGap.
\]
Thus $x[v_1]=1$.
\end{itemize}

It remains to check the \PURIFY{} semantics.  If $x[u]=0$, then $\RewMarg{u}\le1/4$, so in particular $\RewMarg{u}\le2/3$ and $\RewMarg{u}\le1/4$; the two implications above yield $x[v_0]=x[v_1]=0$.  If $x[u]=1$, then $\RewMarg{u}\ge3/4$, so in particular $\RewMarg{u}\ge3/4$ and $\RewMarg{u}\ge1/3$; hence $x[v_0]=x[v_1]=1$.  Finally, if $x[u]=\bot$, then $\RewMarg{u}\in(1/4,3/4)$.  Every real number in this interval satisfies at least one of $\RewMarg{u}\le2/3$ or $\RewMarg{u}\ge1/3$.  Therefore at least one of $x[v_0]$ and $x[v_1]$ is Boolean.  This is exactly the \PURIFY{} condition.
\end{proof}

\begin{proposition}
\label{prop:common-decoding}
Assume $0\le\lambda<1/180$.  If a stationary policy $\Policy$ of $G_\lambda(I)$ satisfies conditions~\hyperref[copy]{\gadgetC} and~\hyperref[controller-forcing]{\gadgetF} at every output state, then the decoded assignment \eqref{eq:decode} is a valid solution of the \PureCircuit{} instance $I$.
\end{proposition}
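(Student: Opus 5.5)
This is essentially a corollary of Proposition~\ref{prop:gate-soundness}; the plan is to apply it gate by gate. A \PureCircuit{} solution must satisfy every gate in $\GateSet$, and each gate's constraint mentions only the decoded values at its input and output nodes. So it suffices to show that for every $g\in\GateSet$ the assignment $x$ from \eqref{eq:decode} satisfies $g$.

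First I would check that $x$ is a well-defined element of $\BoolOrUnknown^{\NodeSet}$. For each node $u$, $\RewMarg{u}\in[0,1]$ falls into exactly one of the three cases $\RewMarg{u}\le\LowCutoff$, $\RewMarg{u}\ge\HighCutoff$, or strictly in between, because the cutoffs satisfy $\LowCutoff<\HighCutoff$. Nodes that are not outputs of any gate, including those whose states transition to $\Sink{0}$, are also decoded; their values are simply unconstrained except as inputs.

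Next, fix a gate $g\in\{\NOT,\OR,\PURIFY\}$. Its output states are output variable states, so by hypothesis conditions \hyperref[copy]{\gadgetC} and \hyperref[controller-forcing]{\gadgetF} hold there. Proposition~\ref{prop:gate-soundness} then gives that $x$ satisfies $g$.

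One point to verify is that the proof of Proposition~\ref{prop:gate-soundness} uses \hyperref[copy]{\gadgetC} only at output states. The gate's input states enter only through Lemma~\ref{lem:controller-value}, which holds for any stationary policy. The other point is that the continuation values $\ContVal{b}{\State{v}}$ computed in that proof match the actual transitions of $G_\lambda(I)$. Output uniqueness guarantees each output state receives exactly one transition rule, namely that of its own gate. Since every gate is satisfied, $x$ is a valid solution of $I$.

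I expect no real obstacle here. The only care needed is the bookkeeping just described: the hypothesis is quantified over exactly the states that Proposition~\ref{prop:gate-soundness} requires, and output uniqueness makes the transition structure at each output state the one assumed in that proposition.
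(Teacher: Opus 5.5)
Your proposal is correct and matches the paper's proof. The paper also notes that the decoding is well defined (one state and one signal marginal per node) and then applies Proposition~\ref{prop:gate-soundness} gate by gate; your extra bookkeeping (condition (C) is used only at output states, inputs enter only through Lemma~\ref{lem:controller-value}, and output uniqueness fixes each output state's transition rule) is accurate and makes explicit what the paper leaves implicit.
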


\begin{proof}
The decoded value $x[u]$ is globally well-defined because there is exactly one state $\State{u}$ and one signal marginal $\RewMarg{u}$ for each node $u$.  Consider any gate.  All its output states satisfy conditions~\hyperref[copy]{\gadgetC} and~\hyperref[controller-forcing]{\gadgetF} by assumption.  By Proposition \ref{prop:gate-soundness}, the decoded values on that gate satisfy the corresponding \NOT{}, \OR{}, or \PURIFY{} constraint.  Since this holds gate by gate, the assignment satisfies the whole instance.
\end{proof}

\section{Perfect Markov CCE hardness}
\label{sec:perfect}

In this section, we set $\lambda=0$.  The controller reward at every variable state is $a_{\RewardPlayer}$, so the controller's immediate reward is independent of its own action.

\subsection{Proving the single-controller gadget conditions}
Our goal is to prove the single-controller gadget conditions~\hyperref[copy]{\gadgetC} and~\hyperref[controller-forcing]{\gadgetF} using the fact that $\Policy$ is a
perfect Markov $\Approx$-CCE of the constructed two-player game.

\begin{lemma}[Copy gadget for perfect Markov CCE]
\label{lem:perfect-copy}
Let $\Policy$ be a perfect Markov $\Approx$-CCE of $G_0(I)$.  If $\Approx<7/128$, then for every variable state $\State{u}$, condition~\hyperref[copy]{\gadgetC} holds.
\end{lemma}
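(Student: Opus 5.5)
We argue by contradiction, using only the reward player's stationary coarse deviations and the perfect-CCE inequality at the single starting state $\State{u}$. Fix a variable state $\State{u}$ and suppose first that $\CtrlMarg{u}\ge\HighCtrlThreshold$ but $\RewMarg{u}\le\HighCutoff$. Consider the deterministic stationary deviation $\DeviationPolicy{\RewardPlayer}$ that plays $1$ at every variable state (the sinks have a single action). Since $\RewardPlayer$ does not affect transitions, the state process under $\DeviationProfile{\RewardPlayer}$ has the same law as under $\Policy$; in particular the first transition out of $\State{u}$ is unchanged. We compare the two values at $\State{u}$ via the Bellman decomposition
\[
\Value{\RewardPlayer}{\Policy}{\State{u}}=(1-\Discount)\MatchProb{u}+\Discount\,\FutureVal{u}^{\RewardPlayer},
\]
where $\FutureVal{u}^{\RewardPlayer}\in[0,1]$ is the expected next-state value of the reward player, and the analogous identity for the deviated policy.

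\emph{Immediate term.} Under the deviation, $\RewardPlayer$ plays $1$ independently of $a_{\Controller}$, so its expected stage reward at $\State{u}$ equals $\CtrlMarg{u}\ge 15/16$. Under $\Policy$ we have $\MatchProb{u}=\Pr[a_{\Controller}=a_{\RewardPlayer}]\le \Pr[a_{\Controller}=0]+\Pr[a_{\RewardPlayer}=1]\le (1-\CtrlMarg{u})+\RewMarg{u}\le \tfrac1{16}+\tfrac34=\tfrac{13}{16}$. The immediate gain is therefore at least $(1-\Discount)\cdot\tfrac{2}{16}=\tfrac{15}{16}\cdot\tfrac18=\tfrac{15}{128}$.

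\emph{Continuation term.} Both continuation values lie in $[0,1]$, so the loss is at most $\Discount=1/16=8/128$. Hence $\Value{\RewardPlayer}{\DeviationProfile{\RewardPlayer}}{\State{u}}-\Value{\RewardPlayer}{\Policy}{\State{u}}\ge \tfrac{15}{128}-\tfrac{8}{128}=\tfrac{7}{128}>\Approx$, contradicting the perfect $\Approx$-CCE property at $s=\State{u}$.

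The low case is symmetric. If $\CtrlMarg{u}\le\FixedDiscount$ and $\RewMarg{u}\ge\LowCutoff$, deviate to the constant action $0$. This yields stage reward $1-\CtrlMarg{u}\ge15/16$, while $\MatchProb{u}\le \CtrlMarg{u}+(1-\RewMarg{u})\le 13/16$, and the same arithmetic applies.

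The main care point is the bound on $\MatchProb{u}$ under a correlated $\Policy$. A product formula cannot be used there, so the union bound above is what is needed. With it, the constants come out exactly at the threshold $7/128$, with no slack to spare, and the continuation loss needs only the crude bound $\Discount$; we do not invoke Lemma \ref{lem:performance-difference-deviation-policy}. The strict inequalities $\RewMarg{u}>\HighCutoff$ and $\RewMarg{u}<\LowCutoff$ in \hyperref[copy]{\gadgetC} follow because the contradiction assumption allows equality.
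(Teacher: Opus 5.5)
Your proof is correct and follows the paper's argument. The reward player deviates to the constant action at $s_u$, the immediate gain is at least $(1-\gamma)\cdot\tfrac18$, and the continuation loss is crudely bounded by $\gamma$, giving exactly $7/128$. Your union bound $M_u\le(1-q_u)+p_u$ is the same estimate the paper obtains from the joint entries via $P_{10}\ge q_u-p_u$ and $P_{00}\le 1-q_u$. Deviating at all variable states rather than only at $s_u$ changes nothing.
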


\begin{proof}
Fix $u$ and write
\[
P_{ij}:=\Pr_{\Policy}[a_{\Controller}=i,\ a_{\RewardPlayer}=j\mid \State{u}],
\qquad i,j\in\Bits.
\]
Then, using \eqref{eq:M-def} we can write the controller's marginal, the reward-player's marginal, and the matching probability as follows:
\[
\CtrlMarg{u}=P_{10}+P_{11},
\qquad
\RewMarg{u}=P_{01}+P_{11},
\qquad
\MatchProb{u}=P_{00}+P_{11}.
\]

\begin{itemize}
\item
Suppose first that $\CtrlMarg{u}\ge15/16$ but $\RewMarg{u}\le3/4$.  Then, we have
\[
P_{10}-P_{01}=\CtrlMarg{u}-\RewMarg{u}\ge \frac{15}{16}-\frac34=\frac3{16},
\]
so $P_{10}\ge3/16$.  Also $P_{00}\le P_{00}+P_{01}=1-\CtrlMarg{u}\le1/16$.  Hence
\begin{equation}
\CtrlMarg{u}-\MatchProb{u}=P_{10}-P_{00}\ge\frac3{16}-\frac1{16}=\frac18.
\nonumber
\end{equation}

Consider the stationary deviation $\Policy'_{\RewardPlayer}$ of player $\RewardPlayer$ that plays action $1$ at $\State{u}$ and arbitrary fixed actions at all other states.
At $\State{u}$, the expected immediate reward of $\RewardPlayer$ under $\Policy$ is
$\MatchProb{u}$, while the expected immediate reward under
$\Policy_{\Controller}\times\Policy'_{\RewardPlayer}$ is $\CtrlMarg{u}$.
Since $\RewardPlayer$ does not affect transitions and all continuation values lie in
$[0,1]$, the gain of this deviation from the initial state $\State{u}$ satisfies
\[
V_{\RewardPlayer}^{\Policy_{\Controller}\times\Policy'_{\RewardPlayer}}(\State{u})
-
V_{\RewardPlayer}^{\Policy}(\State{u})
\ge
(1-\Discount)\bigl(\CtrlMarg{u}-\MatchProb{u}\bigr)-\Discount
\ge
(1-\Discount)\frac18-\Discount
=
\frac7{128}.
\]
contradicting perfect Markov $\Approx$-CCE when $\Approx<7/128$.  Therefore $\RewMarg{u}>3/4$.

\item 
The other case is symmetric.  Suppose $\CtrlMarg{u}\le1/16$ but $\RewMarg{u}\ge1/4$.
Since $P_{11}\le\CtrlMarg{u}\le1/16$,
\[
P_{01}
=
\RewMarg{u}-P_{11}
\ge
\frac14-\frac1{16}
=
\frac3{16}.
\]
Also since $P_{11}\le1/16$, we have
\begin{equation}
(1-\CtrlMarg{u})-\MatchProb{u}
=
P_{01}-P_{11}
\ge
\frac3{16}-\frac1{16}
=
\frac18. 
\nonumber
\end{equation}

Consider the stationary deviation $\Policy'_{\RewardPlayer}$ of player $\RewardPlayer$
that plays action $0$ at $\State{u}$ and arbitrary fixed actions at all other states.
At $\State{u}$, the immediate reward of $\RewardPlayer$ under $\Policy$ is
$\MatchProb{u}$, while the immediate reward under
$\Policy_{\Controller}\times\Policy'_{\RewardPlayer}$ is $1-\CtrlMarg{u}$.
Since $\RewardPlayer$ does not affect transitions and all continuation values lie in
$[0,1]$, the gain of this deviation from the initial state $\State{u}$ satisfies
\[
V_{\RewardPlayer}^{\Policy_{\Controller}\times\Policy'_{\RewardPlayer}}(\State{u})
-
V_{\RewardPlayer}^{\Policy}(\State{u})
\ge
(1-\Discount)\bigl((1-\CtrlMarg{u})-\MatchProb{u}\bigr)-\Discount
\ge
(1-\Discount)\frac18-\Discount
=
\frac7{128}.
\]
This again contradicts perfect Markov $\Approx$-CCE when $\Approx<7/128$.
Hence $\RewMarg{u}<1/4$.
\end{itemize}

\end{proof}

Next, we bound the best controller's best local advantage \eqref{eq: best_advantage}, as follows:

\begin{lemma}[Best local advantage under perfect Markov CCE]
\label{lem:best-adv-perfect}
If $\Policy$ is a perfect Markov $\Approx$-CCE of $G_0(I)$, then for every variable state $s$, $\ControllerBestGap{s}\le \Approx$.
\end{lemma}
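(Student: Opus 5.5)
Fix a variable state $s$ and the maximizing action $b\in\Bits$, so $\ControllerBestGap{s}=\ActionAdvantage{\Controller}{\Policy}{s}{b}$. If this advantage is nonpositive there is nothing to prove, so assume it is positive. The plan is to exhibit a stationary controller deviation $\DeviationPolicy{\Controller}$ whose gain from $s$ is at least $\ControllerBestGap{s}$, and then invoke the perfect Markov $\Approx$-CCE condition at the initial state $s$.

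We take $\DeviationPolicy{\Controller}$ to play $b$ deterministically at $s$ and to copy the controller's marginal $\Policy_{\Controller}(\cdot\mid s')$ at every other state $s'\neq s$.

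The first step is to check what the deviated profile looks like off $s$. By \eqref{eq:deviation-product-policy}, at $s'\ne s$ the profile $\DeviationProfile{\Controller}$ samples $a_{\Controller}$ and $a_{\RewardPlayer}$ independently from their $\Policy$-marginals, so it need not coincide with $\Policy$, and this is the main obstacle. Two facts in the $\lambda=0$ construction resolve it.
\begin{itemize}
\item The controller reward $\RewardSymbol_{\Controller}^{0}=a_{\RewardPlayer}$ depends only on $a_{\RewardPlayer}$, whose marginal is unchanged. The sinks have dummy actions, so there is nothing to change there.
\item Transitions depend only on $a_{\Controller}$, whose marginal is also unchanged.
\end{itemize}
Hence at each $s'\ne s$ the expected controller reward and the transition law agree with those under $\Policy$. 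In other words, the controller's Bellman operator for $\DeviationProfile{\Controller}$ coincides with that for $\Policy$ at every state other than $s$.

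Next we apply Lemma~\ref{lem:performance-difference-deviation-policy} starting from $\State{0}=s$. At every $s'\neq s$, the local advantage $\CoarseAdvantage{\Controller}{\Policy}{\DeviationPolicy{\Controller}}{s'}$ equals the $\Policy_{\Controller}\times\Policy_{\RewardPlayer}$-average of $\QValue{\Controller}{\Policy}{s',\cdot}$ minus $\Value{\Controller}{\Policy}{s'}$. By the previous step this average equals $\Value{\Controller}{\Policy}{s'}$, so the advantage is $0$. At $s$ itself, the advantage is exactly $\AvgQValue{\Controller}{\Policy}{s}{b}-\Value{\Controller}{\Policy}{s}=\ControllerBestGap{s}$, by \eqref{eq:barQ-def} and \eqref{eq: adv_b}. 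Therefore
\[
\Value{\Controller}{\DeviationProfile{\Controller}}{s}-\Value{\Controller}{\Policy}{s}
=\frac{1}{1-\Discount}\VisitDist{s}{\DeviationProfile{\Controller}}{s}\,\ControllerBestGap{s}.
\]

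Finally we bound the visitation weight. Since $\VisitDist{s}{\DeviationProfile{\Controller}}{s}\ge 1-\Discount$ by \eqref{eq:visit-lower}, and $\ControllerBestGap{s}>0$ by assumption, the right-hand side is at least $\ControllerBestGap{s}$. The perfect $\Approx$-CCE condition bounds the left-hand side by $\Approx$, which gives $\ControllerBestGap{s}\le\Approx$.
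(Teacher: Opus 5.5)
Your proof is correct and is essentially the paper's argument: both apply the performance-difference lemma from $s$, use $\VisitDist{s}{\DeviationProfile{\Controller}}{s}\ge 1-\Discount$, and rely on the $\lambda=0$ identity $\Value{\Controller}{\Policy}{s'}=(1-\CtrlMarg{s'})\AvgQValue{\Controller}{\Policy}{s'}{0}+\CtrlMarg{s'}\AvgQValue{\Controller}{\Policy}{s'}{1}$ to control the other states. The only difference is the choice of deviation. The paper plays the best action at every state, so every local advantage is nonnegative. You deviate only at $s$ and copy the controller marginal elsewhere, so the other local advantages vanish exactly; this is the same device the paper later uses off $\Force$ in Lemma~\ref{lem:average-forcing}.
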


\begin{proof}
At each state $s$, we define the controller deviating action $b^\star(s)\in\argmax_b\ActionAdvantage{\Controller}{\Policy}{s}{b}$; at sink states choose the dummy action.  Let $\DeviationPolicy{\Controller}$ be the deterministic stationary controller deviation that plays $b^\star(s)$ at every state.  At every variable state, the local advantage of this deviation is exactly $\ControllerBestGap{s}$ (see \eqref{eq: best_advantage}).

When $\lambda=0$, the controller's value at a variable state is a convex combination of its two averaged action values:
\[
\Value{\Controller}{\Policy}{s}
=(1-\CtrlMarg{s})\AvgQValue{\Controller}{\Policy}{s}{0}
+\CtrlMarg{s}\AvgQValue{\Controller}{\Policy}{s}{1}.
\]
Thus $\ControllerBestGap{s}\ge0$ for every variable state, and it is also nonnegative at sinks.  Applying Lemma \ref{lem:performance-difference-deviation-policy} from initial state $s$ gives
\[
\Value{\Controller}
{\DeviationPolicy{\Controller}\times\Policy_{\RewardPlayer}}{s}-\Value{\Controller}{\Policy}{s}
=
\frac1{1-\Discount}\sum_{s'\in\StateSpace}\VisitDist{s}{\DeviationPolicy{\Controller}\times\Policy_{\RewardPlayer}}{s'}\ControllerBestGap{s'}.
\]
All summands are nonnegative, and $\VisitDist{s}{\DeviationPolicy{\Controller}\times\Policy_{\RewardPlayer}}{s}\ge1-\Discount$.  Therefore the gain is at least $\ControllerBestGap{s}$.  Perfect Markov $\Approx$-CCE forbids gain larger than $\Approx$ from state $s$, so $\ControllerBestGap{s}\le\Approx$.
\end{proof}

\begin{lemma}[Controller forcing for perfect Markov CCE]
\label{lem:perfect-force}
Let $\Policy$ be a perfect Markov $\Approx$-CCE of $G_0(I)$.  If $\Approx<\Discount\GateGap/16$, then every output state $\State{v}$ satisfies condition~\hyperref[controller-forcing]{\gadgetF}.
\end{lemma}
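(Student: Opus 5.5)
We combine Lemma~\ref{lem:best-adv-perfect} with an explicit formula for the controller's action advantages at an output state. With $\lambda=0$, the controller's immediate reward at $\State{v}$ is $a_{\RewardPlayer}$ and does not depend on $a_{\Controller}$. Transitions depend only on $a_{\Controller}$. Hence, by \eqref{eq:qvalue} and \eqref{eq:barQ-def},
\[
\AvgQValue{\Controller}{\Policy}{\State{v}}{b}=(1-\Discount)\RewMarg{v}+\Discount\,\ContVal{b}{\State{v}},\qquad b\in\Bits.
\]
The first step is to confirm that the averaging in \eqref{eq:barQ-def} is against the reward player's marginal, so the immediate term is exactly $(1-\Discount)\RewMarg{v}$ for both $b$. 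This follows because $\RewardSymbol_{\Controller}^0=a_{\RewardPlayer}$ and the expectation is taken over $a_{\RewardPlayer}$ only.

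The second step uses the convex-combination identity from the proof of Lemma~\ref{lem:best-adv-perfect}, $\Value{\Controller}{\Policy}{\State{v}}=(1-\CtrlMarg{v})\AvgQValue{\Controller}{\Policy}{\State{v}}{0}+\CtrlMarg{v}\AvgQValue{\Controller}{\Policy}{\State{v}}{1}$. Subtracting, we get
\[
\ActionAdvantage{\Controller}{\Policy}{\State{v}}{1}=(1-\CtrlMarg{v})\,\Discount\bigl(\ContVal{1}{\State{v}}-\ContVal{0}{\State{v}}\bigr),
\qquad
\ActionAdvantage{\Controller}{\Policy}{\State{v}}{0}=\CtrlMarg{v}\,\Discount\bigl(\ContVal{0}{\State{v}}-\ContVal{1}{\State{v}}\bigr).
\]
Lemma~\ref{lem:best-adv-perfect} gives $\ControllerBestGap{\State{v}}\le\Approx$, so both advantages are at most $\Approx$.

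The two implications of \hyperref[controller-forcing]{\gadgetF} now follow. Suppose $\ContVal{1}{\State{v}}-\ContVal{0}{\State{v}}\ge\GateGap$. Then $(1-\CtrlMarg{v})\Discount\GateGap\le\Approx<\Discount\GateGap/16$, so $1-\CtrlMarg{v}<1/16$, i.e.\ $\CtrlMarg{v}>15/16$. Symmetrically, if $\ContVal{0}{\State{v}}-\ContVal{1}{\State{v}}\ge\GateGap$, then $\CtrlMarg{v}\Discount\GateGap\le\Approx<\Discount\GateGap/16$, so $\CtrlMarg{v}<1/16$. Both conclusions are strict, so they are stronger than what \hyperref[controller-forcing]{\gadgetF} requires.

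There is no real obstacle here. The only point needing care is the first step: checking that the expected immediate reward cancels between the two actions, which relies on $\lambda=0$. For $\lambda>0$ this cancellation fails.
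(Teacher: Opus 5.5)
Your proposal is correct and follows the paper's own proof essentially step for step. Like the paper, you write $\AvgQValue{\Controller}{\Policy}{\State{v}}{b}=(1-\Discount)\RewMarg{v}+\Discount\ContVal{b}{\State{v}}$, use the convex-combination identity to express the two action advantages as $(1-\CtrlMarg{v})\Discount(\ContVal{1}{\State{v}}-\ContVal{0}{\State{v}})$ and $\CtrlMarg{v}\Discount(\ContVal{0}{\State{v}}-\ContVal{1}{\State{v}})$, and bound these by Lemma~\ref{lem:best-adv-perfect}; the only cosmetic difference is that you argue directly where the paper argues by contradiction.
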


\begin{proof}
For $\lambda=0$, the controller's immediate reward is independent of the controller's action.  Hence, for $b\in\Bits$, we have
\begin{equation}
\AvgQValue{\Controller}{\Policy}{\State{v}}{b}
=(1-\Discount)\RewMarg{v}+\Discount\ContVal{b}{\State{v}}.
\label{eq:barQ-perfect}
\end{equation}
Also, we have 
\[
\Value{\Controller}{\Policy}{\State{v}}
=(1-\CtrlMarg{v})\AvgQValue{\Controller}{\Policy}{\State{v}}{0}
+\CtrlMarg{v}\AvgQValue{\Controller}{\Policy}{\State{v}}{1}.
\]
Suppose $\ContVal{1}{\State{v}}-\ContVal{0}{\State{v}}\ge\GateGap$.  By \eqref{eq:barQ-perfect}, we get
\[
\AvgQValue{\Controller}{\Policy}{\State{v}}{1}
-
\AvgQValue{\Controller}{\Policy}{\State{v}}{0}
=\Discount(\ContVal{1}{\State{v}}-\ContVal{0}{\State{v}})
\ge\Discount\GateGap.
\]
Therefore, we obtain
\begin{align}
\ActionAdvantage{\Controller}{\Policy}{\State{v}}{1}
&=
\AvgQValue{\Controller}{\Policy}{\State{v}}{1}
-
\Value{\Controller}{\Policy}{\State{v}}
\nonumber\\
&=
\AvgQValue{\Controller}{\Policy}{\State{v}}{1}
-
\Bigl(
\CtrlMarg{v}\AvgQValue{\Controller}{\Policy}{\State{v}}{1}
+
(1-\CtrlMarg{v})\AvgQValue{\Controller}{\Policy}{\State{v}}{0}
\Bigr)
\nonumber\\
&=
(1-\CtrlMarg{v})
\left(
\AvgQValue{\Controller}{\Policy}{\State{v}}{1}
-
\AvgQValue{\Controller}{\Policy}{\State{v}}{0}
\right)
\nonumber\\
&\ge
(1-\CtrlMarg{v})\Discount\GateGap.
\end{align}
If $\CtrlMarg{v}<15/16$, then this advantage is strictly larger than $\Discount\GateGap/16$, contradicting Lemma \ref{lem:best-adv-perfect} for $\Approx<\Discount\GateGap/16$. Thus $\CtrlMarg{v}\ge15/16$.

The other direction is identical.  If $\ContVal{0}{\State{v}}-\ContVal{1}{\State{v}}\ge\GateGap$, then the advantage of action $0$ is at least $\CtrlMarg{v}\Discount\GateGap$.  If $\CtrlMarg{v}>1/16$, this is larger than $\Discount\GateGap/16$, again contradicting Lemma \ref{lem:best-adv-perfect}.  Hence $\CtrlMarg{v}\le1/16$.
\end{proof}

\subsection{Perfect Markov CCE is \PPAD-hard}

Putting everything together, we prove the following main result:

\begin{statementbox}
\begin{theorem}[Perfect Markov $\Approx$-CCE hardness]
\label{thm:perfect}
There exists a universal constant $\varepsilon_{\mathrm 0}>0$ such that computing a perfect Markov $\varepsilon$-CCE of a discounted two-player single-controller Markov game is \PPAD-hard for every $\varepsilon<\varepsilon_{\mathrm 0}$.  The hardness holds for $\Discount=1/16$, rewards in $[0,1]$, and binary actions at all states.
\end{theorem}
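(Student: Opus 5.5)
The proof is a polynomial-time reduction from \PureCircuit{}, which is \PPAD-complete, to the problem of finding a perfect Markov $\Approx$-CCE of the game $G_0(I)$ built in Section~\ref{sec:construction}.

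I will set $\varepsilon_0:=\min\{7/128,\ \Discount\GateGap/16\}$. With $\Discount=1/16$ and $\GateGap=1/256$, this gives $\varepsilon_0=1/65536$.

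\textbf{Size and format of the instance.} Given a \PureCircuit{} instance $I=(\NodeSet,\GateSet)$, construct $G_0(I)$ and check that it matches the theorem's format.
\begin{itemize}
\item It has $|\NodeSet|$ variable states and a constant number of sinks $\Sink{\theta}$, with $\theta\in\{0,1/2,21/64,89/128,39/128\}$.
\item Every variable state gives each player the action set $\Bits$, and every sink gives each player a single dummy action. So there are at most binary actions at all states.
\item All rewards lie in $[0,1]$ and all transition probabilities are in $\{0,1/2,1\}$, so the instance has polynomial bit size.
\item Only $a_{\Controller}$ enters the transition kernel, so the game is single-controller. The discount is $\Discount=1/16$.
\end{itemize}

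\textbf{Correctness.} Let $\Policy$ be any perfect Markov $\varepsilon$-CCE of $G_0(I)$ with $\varepsilon<\varepsilon_0$.
\begin{itemize}
\item Since $\varepsilon<7/128$, Lemma~\ref{lem:perfect-copy} gives condition~\hyperref[copy]{\gadgetC} at every variable state.
\item Since $\varepsilon<\Discount\GateGap/16$, Lemma~\ref{lem:perfect-force} gives condition~\hyperref[controller-forcing]{\gadgetF} at every output state.
\item Since $\lambda=0<1/180$, Proposition~\ref{prop:common-decoding} applies. It shows that the assignment decoded by \eqref{eq:decode} solves $I$.
\end{itemize}

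\textbf{Efficient decoding.} The decoding map reads each marginal $\RewMarg{u}$ from $\Policy(\cdot\mid\State{u})$ and compares it with the cutoffs $1/4$ and $3/4$. This is clearly polynomial time. Together with the efficient construction, this is a polynomial-time many-one reduction, so the problem is \PPAD-hard for every $\varepsilon<\varepsilon_0$.

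\textbf{Main subtlety.} The delicate point is making sure the gadget conditions really hold at all states under perfect equilibrium, including in the subcase of nodes that are not outputs of any gate.
\begin{itemize}
\item Condition~\hyperref[copy]{\gadgetC} is a statement about every variable state. The per-state deviation argument of Lemma~\ref{lem:perfect-copy} uses the perfect CCE condition from initial state $\State{u}$ itself. This is exactly why perfectness is needed and why no visitation-probability argument is required.
\item Condition~\hyperref[controller-forcing]{\gadgetF} is only needed at output states. A node that is not the output of any gate is unconstrained by every gate, so its decoded value is free.
\end{itemize}
Beyond these checks, the proof is a direct assembly of the lemmas already established.
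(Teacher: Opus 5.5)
Your proposal is correct and follows the paper's proof essentially step for step. It uses the same threshold $\varepsilon_0=\Discount\GateGap/16=1/65536$ (your minimum with $7/128$ is harmless but redundant), Lemma~\ref{lem:perfect-copy} for the copy condition, Lemma~\ref{lem:perfect-force} for forcing at output states, and Proposition~\ref{prop:common-decoding} to decode a valid \PureCircuit{} solution from $G_0(I)$; your extra format and decoding-efficiency checks are routine additions rather than a different route.
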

\end{statementbox}

\begin{proof}
Set $\varepsilon_{\mathrm 0}:=\frac{\Discount\GateGap}{16}=\frac1{65536}$.
Given a \PureCircuit{} instance $I$, construct $G_0(I)$.  The construction has one variable state per node and constantly many sinks; all transition probabilities and rewards are rational and can be written with polynomially many bits.

Suppose a polynomial-time algorithm outputs a perfect Markov $\varepsilon$-CCE $\Policy$ for some $\varepsilon<\varepsilon_{\mathrm 0}$.  Since $\varepsilon<7/128$, Lemma \ref{lem:perfect-copy} implies condition~\hyperref[copy]{\gadgetC} at every output state.  Since $\varepsilon<\Discount\GateGap/16$, Lemma \ref{lem:perfect-force} implies condition~\hyperref[controller-forcing]{\gadgetF} at every output state.  By Proposition \ref{prop:common-decoding}, the decoded assignment solves $I$.  This gives a polynomial-time reduction from \PureCircuit{} to computing a perfect Markov $\varepsilon$-CCE.  Since \PureCircuit{} is \PPAD-hard, the theorem follows.
\end{proof}

\section{Non-perfect Markov CCE hardness}
\label{sec:non-perfect}

We now set $0<\lambda<1/180$ and again use the game $G_\lambda(I)$ defined in Section \ref{sec:construction}.  
Let $\OutputStates$ be the set of output variable states.  
We define the initial distribution $\InitialDist$ as the uniform distribution over $\OutputStates$.
Since every \NOT{} and \OR{} gate has one output and every \PURIFY{} gate has two outputs,
\begin{equation}
|\OutputStates|\le2|\Gates|.
\label{eq:output-count}
\end{equation}
We define the following constant quantities:
\begin{equation}
c_{\mathrm{copy}}:=\frac7{32},
\qquad
c_{\mathrm{force}}:=\frac{\Discount\GateGap}{16}=\frac1{65536}.
\label{eq:copy-force-constants}
\end{equation}
Let $\eta>0$ and choose $\lambda$ so that
\begin{equation}
0<\lambda<\frac1{180},
\qquad
\lambda\le\frac{c_{\mathrm{force}}}{2(1-\Discount)},
\qquad
\lambda\le\frac{\eta c_{\mathrm{force}}}{4}.
\label{eq:lambda-choice}
\end{equation}
For the constant-accuracy result under \Cref{conj:pcp-ppad}, we choose
\begin{equation}
\varepsilon'
<
\min\left\{
\frac{(1-\Discount)\eta c_{\mathrm{copy}}}{1+\lambda^{-1}},
\frac{\eta c_{\mathrm{force}}}{4}
\right\}.
\label{eq:rhoNP-choice}
\end{equation}
The parameters $\eta$, $\lambda$, and $\varepsilon'$ will be instantiated in the two hardness proofs below.

\subsection{Average copy}

For a variable state $\State{u}$, define the product-marginal matching probability
\begin{equation}
\ProdMatch{u}:=\CtrlMarg{u}\RewMarg{u}+(1-\CtrlMarg{u})(1-\RewMarg{u})
\label{eq:Mprod-def}
\end{equation}
and the best fixed matching payoff
\[
B(q):=\max\{q,1-q\}.
\]

\begin{lemma}[Decorrelation identity]
\label{lem:decorrelation-identity}
Let $\Policy$ be any stationary joint policy of $G_\lambda(I)$.  Let $\Policy_{\RewardPlayer}^{\mathrm{br}}$ be the stationary reward-player deviation that, at each variable state $\State{u}$, plays action $1$ if $\CtrlMarg{u}\ge1/2$ and plays action $0$ otherwise.  Let $\Policy_{\Controller}^{\mathrm{dec}}$ be the stationary controller deviation that, at each variable state $\State{u}$, plays action $1$ with probability $\CtrlMarg{u}$ and action $0$ with probability $1-\CtrlMarg{u}$ independently of the reward player's action.  At sinks the deviations play uniformly.  Define
\[
\Delta_{\RewardPlayer}:=\Value{\RewardPlayer}{\Policy_{\Controller}\times\Policy_{\RewardPlayer}^{\mathrm{br}}}{\InitialDist}-\Value{\RewardPlayer}{\Policy}{\InitialDist}
\quad \text{and} \quad
\Delta_{\Controller}^{\mathrm{dec}}:=\Value{\Controller}{\Policy_{\Controller}^{\mathrm{dec}}\times\Policy_{\RewardPlayer}}{\InitialDist}-\Value{\Controller}{\Policy}{\InitialDist}.
\]
Then
\begin{equation}
\Delta_{\RewardPlayer}+\lambda^{-1}\Delta_{\Controller}^{\mathrm{dec}}
=
\sum_{u\in\NodeSet}\VisitDist{\InitialDist}{\Policy}{\State{u}}
\left(B(\CtrlMarg{u})-\ProdMatch{u}\right).
\label{eq:decorrelation-identity}
\end{equation}
Moreover, every summand on the right-hand side of \eqref{eq:decorrelation-identity} is nonnegative.
\end{lemma}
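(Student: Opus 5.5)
The plan is to apply Lemma~\ref{lem:performance-difference-deviation-policy} separately to each deviation, averaged over $\State{0}\sim\InitialDist$. The key structural observation is that \emph{neither} deviation changes the state dynamics.

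\emph{The deviations do not change the dynamics.} By the single-controller property, $\TransKernel(\cdot\mid s,a)$ depends on $a$ only through $a_{\Controller}$. The next-state law at $s$ under any joint policy is therefore $\sum_{b}\Pr[a_{\Controller}=b\mid s]\,\TransKernel(\cdot\mid s,b)$, which depends only on the controller's marginal at $s$.

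\begin{itemize}[leftmargin=7mm]
\item Under $\Policy_{\Controller}\times\Policy_{\RewardPlayer}^{\mathrm{br}}$, the controller's marginal is untouched by \eqref{eq:deviation-product-policy}.
\item Under $\Policy_{\Controller}^{\mathrm{dec}}\times\Policy_{\RewardPlayer}$, the controller's marginal at each variable state is again $\CtrlMarg{u}$ by construction.
\item At sinks there is only a dummy action.
\end{itemize}

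Hence both deviated profiles induce the same Markov chain as $\Policy$. Consequently $\VisitDist{\State{0}}{\DeviationProfile{i}}{s}=\VisitDist{\State{0}}{\Policy}{s}$, and averaging over $\InitialDist$ turns the visitation distribution into $\VisitDist{\InitialDist}{\Policy}{s}$.

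\emph{Local advantages.} By the same observation, for each $s$ the continuation term $\Discount\sum_{s'}\TransKernel(s'\mid s,\cdot)\Value{i}{\Policy}{s'}$, averaged under the deviated action distribution, equals its average under $\Pi(\cdot\mid s)$. So $\CoarseAdvantage{i}{\Policy}{\DeviationPolicy{i}}{s}$ reduces to $(1-\Discount)$ times the difference of expected immediate rewards. The value $\Value{i}{\Policy}{s}$ is itself the $\Policy$-average of $\QValue{i}{\Policy}{s,a}$.

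\begin{itemize}[leftmargin=7mm]
\item \emph{Sinks.} All local advantages vanish, since there is a single action profile.
\item \emph{Reward player at $\State{u}$.} Under $\Policy$ the immediate reward is $\MatchProb{u}$. Under $\Policy_{\RewardPlayer}^{\mathrm{br}}$ it is $\CtrlMarg{u}$ if $\CtrlMarg{u}\ge1/2$ and $1-\CtrlMarg{u}$ otherwise, i.e.\ $B(\CtrlMarg{u})$. Thus the advantage is $(1-\Discount)(B(\CtrlMarg{u})-\MatchProb{u})$.
\item \emph{Controller at $\State{u}$.} The deviated profile is the product of marginals $\CtrlMarg{u}$ and $\RewMarg{u}$. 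The term $(1-\lambda)a_{\RewardPlayer}$ has the same expectation $(1-\lambda)\RewMarg{u}$ in both cases. The bonus term has expectation $\lambda(1-\MatchProb{u})$ under $\Policy$ versus $\lambda(1-\ProdMatch{u})$ under the product. Hence the advantage is $(1-\Discount)\lambda(\MatchProb{u}-\ProdMatch{u})$.
\end{itemize}

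\emph{Combining.} Plugging these into \eqref{eq:coarse-perf-diff} and averaging over $\InitialDist$ gives
\[
\Delta_{\RewardPlayer}=\sum_u \VisitDist{\InitialDist}{\Policy}{\State{u}}\bigl(B(\CtrlMarg{u})-\MatchProb{u}\bigr),
\qquad
\lambda^{-1}\Delta_{\Controller}^{\mathrm{dec}}=\sum_u \VisitDist{\InitialDist}{\Policy}{\State{u}}\bigl(\MatchProb{u}-\ProdMatch{u}\bigr).
\]
The $(1-\Discount)$ factors cancel against the prefactor $1/(1-\Discount)$ in the lemma. Adding the two displays, the correlated terms $\MatchProb{u}$ cancel and we obtain \eqref{eq:decorrelation-identity}.

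\emph{Nonnegativity.} $\ProdMatch{u}=\RewMarg{u}\CtrlMarg{u}+(1-\RewMarg{u})(1-\CtrlMarg{u})$ is a convex combination of $\CtrlMarg{u}$ and $1-\CtrlMarg{u}$, so it is at most $B(\CtrlMarg{u})$. Visitation weights are nonnegative, so every summand is nonnegative.

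The main point needing care is the first step: that the visitation distribution in Lemma~\ref{lem:performance-difference-deviation-policy}, which is taken under the \emph{deviated} profile, coincides with $d^{\Policy}$. This is exactly where the single-controller structure and the marginal-preserving choice of $\Policy_{\Controller}^{\mathrm{dec}}$ are used. Without it, the two sums would carry different weights and the $\MatchProb{u}$ terms would not cancel.
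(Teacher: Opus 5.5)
Your proof is correct and follows essentially the same route as the paper: both deviations preserve the controller's marginal at every state, so the state-visitation law stays $d^{\Policy}_{\InitialDist}$, each value difference becomes a visitation-weighted sum of the immediate-reward differences $B(\CtrlMarg{u})-\MatchProb{u}$ and $\lambda(\MatchProb{u}-\ProdMatch{u})$, and the $\MatchProb{u}$ terms cancel. The differences are cosmetic: the paper reads these sums directly off the common trajectory law instead of going through Lemma~\ref{lem:performance-difference-deviation-policy}, and it proves nonnegativity by a case split on whether $\CtrlMarg{u}\ge 1/2$ or $\CtrlMarg{u}\le 1/2$, where you use the (slightly slicker) convex-combination bound.
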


\begin{proof}
The reward player's actions do not affect transitions.  Under $\Policy_{\Controller}\times\Policy_{\RewardPlayer}^{\mathrm{br}}$, the state trajectory is therefore distributed exactly as under $\Policy$.  At a variable state $\State{u}$, the reward-player payoff under $\Policy$ is $\MatchProb{u}$.  The best fixed action obtains $\max\{\CtrlMarg{u},1-\CtrlMarg{u}\}=B(\CtrlMarg{u})$.  Since the reward player receives zero at sinks under both policies, the normalized discounted value difference is
\begin{equation}
\Delta_{\RewardPlayer}
=
\sum_{u\in\NodeSet}\VisitDist{\InitialDist}{\Policy}{\State{u}}
\left(B(\CtrlMarg{u})-\MatchProb{u}\right).
\label{eq:DeltaR-copy}
\end{equation}

Now consider the controller decorrelation deviation.  At every variable state $\State{u}$, the controller marginal under the deviation is still $\CtrlMarg{u}$.  Because transitions depend only on the controller's current action, the transition kernel induced by the deviated profile equals the transition kernel induced by $\Policy$ at every state.  Thus the normalized visitation distribution is again $\VisitDist{\InitialDist}{\Policy}{\cdot}$.

At $\State{u}$, the payoff $(1-\lambda)a_{\RewardPlayer}$ has the same expectation under the deviation and under $\Policy$, because the reward player's marginal remains $\RewMarg{u}$.  The only changed term is the mismatch bonus.  Under $\Policy$, the mismatch probability is $1-\MatchProb{u}$.  Under the independent product of the same marginals, the match probability is $\ProdMatch{u}$, so the mismatch probability is $1-\ProdMatch{u}$.  Hence the expected controller reward at $\State{u}$ changes by
\[
\lambda\bigl((1-\ProdMatch{u})-(1-\MatchProb{u})\bigr)
=\lambda(\MatchProb{u}-\ProdMatch{u}).
\]
Therefore
\begin{equation}
\Delta_{\Controller}^{\mathrm{dec}}
=
\lambda\sum_{u\in\NodeSet}\VisitDist{\InitialDist}{\Policy}{\State{u}}
\left(\MatchProb{u}-\ProdMatch{u}\right).
\label{eq:DeltaC-dec}
\end{equation}
Adding \eqref{eq:DeltaR-copy} and $\lambda^{-1}$ times \eqref{eq:DeltaC-dec} proves \eqref{eq:decorrelation-identity}.

It remains to prove nonnegativity.  If $q\ge1/2$, then
\[
B(q)-\bigl(qp+(1-q)(1-p)\bigr)
=q-qp-(1-q)(1-p)=(1-p)(2q-1)\ge0.
\]
If $q\le1/2$, then
\[
B(q)-\bigl(qp+(1-q)(1-p)\bigr)
=(1-q)-qp-(1-q)(1-p)=p(1-2q)\ge0.
\]
This proves the claim state by state.
\end{proof}

Define the set of copy failures among output states by
\[
\Copy:=\left\{\State{u}\in\OutputStates:\
\begin{array}{l}
\CtrlMarg{u}\ge15/16\text{ and }\RewMarg{u}\le3/4,\ \text{or}\\
\CtrlMarg{u}\le1/16\text{ and }\RewMarg{u}\ge1/4
\end{array}
\right\}.
\]

\begin{lemma}[Average copy lemma]
\label{lem:average-copy}
Let $\Policy$ be a non-perfect Markov $\Approx$-CCE of $G_\lambda(I)$ from the uniform distribution on $\OutputStates$. Then, the following holds:
\[
\text{If} \quad
\Approx<\frac{(1-\Discount)\eta c_{\mathrm{copy}}}{1+\lambda^{-1}}, \quad \text{then} \quad \InitialDist(\Copy)<\eta,
\]
where $\InitialDist(\Copy):= \sum_{s\in\Copy}\InitialDist(s)$.
\end{lemma}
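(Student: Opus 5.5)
The plan is to apply the decorrelation identity (Lemma~\ref{lem:decorrelation-identity}) to the two stationary deviations it constructs, $\Policy_{\RewardPlayer}^{\mathrm{br}}$ and $\Policy_{\Controller}^{\mathrm{dec}}$. Since $\Policy$ is a non-perfect Markov $\Approx$-CCE from $\InitialDist$, both gains satisfy $\Delta_{\RewardPlayer}\le\Approx$ and $\Delta_{\Controller}^{\mathrm{dec}}\le\Approx$. The left-hand side of \eqref{eq:decorrelation-identity} is therefore at most $(1+\lambda^{-1})\Approx$.

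On the right-hand side, every summand is nonnegative, so I will discard all states outside $\Copy$. For states in $\Copy$ I use the lower bound \eqref{eq:visit-lower}, $\VisitDist{\InitialDist}{\Policy}{\State{u}}\ge(1-\Discount)\InitialDist(\State{u})$. This gives
\[
(1+\lambda^{-1})\Approx\;\ge\;(1-\Discount)\sum_{\State{u}\in\Copy}\InitialDist(\State{u})\bigl(B(\CtrlMarg{u})-\ProdMatch{u}\bigr).
\]

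The remaining step is a pointwise lower bound $B(\CtrlMarg{u})-\ProdMatch{u}\ge c_{\mathrm{copy}}=7/32$ for every $\State{u}\in\Copy$. I will use the two closed forms from the proof of Lemma~\ref{lem:decorrelation-identity}.
\begin{itemize}
\item In the first failure case, $\CtrlMarg{u}\ge15/16\ge1/2$ and $\RewMarg{u}\le3/4$. Then $B-\ProdMatch{u}=(1-\RewMarg{u})(2\CtrlMarg{u}-1)\ge\frac14\cdot\frac78=\frac7{32}$.
\item In the second case, $\CtrlMarg{u}\le1/16$ and $\RewMarg{u}\ge1/4$. Then $B-\ProdMatch{u}=\RewMarg{u}(1-2\CtrlMarg{u})\ge\frac14\cdot\frac78=\frac7{32}$.
\end{itemize}
This explains the choice of the constant $c_{\mathrm{copy}}$.

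Combining, $(1+\lambda^{-1})\Approx\ge(1-\Discount)c_{\mathrm{copy}}\InitialDist(\Copy)$. Hence $\InitialDist(\Copy)\le(1+\lambda^{-1})\Approx/((1-\Discount)c_{\mathrm{copy}})<\eta$ under the stated hypothesis on $\Approx$.

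There is no real obstacle here. The main point to check is that the identity's right-hand side uses the product matching $\ProdMatch{u}$ rather than the correlated $\MatchProb{u}$. That is exactly why the controller's decorrelation deviation is needed: correlation alone could otherwise make $\MatchProb{u}$ large while $\RewMarg{u}$ is far from $\CtrlMarg{u}$. I will also verify that the deviations are legitimate stationary one-player deviations. Their behavior at sinks is irrelevant, since sink rewards and transitions do not depend on the dummy action.
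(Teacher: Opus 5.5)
Your proposal is correct and follows essentially the same route as the paper: the decorrelation identity, discarding the nonnegative summands outside $\Copy$, the visitation lower bound \eqref{eq:visit-lower}, the pointwise bound $B(\CtrlMarg{u})-\ProdMatch{u}\ge 7/32$ from the two closed forms, and the upper bound $(1+\lambda^{-1})\Approx$ on the left-hand side. The only difference is cosmetic: the paper assumes $\InitialDist(\Copy)\ge\eta$ and derives a contradiction, whereas you bound $\InitialDist(\Copy)$ directly.
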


\begin{proof}
First suppose that $q:=\CtrlMarg{u}\ge15/16$ and $p:=\RewMarg{u}\le3/4$.  Since $q\ge1/2$,
\[
B(q)-\bigl(qp+(1-q)(1-p)\bigr)=(1-p)(2q-1)
\ge\frac14\cdot\frac78=\frac7{32}=c_{\mathrm{copy}}.
\]
Now suppose that $q\le1/16$ and $p\ge1/4$.  Since $q\le1/2$,
\[
B(q)-\bigl(qp+(1-q)(1-p)\bigr)=p(1-2q)
\ge\frac14\cdot\frac78=\frac7{32}=c_{\mathrm{copy}}.
\]
Thus every state in $\Copy$ contributes at least $c_{\mathrm{copy}}$ to the summand in Lemma \ref{lem:decorrelation-identity}.

Assume toward a contradiction that $\InitialDist(\Copy)\ge\eta$.  Using \eqref{eq:visit-lower}, the identity \eqref{eq:decorrelation-identity}, and nonnegativity of all summands,
\[
\Delta_{\RewardPlayer}+\lambda^{-1}\Delta_{\Controller}^{\mathrm{dec}}
\ge
\sum_{\State{u}\in\Copy}\VisitDist{\InitialDist}{\Policy}{\State{u}}c_{\mathrm{copy}}
\ge
(1-\Discount)\InitialDist(\Copy)c_{\mathrm{copy}}
\ge
(1-\Discount)\eta c_{\mathrm{copy}}.
\]
On the other hand, non-perfect Markov $\Approx$-CCE implies $\Delta_{\RewardPlayer}\le\Approx$ and $\Delta_{\Controller}^{\mathrm{dec}}\le\Approx$, because both are gains from valid stationary one-player deviations.  Hence the left-hand side of \eqref{eq:decorrelation-identity} is at most $\Approx(1+
\lambda^{-1})$, contradicting the assumed bound on $\Approx$.  Therefore $\InitialDist(\Copy)<\eta$.
\end{proof}

\subsection{Average controller-forcing}

Define the set of forcing failures among output states by
\[
\Force:=\left\{\State{v}\in\OutputStates:\
\begin{array}{l}
\ContVal{1}{\State{v}}-\ContVal{0}{\State{v}}\ge\GateGap\text{ and }\CtrlMarg{v}<15/16,\ \text{or}\\
\ContVal{0}{\State{v}}-\ContVal{1}{\State{v}}\ge\GateGap\text{ and }\CtrlMarg{v}>1/16
\end{array}
\right\}.
\]

\begin{lemma}[Average forcing lemma]
\label{lem:average-forcing}
Let $\Policy$ be a non-perfect Markov $\Approx$-CCE of $G_\lambda(I)$ from the uniform distribution on $\OutputStates$, and suppose \eqref{eq:lambda-choice} holds.  If $\Approx<\eta c_{\mathrm{force}}/4$, then $\InitialDist(\Force)<\eta$.
\end{lemma}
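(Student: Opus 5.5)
The plan is to argue by contradiction with a single stationary controller deviation. This deviation plays, at every variable state $s$, an action $b^\star(s)\in\argmax_b\ActionAdvantage{\Controller}{\Policy}{s}{b}$, and the dummy action at sinks. By Lemma~\ref{lem:performance-difference-deviation-policy} applied with initial distribution $\InitialDist$ (averaging over $\State{0}\sim\InitialDist$), its gain is
\[
\frac1{1-\Discount}\sum_{s}\VisitDist{\InitialDist}{\DeviationPolicy{\Controller}\times\Policy_{\RewardPlayer}}{s}\,\ControllerBestGap{s}.
\]
Unlike the perfect case, $\ControllerBestGap{s}$ may now be negative, because of the decorrelation bonus. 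The proof therefore has three steps: a uniform lower bound on $\ControllerBestGap{s}$ everywhere, a large lower bound on $\ControllerBestGap{s}$ on $\Force$, and a comparison of the two contributions.

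\emph{Step 1 (the $\lambda$-defect).} At a variable state, expand $\AvgQValue{\Controller}{\Policy}{s}{b}=(1-\Discount)\bigl((1-\lambda)\RewMarg{s}+\lambda\Pr[a_{\RewardPlayer}\neq b]\bigr)+\Discount\ContVal{b}{s}$. Mixing with weights $(\CtrlMarg{s},1-\CtrlMarg{s})$ gives the mismatch term $1-\ProdMatch{s}$ in place of $1-\MatchProb{s}$, so
\[
\CtrlMarg{s}\AvgQValue{\Controller}{\Policy}{s}{1}+(1-\CtrlMarg{s})\AvgQValue{\Controller}{\Policy}{s}{0}-\Value{\Controller}{\Policy}{s}=(1-\Discount)\lambda(\MatchProb{s}-\ProdMatch{s}).
\]
I will check that $|\MatchProb{s}-\ProdMatch{s}|\le 1/2$. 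Indeed, $\MatchProb{s}-\ProdMatch{s}=2\,\mathrm{Cov}(a_{\Controller},a_{\RewardPlayer})$, and the covariance of two Bernoulli variables has absolute value at most $1/4$. Since a maximum dominates an average, $\ControllerBestGap{s}\ge-(1-\Discount)\lambda/2$ at every variable state; at sinks it is $0$.

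\emph{Step 2 (forcing failures).} Take $\State{v}\in\Force$ in the first case, so $\ContVal{1}{\State{v}}-\ContVal{0}{\State{v}}\ge\GateGap$ and $\CtrlMarg{v}<15/16$. Then $\AvgQValue{\Controller}{\Policy}{\State{v}}{1}-\AvgQValue{\Controller}{\Policy}{\State{v}}{0}\ge\Discount\GateGap-(1-\Discount)\lambda$, and
\[
\ActionAdvantage{\Controller}{\Policy}{\State{v}}{1}=(1-\CtrlMarg{v})\bigl(\AvgQValue{\Controller}{\Policy}{\State{v}}{1}-\AvgQValue{\Controller}{\Policy}{\State{v}}{0}\bigr)+(1-\Discount)\lambda(\ProdMatch{v}-\MatchProb{v}).
\]
This yields $\ControllerBestGap{\State{v}}\ge\frac{\Discount\GateGap}{16}-(1-\Discount)\lambda\bigl(\tfrac1{16}+\tfrac12\bigr)$. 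By \eqref{eq:lambda-choice}, $(1-\Discount)\lambda\le c_{\mathrm{force}}/2$, so this is at least $c_{\mathrm{force}}(1-\tfrac{9}{32})=\tfrac{23}{32}c_{\mathrm{force}}$. The second case of $\Force$ is symmetric, using action $0$ and the factor $\CtrlMarg{v}>1/16$.

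\emph{Step 3 (aggregation).} Split the performance-difference sum into states in $\Force$ and all other states. On $\Force$, use \eqref{eq:visit-lower} for the deviated visitation, $\VisitDist{\InitialDist}{\DeviationPolicy{\Controller}\times\Policy_{\RewardPlayer}}{s}\ge(1-\Discount)\InitialDist(s)$, which gives a contribution of at least $\tfrac{23}{32}c_{\mathrm{force}}\InitialDist(\Force)$. For the remaining states, visitation sums to at most $1$, and the bound of Step 1 gives a total of at least $-\lambda/2$. If $\InitialDist(\Force)\ge\eta$, then the gain is at least $\tfrac{23}{32}\eta c_{\mathrm{force}}-\tfrac{\eta c_{\mathrm{force}}}{8}>\eta c_{\mathrm{force}}/4>\Approx$, using $\lambda\le\eta c_{\mathrm{force}}/4$. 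This contradicts the non-perfect Markov $\Approx$-CCE property.

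The main obstacle is the constant bookkeeping in Steps 1--2. The crude bound $|\MatchProb{s}-\ProdMatch{s}|\le1$ leaves too little slack against the $\lambda\le\eta c_{\mathrm{force}}/4$ budget. The covariance bound of $1/2$ is what makes the margins close.
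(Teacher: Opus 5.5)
Your argument is correct, but it uses a different deviation from the paper's.

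\textbf{The paper's deviation.} The paper does not use the greedy deviation $b^\star(s)\in\argmax_b\ActionAdvantage{\Controller}{\Policy}{s}{b}$. It uses a targeted deviation $\Policy_{\Controller}^{F}$: this plays the continuation-better action deterministically on $\Force$, and at every other variable state it resamples the controller's own marginal $\CtrlMarg{u}$ independently of $\RewardPlayer$. Off $\Force$, its local advantage is then exactly the decorrelation gain $(1-\Discount)\lambda(\MatchProb{u}-\ProdMatch{u})\ge-(1-\Discount)\lambda$. On $\Force$, the paper computes the advantage in closed form as $(1-\Discount)\lambda(\MatchProb{v}-\RewMarg{v})+\Discount(1-\CtrlMarg{v})(\ContVal{1}{\State{v}}-\ContVal{0}{\State{v}})\ge c_{\mathrm{force}}-(1-\Discount)\lambda$. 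With only crude bounds, this yields a gain of at least $\eta c_{\mathrm{force}}/4$, which is exactly the threshold, so the paper's bookkeeping has no slack.

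\textbf{What your route buys.} Your greedy deviation has local advantage that dominates the paper's deviation pointwise: the max is at least the $\CtrlMarg{s}$-average, and that average is the same decorrelation term. It reuses the construction of Lemma~\ref{lem:best-adv-perfect}, needs no case-defined deviation, and ends with the better constant $\tfrac{19}{32}\eta c_{\mathrm{force}}$.

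\textbf{Where the covariance bound comes from.} You split $\ActionAdvantage{\Controller}{\Policy}{\State{v}}{1}$ into $(1-\CtrlMarg{v})\bigl(\AvgQValue{\Controller}{\Policy}{\State{v}}{1}-\AvgQValue{\Controller}{\Policy}{\State{v}}{0}\bigr)$ plus the decorrelation term, and bound the two $\lambda$-pieces separately. With the crude bound, this split costs $\tfrac{17}{16}(1-\Discount)\lambda$, whereas the combined term $(1-\Discount)\lambda(\MatchProb{v}-\RewMarg{v})$ costs only $(1-\Discount)\lambda$. Because the margin is exactly tight, that extra $\tfrac1{16}$ is the only reason you need the covariance bound $|\MatchProb{s}-\ProdMatch{s}|\le\tfrac12$. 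Your observation that the crude bound fails is therefore an artifact of the split, not of the problem. Your greedy deviation would also go through with the paper's closed-form lower bound on $\ActionAdvantage{\Controller}{\Policy}{\State{v}}{1}$ and crude bounds everywhere.

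\textbf{Two small slips.} First, in the Step 2 display the last term should be $+(1-\Discount)\lambda(\MatchProb{v}-\ProdMatch{v})$, consistent with your Step 1. This is harmless, since you only use its absolute value. Second, replacing $(1-\CtrlMarg{v})$ by $\tfrac1{16}$ requires $\Discount\GateGap-(1-\Discount)\lambda>0$. This holds because $\Discount\GateGap=16c_{\mathrm{force}}$ and $(1-\Discount)\lambda\le c_{\mathrm{force}}/2$, but you should say so.
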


\begin{proof}
Assume toward a contradiction that
\[
\InitialDist(\Force)\ge\eta.
\]
Define the stationary controller deviation
$\Policy_{\Controller}^{F}$ at every variable state by
\begin{equation}
\Policy_{\Controller}^{F}(1\mid\State{u})
:=
\begin{cases}
1,
&
\State{u}\in\Force
\text{ and }
\ContVal{1}{\State{u}}-\ContVal{0}{\State{u}}
\ge\GateGap,
\\[1mm]
0,
&
\State{u}\in\Force
\text{ and }
\ContVal{0}{\State{u}}-\ContVal{1}{\State{u}}
\ge\GateGap,
\\[1mm]
\CtrlMarg{u},
&
\State{u}\notin\Force.
\end{cases}
\label{eq:forcing-deviation-definition}
\end{equation}
At every sink state, $\Policy_{\Controller}^{F}$ plays uniformly.  Let
$\widetilde{\Policy}
:=
\Policy_{\Controller}^{F}\times\Policy_{\RewardPlayer}$
denote the resulting deviated joint policy. 

\noindent
By
\eqref{eq:lambda-reward}, \eqref{eq:pq-def},
\eqref{eq:M-def}, and \eqref{eq:W-def}, we have
\begin{align}
\Value{\Controller}{\Policy}{\State{u}}
&=
\sum_{\Action{\Controller}\in\Bits}
\sum_{\Action{\RewardPlayer}\in\Bits}
\Policy(
\Action{\Controller},
\Action{\RewardPlayer}
\mid\State{u})
\QValue{\Controller}{\Policy}
{\State{u},
(\Action{\Controller},\Action{\RewardPlayer})}
\nonumber\\
&=
(1-\Discount)
\sum_{\Action{\Controller}\in\Bits}
\sum_{\Action{\RewardPlayer}\in\Bits}
\Policy(
\Action{\Controller},
\Action{\RewardPlayer}
\mid\State{u})
\Bigl[
(1-\lambda)\Action{\RewardPlayer}
+
\lambda
\ind[
\Action{\Controller}\ne\Action{\RewardPlayer}
]
\Bigr]
\nonumber\\
&\qquad
+
\Discount
\sum_{\Action{\Controller}\in\Bits}
\sum_{\Action{\RewardPlayer}\in\Bits}
\Policy(
\Action{\Controller},
\Action{\RewardPlayer}
\mid\State{u})
\ContVal{\Action{\Controller}}{\State{u}}
\nonumber\\
&=
(1-\Discount)
\Bigl[
(1-\lambda)\RewMarg{u}
+
\lambda(1-\MatchProb{u})
\Bigr]
\nonumber\\
&\qquad
+
\Discount
\Bigl[
(1-\CtrlMarg{u})\ContVal{0}{\State{u}}
+
\CtrlMarg{u}\ContVal{1}{\State{u}}
\Bigr].
\label{eq:forcing-original-value}
\end{align}
\noindent\textbf{Forcing failure with action 1 preferred.} Consider first a state $\State{v}\in\Force$ satisfying
\begin{equation}
\ContVal{1}{\State{v}}-\ContVal{0}{\State{v}}
\ge\GateGap,
\qquad
\CtrlMarg{v}<\HighCtrlThreshold.
\label{eq:forcing-failure-one-case}
\end{equation}
By \eqref{eq:forcing-deviation-definition}, the deviation plays
action $1$ deterministically at $\State{v}$.  Using
\eqref{eq:barQ-def} and \eqref{eq:lambda-reward},
\begin{align}
\AvgQValue{\Controller}{\Policy}{\State{v}}{1}
&=
\sum_{\Action{\RewardPlayer}\in\Bits}
\Policy_{\RewardPlayer}
(\Action{\RewardPlayer}\mid\State{v})
\QValue{\Controller}{\Policy}
{\State{v},(1,\Action{\RewardPlayer})}
\nonumber\\
&=
\sum_{\Action{\RewardPlayer}\in\Bits}
\Policy_{\RewardPlayer}
(\Action{\RewardPlayer}\mid\State{v})
\Bigl[
(1-\Discount)
\bigl(
(1-\lambda)\Action{\RewardPlayer}
+
\lambda\ind[1\ne\Action{\RewardPlayer}]
\bigr)
+
\Discount\ContVal{1}{\State{v}}
\Bigr]
\nonumber\\
&=
(1-\Discount)
\Bigl[
(1-\lambda)\RewMarg{v}
+
\lambda(1-\RewMarg{v})
\Bigr]
+
\Discount\ContVal{1}{\State{v}}.
\label{eq:forcing-barQ-one}
\end{align}
Consequently, by \eqref{eq:forcing-original-value}, we get
\begin{align}
\CoarseAdvantage{\Controller}{\Policy}
{\Policy_{\Controller}^{F}}{\State{v}}
&=
\ActionAdvantage{\Controller}{\Policy}{\State{v}}{1}
\nonumber\\
&=
\AvgQValue{\Controller}{\Policy}{\State{v}}{1}
-
\Value{\Controller}{\Policy}{\State{v}}
\nonumber\\
&=
(1-\Discount)
\Bigl[
(1-\lambda)\RewMarg{v}
+
\lambda(1-\RewMarg{v})
\Bigr]
+
\Discount\ContVal{1}{\State{v}}
\nonumber\\
&\qquad
-
(1-\Discount)
\Bigl[
(1-\lambda)\RewMarg{v}
+
\lambda(1-\MatchProb{v})
\Bigr]
\nonumber\\
&\qquad
-
\Discount
\Bigl[
(1-\CtrlMarg{v})\ContVal{0}{\State{v}}
+
\CtrlMarg{v}\ContVal{1}{\State{v}}
\Bigr]
\nonumber\\
&=
(1-\Discount)\lambda
\Bigl[
(1-\RewMarg{v})-(1-\MatchProb{v})
\Bigr]
\nonumber\\
&\qquad
+
\Discount
\Bigl[
\ContVal{1}{\State{v}}
-
(1-\CtrlMarg{v})\ContVal{0}{\State{v}}
-
\CtrlMarg{v}\ContVal{1}{\State{v}}
\Bigr]
\nonumber\\
&=
(1-\Discount)\lambda
\bigl(
\MatchProb{v}-\RewMarg{v}
\bigr)
\nonumber\\
&\qquad
+
\Discount(1-\CtrlMarg{v})
\bigl(
\ContVal{1}{\State{v}}
-
\ContVal{0}{\State{v}}
\bigr).
\label{eq:forcing-advantage-one}
\end{align}
Because
$1-\CtrlMarg{v}>\frac1{16}$ and
$\MatchProb{v}-\RewMarg{v}\ge-1$, 
\eqref{eq:forcing-advantage-one} gives
\begin{align}
\CoarseAdvantage{\Controller}{\Policy}
{\Policy_{\Controller}^{F}}{\State{v}}
&\ge
\Discount(1-\CtrlMarg{v})\GateGap
-
(1-\Discount)\lambda
\nonumber\\
&>
\frac{\Discount\GateGap}{16}
-
(1-\Discount)\lambda
\nonumber\\
&=
c_{\mathrm{force}}
-
(1-\Discount)\lambda,
\label{eq:forcing-advantage-one-lower}
\end{align}
where the last equality follows from
\eqref{eq:copy-force-constants}.

\noindent\textbf{Forcing failure with action 0 preferred.}
Now consider a state $\State{v}\in\Force$ satisfying
\begin{equation}
\ContVal{0}{\State{v}}-\ContVal{1}{\State{v}}
\ge\GateGap
\qquad \text{and} \qquad
\CtrlMarg{v}>\LowCtrlThreshold.
\label{eq:forcing-failure-zero-case}
\end{equation}
The deviation plays action $0$ deterministically at $\State{v}$.
Its averaged action value is
\begin{align}
\AvgQValue{\Controller}{\Policy}{\State{v}}{0}
&=
\sum_{\Action{\RewardPlayer}\in\Bits}
\Policy_{\RewardPlayer}
(\Action{\RewardPlayer}\mid\State{v})
\QValue{\Controller}{\Policy}
{\State{v},(0,\Action{\RewardPlayer})}
\nonumber\\
&=
\sum_{\Action{\RewardPlayer}\in\Bits}
\Policy_{\RewardPlayer}
(\Action{\RewardPlayer}\mid\State{v})
\Bigl[
(1-\Discount)
\bigl(
(1-\lambda)\Action{\RewardPlayer}
+
\lambda\ind[0\ne\Action{\RewardPlayer}]
\bigr)
+
\Discount\ContVal{0}{\State{v}}
\Bigr]
\nonumber\\
&=
(1-\Discount)
\Bigl[
(1-\lambda)\RewMarg{v}
+
\lambda\RewMarg{v}
\Bigr]
+
\Discount\ContVal{0}{\State{v}}.
\label{eq:forcing-barQ-zero}
\end{align}
Thus, we obtain
\begin{align}
\CoarseAdvantage{\Controller}{\Policy}
{\Policy_{\Controller}^{F}}{\State{v}}
&=
\ActionAdvantage{\Controller}{\Policy}{\State{v}}{0}
\nonumber\\
&=
\AvgQValue{\Controller}{\Policy}{\State{v}}{0}
-
\Value{\Controller}{\Policy}{\State{v}}
\nonumber\\
&=
(1-\Discount)\lambda
\Bigl[
\RewMarg{v}-(1-\MatchProb{v})
\Bigr]
\nonumber\\
&\qquad
+
\Discount
\Bigl[
\ContVal{0}{\State{v}}
-
(1-\CtrlMarg{v})\ContVal{0}{\State{v}}
-
\CtrlMarg{v}\ContVal{1}{\State{v}}
\Bigr]
\nonumber\\
&=
(1-\Discount)\lambda
\bigl(
\MatchProb{v}+\RewMarg{v}-1
\bigr)
\nonumber\\
&\qquad
+
\Discount\CtrlMarg{v}
\bigl(
\ContVal{0}{\State{v}}
-
\ContVal{1}{\State{v}}
\bigr).
\label{eq:forcing-advantage-zero}
\end{align}
Since
$\CtrlMarg{v}>\LowCtrlThreshold$ and
$\MatchProb{v}+\RewMarg{v}-1\ge-1$,
we obtain
\begin{align}
\CoarseAdvantage{\Controller}{\Policy}
{\Policy_{\Controller}^{F}}{\State{v}}
&\ge
\Discount\CtrlMarg{v}\GateGap
-
(1-\Discount)\lambda
\nonumber\\
&>
\frac{\Discount\GateGap}{16}
-
(1-\Discount)\lambda
\nonumber\\
&=
c_{\mathrm{force}}
-
(1-\Discount)\lambda.
\label{eq:forcing-advantage-zero-lower}
\end{align}
Combining \eqref{eq:forcing-advantage-one-lower} and
\eqref{eq:forcing-advantage-zero-lower}, every forcing-failure
state satisfies
\begin{equation}
\CoarseAdvantage{\Controller}{\Policy}
{\Policy_{\Controller}^{F}}{\State{v}}
\ge
c_{\mathrm{force}}
-
(1-\Discount)\lambda
\qquad
\text{for every }\State{v}\in\Force.
\label{eq:forcing-state-lower}
\end{equation}

\noindent\textbf{States outside $\Force$.} Next, let $\State{u}$ be a variable state outside $\Force$.
By \eqref{eq:forcing-deviation-definition}, the deviation plays
action $1$ with probability $\CtrlMarg{u}$ and action $0$ with
probability $1-\CtrlMarg{u}$, independently of the reward player's
action.  Therefore the controller marginal is unchanged.  The
expected continuation term is consequently
\[
(1-\CtrlMarg{u})\ContVal{0}{\State{u}}
+
\CtrlMarg{u}\ContVal{1}{\State{u}},
\]
which is the same continuation term as in
\eqref{eq:forcing-original-value}.

The reward-player marginal also remains $\RewMarg{u}$.  Under the
independent product of the two marginals, the match probability is
$\ProdMatch{u}$, as defined in \eqref{eq:Mprod-def}.  Hence
\begin{align}
&
\sum_{b_{\Controller}\in\Bits}
\sum_{\Action{\RewardPlayer}\in\Bits}
\widetilde{\Policy}
(b_{\Controller},\Action{\RewardPlayer}\mid\State{u})
\QValue{\Controller}{\Policy}
{\State{u},(b_{\Controller},\Action{\RewardPlayer})}
\nonumber\\
&\qquad=
(1-\Discount)
\Bigl[
(1-\lambda)\RewMarg{u}
+
\lambda(1-\ProdMatch{u})
\Bigr]
\nonumber\\
&\qquad\qquad
+
\Discount
\Bigl[
(1-\CtrlMarg{u})\ContVal{0}{\State{u}}
+
\CtrlMarg{u}\ContVal{1}{\State{u}}
\Bigr].
\label{eq:forcing-nonfailure-deviated-q}
\end{align}
Subtracting \eqref{eq:forcing-original-value} from
\eqref{eq:forcing-nonfailure-deviated-q} gives
\begin{align}
\CoarseAdvantage{\Controller}{\Policy}
{\Policy_{\Controller}^{F}}{\State{u}}
&=
(1-\Discount)\lambda
\Bigl[
(1-\ProdMatch{u})-(1-\MatchProb{u})
\Bigr]
\nonumber\\
&=
(1-\Discount)\lambda
\bigl(
\MatchProb{u}-\ProdMatch{u}
\bigr)
\nonumber\\
&\ge
-(1-\Discount)\lambda,
\qquad
\State{u}\notin\Force,
\label{eq:forcing-nonfailure-lower}
\end{align}
because $\MatchProb{u},\ProdMatch{u}\in[0,1]$.

At every sink state $z$, the controller has a unique dummy action and the state self-loops; hence the deviation cannot change either the reward or the transition, so
\begin{equation}
\CoarseAdvantage{\Controller}{\Policy}
{\Policy_{\Controller}^{F}}{z}
=0.
\label{eq:forcing-sink-zero}
\end{equation}
We now apply the performance-difference lemma (Lemma \ref{lem:performance-difference-deviation-policy}).  Mixing
\eqref{eq:coarse-perf-diff}, over the initial state
$s_0\sim\InitialDist$ gives
\begin{align}
\Value{\Controller}{\widetilde{\Policy}}{\InitialDist}
-
\Value{\Controller}{\Policy}{\InitialDist}
&=
\sum_{s_0\in\StateSpace}
\InitialDist(s_0)
\left(
\Value{\Controller}{\widetilde{\Policy}}{s_0}
-
\Value{\Controller}{\Policy}{s_0}
\right)
\nonumber\\
&=
\frac{1}{1-\Discount}
\sum_{s_0\in\StateSpace}
\InitialDist(s_0)
\sum_{s\in\StateSpace}
\VisitDist{s_0}{\widetilde{\Policy}}{s}
\CoarseAdvantage{\Controller}{\Policy}
{\Policy_{\Controller}^{F}}{s}
\nonumber\\
&=
\frac{1}{1-\Discount}
\sum_{s\in\StateSpace}
\VisitDist{\InitialDist}{\widetilde{\Policy}}{s}
\CoarseAdvantage{\Controller}{\Policy}
{\Policy_{\Controller}^{F}}{s}.
\label{eq:forcing-pdl-mu}
\end{align}
By \eqref{eq:lambda-choice}, we obtain
\begin{align}
c_{\mathrm{force}}-(1-\Discount)\lambda
&\ge
c_{\mathrm{force}}
-
(1-\Discount)
\frac{c_{\mathrm{force}}}{2(1-\Discount)}
\nonumber\\
&=
\frac{c_{\mathrm{force}}}{2}
>0.
\label{eq:forcing-positive-coefficient}
\end{align}
Thus, using \eqref{eq:forcing-state-lower},
\eqref{eq:forcing-nonfailure-lower}, and
\eqref{eq:forcing-sink-zero} in \eqref{eq:forcing-pdl-mu}, we get
\begin{align}
\Value{\Controller}{\widetilde{\Policy}}{\InitialDist}
-
\Value{\Controller}{\Policy}{\InitialDist}
&\ge
\frac{1}{1-\Discount}
\Biggl[
\sum_{\State{v}\in\Force}
\VisitDist{\InitialDist}{\widetilde{\Policy}}{\State{v}}
\Bigl(
c_{\mathrm{force}}
-
(1-\Discount)\lambda
\Bigr)
-
\sum_{\substack{\State{u}\text{ variable}\\
                  \State{u}\notin\Force}}
\VisitDist{\InitialDist}{\widetilde{\Policy}}{\State{u}}
(1-\Discount)\lambda
\Biggr].
\label{eq:forcing-partitioned-gain}
\end{align}
Applying \eqref{eq:visit-lower} to the stationary policy
$\widetilde{\Policy}$ gives
\begin{align}
\sum_{\State{v}\in\Force}
\VisitDist{\InitialDist}{\widetilde{\Policy}}{\State{v}}
&\ge
(1-\Discount)
\sum_{\State{v}\in\Force}
\InitialDist(\State{v})
\nonumber\\
&=
(1-\Discount)\InitialDist(\Force).
\label{eq:forcing-force-mass}
\end{align}
Moreover, it holds that
\begin{equation}
\sum_{\substack{\State{u}\text{ variable}\\
                \State{u}\notin\Force}}
\VisitDist{\InitialDist}{\widetilde{\Policy}}{\State{u}}
\le1.
\label{eq:forcing-negative-mass}
\end{equation}
Substituting \eqref{eq:forcing-force-mass} and
\eqref{eq:forcing-negative-mass} into
\eqref{eq:forcing-partitioned-gain}, and using the positivity in
\eqref{eq:forcing-positive-coefficient}, yields
\begin{align}
\Value{\Controller}{\widetilde{\Policy}}{\InitialDist}
-
\Value{\Controller}{\Policy}{\InitialDist}
&\ge
\frac{1}{1-\Discount}
\Bigl[
(1-\Discount)\InitialDist(\Force)
\bigl(
c_{\mathrm{force}}
-
(1-\Discount)\lambda
\bigr)
-
(1-\Discount)\lambda
\Bigr]
\nonumber\\
&=
\InitialDist(\Force)
\bigl(
c_{\mathrm{force}}
-
(1-\Discount)\lambda
\bigr)
-
\lambda
\nonumber\\
&\ge
\eta
\bigl(
c_{\mathrm{force}}
-
(1-\Discount)\lambda
\bigr)
-
\lambda
\nonumber\\
&\ge
\eta
\left(
c_{\mathrm{force}}
-
\frac{c_{\mathrm{force}}}{2}
\right)
-
\frac{\eta c_{\mathrm{force}}}{4}
\nonumber\\
&=
\frac{\eta c_{\mathrm{force}}}{4}.
\label{eq:forcing-global-gain}
\end{align}
Since $\Approx<\frac{\eta c_{\mathrm{force}}}{4}$, \eqref{eq:forcing-global-gain} exhibits a stationary
one-player controller deviation whose gain is strictly greater than
$\Approx$.  This contradicts the assumption that $\Policy$ is a
non-perfect Markov $\Approx$-CCE from $\InitialDist$.  Therefore,
$\InitialDist(\Force)<\eta$, which concludes the proof.
\end{proof}

\subsection{Non-Perfect Markov CCE is \PPAD-hard}

\begin{proposition}
\label{prop:bad-gates}
Let $\Policy$ be any stationary policy of $G_\lambda(I)$ with $0<\lambda<1/180$.  Let $\Bad:=\Copy\cup\Force$.  If no output state of a gate lies in $\Bad$, then the decoded assignment satisfies that gate.  Consequently, if $\InitialDist(\Bad)<\delta/2$, then the decoded assignment satisfies at least a $(1-\delta)$ fraction of the gates.
\end{proposition}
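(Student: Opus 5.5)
The proof has two parts: a per-gate statement and a counting statement. For the first, fix a gate $g$ none of whose output states lies in $\Bad$. The plan is to reduce to Proposition~\ref{prop:gate-soundness}. That proposition only uses conditions~\hyperref[copy]{\gadgetC} and~\hyperref[controller-forcing]{\gadgetF} at the output states of $g$ itself. Inputs enter only through Lemma~\ref{lem:controller-value}, which holds for every stationary policy.

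So we must check that $\State{v}\notin\Bad$ gives both gadget conditions at $\State{v}$. This is the contrapositive of the definitions. The condition $\State{v}\notin\Force$ says that $\ContVal{1}{\State{v}}-\ContVal{0}{\State{v}}\ge\GateGap$ forces $\CtrlMarg{v}\ge15/16$, and symmetrically for action $0$; this is exactly \hyperref[controller-forcing]{\gadgetF} at $\State{v}$. The condition $\State{v}\notin\Copy$ says that $\CtrlMarg{v}\ge15/16$ forces $\RewMarg{v}>3/4$, and $\CtrlMarg{v}\le1/16$ forces $\RewMarg{v}<1/4$; this is exactly \hyperref[copy]{\gadgetC} at $\State{v}$. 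Rereading the proof of Proposition~\ref{prop:gate-soundness}, \gadgetC is only ever invoked at output states, right after \gadgetF. Since $\lambda<1/180$, all gate gaps exceed $\GateGap$, and the decoded assignment satisfies $g$.

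For the counting part, call a gate bad if at least one of its output states lies in $\Bad$. By the first part, every unsatisfied gate is bad. Output uniqueness means each output state belongs to exactly one gate, so the number of bad gates is at most $|\Bad|$. Since $\InitialDist$ is uniform on $\OutputStates$, we have $|\Bad|=\InitialDist(\Bad)\,|\OutputStates|$. Combining this with \eqref{eq:output-count} gives
\[
\#\{\text{bad gates}\}\le \InitialDist(\Bad)\,|\OutputStates|<\frac{\delta}{2}\cdot 2|\Gates|=\delta|\Gates|.
\]
Hence at least a $(1-\delta)$ fraction of the gates is satisfied.

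The only step needing care is confirming that the proof of Proposition~\ref{prop:gate-soundness} never uses \gadgetC or \gadgetF at input states. This holds because input information is read through $\RewMarg{u}$ via Lemma~\ref{lem:controller-value}, not through the gadgets. If needed, I will state this locality explicitly rather than cite the proposition as a black box.
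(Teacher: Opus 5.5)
Your proposal is correct and follows essentially the same route as the paper. You identify $\State{v}\notin\Bad$ with conditions \gadgetC and \gadgetF at $\State{v}$ by contraposition, apply Proposition~\ref{prop:gate-soundness} gate by gate, and bound the number of unsatisfied gates by $|\Bad|=\InitialDist(\Bad)\,|\OutputStates|<\delta|\Gates|$ using output uniqueness and \eqref{eq:output-count}. Your extra locality check is harmless but unnecessary, since Proposition~\ref{prop:gate-soundness} is already stated with hypotheses only at the gate's output states.
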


\begin{proof}
If no output state of a gate lies in $\Bad$, then neither the copy implication nor the controller-forcing implication fails at any output of that gate.  
Hence conditions~\hyperref[copy]{\gadgetC} and~\hyperref[controller-forcing]{\gadgetF} hold at all output states of the gate.  By Proposition \ref{prop:gate-soundness}, the decoded assignment satisfies the gate.

In the case that there exist unsatisfied gates, then every unsatisfied gate has at least one output state in $\Bad$.  Since each output state belongs to at most one gate, the number of unsatisfied gates is at most $|\Bad|$.  The distribution $\InitialDist$ is uniform on $\OutputStates$, so $|\Bad|=\InitialDist(\Bad)|\OutputStates|$.  Using \eqref{eq:output-count}, if $\InitialDist(\Bad)<\delta/2$ then
\[
\#\{\text{unsatisfied gates}\}
\le
|\Bad|
=
\InitialDist(\Bad)|\OutputStates|
<
\frac\delta2\cdot 2|\Gates|
=
\delta|\Gates|.
\]
Thus at least a $(1-\delta)$ fraction of gates are satisfied.
\end{proof}
Putting everything together, we prove our second main result.
\begin{statementbox}
\begin{theorem}[Non-perfect Markov $\Approx$-CCE hardness under \PCP-for-\PPAD]
\label{thm:nonperfect}
Assume \Cref{conj:pcp-ppad}, and let
$\delta_{\mathrm{pcp}}>0$ be the constant guaranteed by the conjecture.
Then there exists a constant
$\varepsilon'
=\varepsilon'(\delta_{\mathrm{pcp}})>0$
such that computing a non-perfect Markov $\varepsilon$-CCE of a discounted two-player single-controller Markov game is \PPAD-hard for every $\varepsilon<\varepsilon'$.  The hardness holds with fixed discount factor $\Discount=1/16$, rewards in $[0,1]$, and binary actions at all states.
\end{theorem}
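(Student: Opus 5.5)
The plan is a direct reduction from $\delta_{\mathrm{pcp}}$-\PureCircuit{}, assembling the average copy and average forcing lemmas with Proposition~\ref{prop:bad-gates}. Given an instance $I$ from \Cref{conj:pcp-ppad}, in which every node is the input to exactly one gate, I would build $G_\lambda(I)$ with $\Discount=1/16$ and $\InitialDist$ uniform on $\OutputStates$. The game has polynomial size with rational data, binary actions at variable states and a single dummy action at sinks. If one insists on binary actions literally everywhere, the sinks can be given two identical actions, which changes nothing. Rewards lie in $[0,1]$ since $0\le\lambda<1$.

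First I fix the constants, which depend only on $\delta_{\mathrm{pcp}}$. Set $\eta:=\delta_{\mathrm{pcp}}/4$ and choose a fixed rational $\lambda$ satisfying \eqref{eq:lambda-choice}, for instance
\[
\lambda:=\min\Bigl\{\tfrac{1}{200},\ \tfrac{c_{\mathrm{force}}}{2(1-\Discount)},\ \tfrac{\eta c_{\mathrm{force}}}{4}\Bigr\}.
\]
Then take $\varepsilon'$ equal to the right-hand side of \eqref{eq:rhoNP-choice}, namely
\[
\varepsilon':=\min\Bigl\{\tfrac{(1-\Discount)\eta c_{\mathrm{copy}}}{1+\lambda^{-1}},\ \tfrac{\eta c_{\mathrm{force}}}{4}\Bigr\}.
\]
This is a positive constant depending only on $\delta_{\mathrm{pcp}}$. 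Fix any $\varepsilon<\varepsilon'$.

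Next, suppose an algorithm returns a non-perfect Markov $\varepsilon$-CCE $\Policy$ of $G_\lambda(I)$ from $\InitialDist$.
\begin{itemize}
\item Lemma~\ref{lem:average-copy} applies because $\varepsilon<\frac{(1-\Discount)\eta c_{\mathrm{copy}}}{1+\lambda^{-1}}$, and gives $\InitialDist(\Copy)<\eta$.
\item Lemma~\ref{lem:average-forcing} applies because \eqref{eq:lambda-choice} holds and $\varepsilon<\eta c_{\mathrm{force}}/4$, and gives $\InitialDist(\Force)<\eta$.
\item A union bound then yields $\InitialDist(\Bad)<2\eta=\delta_{\mathrm{pcp}}/2$.
\item Proposition~\ref{prop:bad-gates}, valid since $0<\lambda<1/180$, shows that the decoded assignment \eqref{eq:decode} satisfies at least a $(1-\delta_{\mathrm{pcp}})$ fraction of the gates.
\end{itemize}
Decoding takes polynomial time: compute each $\RewMarg{u}$ from $\Policy$ and threshold it. So $\delta_{\mathrm{pcp}}$-\PureCircuit{} reduces to computing a non-perfect Markov $\varepsilon$-CCE, and the conjecture gives \PPAD-hardness.

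The main point needing care is that Proposition~\ref{prop:gate-soundness} uses conditions (C) and (F) only at the output states of the gate. The failure sets $\Copy$ and $\Force$ are defined only over $\OutputStates$, so the copy condition is never needed at input-only nodes. A gate's input states are either outputs of other gates or non-output nodes, and their decoded values are simply whatever $\RewMarg{}$ gives. Soundness of a gate needs only the value bounds of Lemma~\ref{lem:controller-value} at its inputs, and those hold for every stationary policy. Hence no condition at input states is required, and the counting in Proposition~\ref{prop:bad-gates} goes through.

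The remaining check is that the chosen $\lambda$ is simultaneously below $1/180$ and small enough for \eqref{eq:lambda-choice}. This holds by the minimum taken above, and the resulting $\varepsilon'$ is a constant, as the theorem requires.
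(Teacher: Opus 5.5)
Your proposal is correct and follows the paper's proof. Like the paper, you set $\eta=\delta_{\mathrm{pcp}}/4$, choose $\lambda$ and $\varepsilon'$ via \eqref{eq:lambda-choice}--\eqref{eq:rhoNP-choice}, apply Lemmas~\ref{lem:average-copy} and~\ref{lem:average-forcing}, union-bound to get $\InitialDist(\Bad)<\delta_{\mathrm{pcp}}/2$, and finish with Proposition~\ref{prop:bad-gates}. Your extras (an explicit $\lambda$, the check that only output-state conditions are needed, duplicated sink actions) just make explicit details the paper leaves implicit.
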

\end{statementbox}

 \begin{proof}
Let $\delta=\delta_{\mathrm{pcp}}$, let $\eta=\delta/4$, choose $\lambda$ satisfying \eqref{eq:lambda-choice}, and choose $\varepsilon'$ satisfying \eqref{eq:rhoNP-choice}.  Given an instance $I$ of the gap \PureCircuit{} problem from \Cref{conj:pcp-ppad}, construct $G_\lambda(I)$ and let $\InitialDist$ be uniform over $\OutputStates$.

Suppose a polynomial-time algorithm outputs a non-perfect Markov $\varepsilon$-CCE $\Policy$ for some $\varepsilon<\varepsilon'$.  By Lemma \ref{lem:average-copy}, $\InitialDist(\Copy)<\eta$.  By Lemma \ref{lem:average-forcing}, $\InitialDist(\Force)<\eta$.  Therefore
\[
\InitialDist(\Bad)
\le
\InitialDist(\Copy)+\InitialDist(\Force)
<2\eta
=\delta/2.
\]
By Proposition \ref{prop:bad-gates}, the assignment decoded from $\Policy$ satisfies at least a $(1-\delta)$ fraction of the gates of $I$.  This solves $\delta$-\PureCircuit{}, which is \PPAD-hard under \Cref{conj:pcp-ppad}.  Hence computing a constant-approximate non-perfect Markov CCE in two-player single-controller Markov games is \PPAD-hard under the conjecture.
\end{proof}

The same analysis also gives an unconditional inverse-polynomial-accuracy result.
\begin{statementbox}
\begin{theorem}[Unconditional non-perfect Markov $\Approx$-CCE hardness]
\label{thm:nonperfect-unconditional}
There exists a universal constant $\kappa>0$ such that computing a non-perfect Markov $\varepsilon$-CCE of a discounted two-player single-controller Markov game is \PPAD-hard whenever $\varepsilon<\frac{\kappa}{|\StateSpace|^2}$.
The hardness holds with fixed discount factor $\Discount=1/16$, rewards in $[0,1]$, and binary actions at all states.
\end{theorem}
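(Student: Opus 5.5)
Set $\eta$ small enough that a failure fraction below $2\eta$ forces zero failures, then reduce from exact \PureCircuit{} instead of the gap version; no conjecture is then needed. Concretely, given a \PureCircuit{} instance $I$, I build $G_\lambda(I)$ with $\InitialDist$ uniform on $\OutputStates$. Since $\InitialDist$ is uniform, every output state has mass $1/|\OutputStates|$, so $\InitialDist(\Bad)<1/|\OutputStates|$ implies $\Bad=\emptyset$. I therefore take $\eta:=\frac{1}{2|\OutputStates|}$.

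With this choice, Lemma \ref{lem:average-copy} and Lemma \ref{lem:average-forcing} give $\InitialDist(\Copy)<\eta$ and $\InitialDist(\Force)<\eta$, hence $\InitialDist(\Bad)<2\eta=1/|\OutputStates|$. So $\Bad=\emptyset$, and conditions \hyperref[copy]{\gadgetC} and \hyperref[controller-forcing]{\gadgetF} hold at every output state. Proposition \ref{prop:common-decoding} (equivalently, Proposition \ref{prop:bad-gates} applied gate by gate) then shows that the decoded assignment is an exact solution of $I$. Exact \PureCircuit{} is \PPAD-hard unconditionally, so this finishes the reduction.

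The parameters must be instantiated and checked polynomial. Let $c_{\mathrm{force}}=1/65536$. I pick $\lambda:=\min\{\frac{1}{200},\frac{c_{\mathrm{force}}}{2(1-\Discount)},\frac{\eta c_{\mathrm{force}}}{4}\}$, which satisfies \eqref{eq:lambda-choice}, is rational, and is polynomially bounded below since $\eta=\Theta(1/|\OutputStates|)$. The copy lemma requires $\varepsilon<\frac{(1-\Discount)\eta c_{\mathrm{copy}}}{1+\lambda^{-1}}$. This threshold is the bottleneck, because $\lambda^{-1}=\Theta(1/\eta)$ makes it $\Theta(\eta^2)=\Theta(1/|\OutputStates|^2)$. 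The forcing lemma requires $\varepsilon<\eta c_{\mathrm{force}}/4=\Theta(1/|\OutputStates|)$, which is weaker.

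The bounds must then be stated in terms of $|\StateSpace|$. We have $|\OutputStates|\le|\StateSpace|$, and $|\StateSpace|$ is at least one plus the number of variable states, because of the constantly many sinks. Hence both thresholds are at least $\kappa/|\StateSpace|^2$ for an explicit universal $\kappa$, found by tracking constants: $\eta\ge\frac{1}{2|\StateSpace|}$, and $1+\lambda^{-1}\le C|\StateSpace|$ with $C$ absolute, using $\lambda\ge\min\{1/200,\,\eta c_{\mathrm{force}}/4\}$.

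The main obstacle is getting the dependence right. One must confirm that $\lambda$ can shrink with $\eta$ while every earlier inequality still holds. In particular, the gate gaps in Proposition \ref{prop:gate-soundness} only need $\lambda<1/180$. The rest is verifying that the constants combine to give $\kappa/|\StateSpace|^2$ and not something worse, and that the game's encoding, including $\lambda$, stays polynomial-size.
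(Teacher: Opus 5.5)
Your proposal is correct and follows the paper's proof. You reduce from exact \PureCircuit{} and take $\eta=\Theta(1/|\OutputStates|)$ with $\lambda=\Theta(\eta)$, so the copy threshold $(1-\Discount)\eta c_{\mathrm{copy}}/(1+\lambda^{-1})$ becomes $\Theta(1/|\StateSpace|^2)$; uniformity of $\InitialDist$ then forces $\Copy=\Force=\emptyset$ before Proposition~\ref{prop:common-decoding} is invoked. The only cosmetic difference is that the paper takes $\eta=1/|\OutputStates|$ and $\lambda=c_{\mathrm{force}}/(8|\OutputStates|)$ and argues emptiness of $\Copy$ and $\Force$ separately, whereas you take $\eta=1/(2|\OutputStates|)$ and argue via the union $\Bad$; both yield an explicit universal $\kappa$.
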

\end{statementbox}

\begin{proof}
Given a \PureCircuit{} instance $I$, construct $G_\lambda(I)$ as above.
Also let
\begin{equation}
\eta:=\frac{1}{|\OutputStates|},
\qquad
\lambda:=\frac{c_{\mathrm{force}}}{8|\OutputStates|}.
\nonumber
\label{eq:unconditional-parameters}
\end{equation}
The number $\lambda$ is a positive rational with polynomially many bits.  Moreover, we have
\[
\lambda<\frac1{180},
\qquad
\lambda\le\frac{c_{\mathrm{force}}}{2(1-\Discount)},
\qquad
\lambda=\frac{\eta c_{\mathrm{force}}}{8}
\le\frac{\eta c_{\mathrm{force}}}{4},
\]
so \eqref{eq:lambda-choice} holds.
Suppose a polynomial-time algorithm outputs a non-perfect Markov $\varepsilon$-CCE $\Policy$ with
\[
\varepsilon<\frac{\kappa}{|\StateSpace|^2}, \qquad \text{where} \qquad \kappa
:=
\frac{(1-\Discount)c_{\mathrm{copy}}c_{\mathrm{force}}}{32}. \nonumber
\]
Since $\lambda<1$, we have
\begin{align}
\frac{(1-\Discount)\eta c_{\mathrm{copy}}}{1+\lambda^{-1}}
&=
\frac{(1-\Discount)c_{\mathrm{copy}}}{|\OutputStates|}
\frac{\lambda}{1+\lambda}
\nonumber\\
&>
\frac{(1-\Discount)c_{\mathrm{copy}}}{|\OutputStates|}
\frac{\lambda}{2}
\nonumber\\
&=
\frac{(1-\Discount)c_{\mathrm{copy}}c_{\mathrm{force}}}{16|\OutputStates|^2}
\nonumber\\
&=
\frac{2\kappa}{|\OutputStates|^2}
\ge
\frac{2\kappa}{|\StateSpace|^2}
>
\varepsilon.
\nonumber
\label{eq:unconditional-copy-threshold}
\end{align}
Thus Lemma \ref{lem:average-copy} gives
\[
\InitialDist(\Copy)<\frac{1}{|\OutputStates|}.
\]
Also, since $\kappa<c_{\mathrm{force}}/4$ and $|\OutputStates|\le |\StateSpace|$,
\begin{align}
\frac{\eta c_{\mathrm{force}}}{4}
&=
\frac{c_{\mathrm{force}}}{4|\OutputStates|}
>
\frac{\kappa}{|\OutputStates|}
\ge
\frac{\kappa}{|\StateSpace|}
\ge
\frac{\kappa}{|\StateSpace|^2}
>
\varepsilon.
\nonumber
\label{eq:unconditional-force-threshold}
\end{align}
Hence Lemma \ref{lem:average-forcing} gives
\[
\InitialDist(\Force)<\frac{1}{|\OutputStates|}.
\]

Because $\InitialDist$ is uniform on the $|\OutputStates|$ output states, every nonempty subset of $\OutputStates$ has mass at least $1/|\OutputStates|$.  Therefore, we have
\[
\Copy=\emptyset
\qquad \text{and} \qquad
\Force=\emptyset.
\]
Conditions~\hyperref[copy]{\gadgetC} and~\hyperref[controller-forcing]{\gadgetF} hold at every output state, so Proposition \ref{prop:common-decoding} implies that the assignment decoded from $\Policy$ is a valid solution of $I$.  This gives a polynomial-time reduction from \PureCircuit{} to computing a non-perfect Markov $\varepsilon$-CCE with $\varepsilon<\kappa/|\StateSpace|^2$.  Since \PureCircuit{} is \PPAD-hard, the theorem follows.
\end{proof}

\begin{remark}\label{rem:discount}
    \textit{In a similar spirit as \cite{daskalakis2023complexity}, our results both for the perfect and non-perfect notion continue to hold for any $\gamma \in [1/16,1)$. This is due to the simple observation that by adding an action-independent lazy self-loop in the transition model in the form of $\widetilde P(s'\mid s,a) = (1-\tau)\mathbbm{1}\{s'=s\}+\tau P(s'\mid s,a)$, where $\tau =\frac{\beta(1-\gamma)}{\gamma(1-\beta)}$, and $\beta=\frac1{16}$, the construction preserves all stationary Markov CCE incentive inequalities exactly, and therefore all hardness results continue to hold for every fixed discount factor $\gamma\in[1/16,1)$.} 
\end{remark}

\input{example}

\input{conclusion}

\subsection*{Acknowledgements}
Ioannis Panageas is supported by NSF grant CCF-2454115. This research was also supported in part by project MIS 5154714 of the National Recovery and Resilience Plan Greece 2.0 funded by the European Union under the NextGenerationEU Program. GF is supported in part by NSF Award CCF-2443068, ONR grant N00014-25-1-2296, and an AI2050 Early Career Fellowship.

\bibliographystyle{alpha}  
\bibliography{main}

\newpage

\input{appendix}

\end{document}

%% file: abstract.tex
We study the complexity of computing stationary Markov coarse correlated equilibria (CCE) in discounted \textit{single-controller stochastic (Markov) games} \cite{ParthasarathyRaghavan1981,FilarVrieze2012}, a fundamental subclass of stochastic games in which all players may affect rewards, but only one player controls the state transitions.
Prior work \cite{daskalakis2023complexity, sidford_complexity, HansenNie2025} established \PPAD-hardness for computing stationary Markov CCE in two-player general-sum stochastic games via turn-based constructions in which each state is controlled by a single player, with control alternating across states. This structure forces every Markov CCE to collapse to a Nash equilibrium (NE), so hardness for NE transfers immediately to CCE. It remained open whether hardness persists when a single player controls all transitions\textemdash a setting where no such collapse occurs. 
 
We resolve this question: computing an approximate stationary Markov CCE in two-player single-controller stochastic games is \PPAD-complete, even with a fixed discount factor and binary actions. For the perfect notion (equilibrium constraints at every state) this holds unconditionally at constant accuracy; for the non-perfect notion, we prove constant-accuracy hardness under the \PCP-for-\PPAD hypothesis \cite{babichenko2016can, deligkas2026fisher} and inverse-polynomial-accuracy hardness unconditionally. To the best of our knowledge, our result is the first to show hardness for computing CCE without relying on equilibrium collapse phenomena or other routes through Nash-like structure \cite{golowich-regret, anagnostides2023complexity, peng2024complexity}. Instead, we construct single-controller gadgets whose local incentive constraints force a solution of a \textsc{Pure-Circuit} instance even under strongly correlated stationary policies.


%% file: introduction.tex
\section{Introduction}

Since the seminal work of Shapley~\cite{Shapley1953}, Markov games, also known as stochastic games, have served as the multi-agent generalization of Markov decision processes. At each state of a Markov game, all players simultaneously choose actions, receive rewards, and induce a transition to the next state. These games provide a natural language for multi-agent reinforcement learning, dynamic competition, control, and economics~\cite{Littman1994,Basar1986,BasarOlsder1998}. 


To define an equilibrium notion in a Markov game, it is crucial to define (i) how strategies for the game are encoded; and (ii) what deviations from a putative equilibrium one considers.
A policy is \textit{Markovian} if its prescription depends only on the current state, rather than on the full history of play. 
Policies can also be \textit{nonstationary}, allowing this prescription to vary with time, or \textit{stationary}, using the same prescription whenever the same state is visited. 
For finite discounted stochastic games, stationary Markov Nash equilibria (NE) are known to exist  \cite{Shapley1953,fink1964equilibrium}, and they are also stable against arbitrary history-dependent deviations. 
However, these equilibria are computationally intractable due to the \PPAD-hardness result of NE in normal-form games (\textit{i.e.,} single-state Markov games) \cite{daskalakis2009complexity, chen2009settling}.
Stationary Markov coarse correlated equilibria (CCE) also exist as a superset of stationary Markov NE. 
These equilibria provide a \textit{natural solution concept}: 
they give a compact, time-homogeneous description of behavior, and their incentive constraints can be expressed entirely in terms of the same stationary policy.

In normal-form games, CCE are computationally tractable: exact CCE can be computed efficiently via linear programming, and approximate CCE arise naturally from no-regret learning \cite{Aumann1974,CesaBianchiLugosi2006}. 
In Markov games, however, tractability depends sharply on the policy class.
Nonstationary non-Markovian CCE 
can be computed efficiently by algorithms such as \textsc{V-learning} \cite{JLWY21,SMB21,MB21}, but the resulting policies are generally not Markovian. 
Nonstationary Markov CCE are also tractable via backward induction, or via decentralized learning algorithms \cite{daskalakis2023complexity, cui2023breaking}.
By contrast, as established in \cite{daskalakis2023complexity,sidford_complexity,HansenNie2025}, computing stationary Markov CCE in general-sum stochastic games is \PPAD-complete, even with two players, turn-based control, constant discount factor, and constant approximation parameter.

A central subclass of stochastic games not captured by the above negative results is that of \emph{single-controller Markov games}, a model which goes back to classical work on stochastic games \cite{ParthasarathyRaghavan1981,FilarVrieze2012}. 
In these games, rewards may depend on the joint action profile, but the transition kernel depends only on the action of one distinguished player. 

Over the years, single-controller Markov games have been studied in communications, control, machine learning, and economics \cite{Basar1986,Eldosouky2016,FilarVrieze2012, Guan2016}. 
More recently, single-controller Markov games have emerged as a tractable frontier for learning and equilibrium computation under specific assumptions. 
In particular, fictitious play converges to stationary Markov NE in two-player zero-sum and identical-interest single-controller Markov games \cite{ozgaglar_controller,sc_icml};
single-controller structure allows efficient computation of nonstationary Markov NE in polymatrix zero-sum Markov games through an \textit{equilibrium-collapse phenomenon whereby the product marginals of CCE induce NE} \cite{kalogiannis_polymatrix};
and optimistic policy gradient converges to stationary Markov NE whenever such an equilibrium-collapse property holds \cite{panageasSingleController}. 
We defer a detailed literature review to Appendix \ref{appendix:related_work}.

However, beyond these assumptions, the computational landscape of finding stationary Markov CCE in single-controller stochastic games remains elusive, leading to the following
natural question:
\begin{quote}
    \textbf{Question 1: }\textit{What is the computational complexity of stationary Markov CCE in stochastic games with a single controller?}
\end{quote}
In addition, the reductions used in the hardness results of general-sum two-player Markov games \cite{daskalakis2023complexity,sidford_complexity,HansenNie2025} rely on \textit{turn-based} constructions in which each state is controlled by a single player, with control alternating across states, so that every player controls both rewards and transitions at some states. 
This structure effectively collapses Markov CCE to NE, thereby allowing hardness for NE to immediately transfer to CCE. 
That said, equilibrium collapse is not generally guaranteed in single-controller Markov games, and as discussed above, under equilibrium collapse finding stationary Markovian NE is tractable \cite{panageasSingleController}. 
This leaves open the following fundamental question:
\begin{quote}
    \textbf{Question 2: } \textit{Is the hardness of stationary Markov CCE an artifact of alternating transition control, or does it persist even when the state dynamics have single-controller structure?}
\end{quote}

\subsection{Contribution and significance}

In this paper, we resolve the aforementioned questions by showing that computing a stationary Markov $\Approx$-CCE in two-player stochastic games with a single controller is \PPAD-complete.
In particular, hardness is shown for both the \textit{perfect} notion (where the equilibrium conditions hold at every state) and the \textit{non-perfect} notion (where the equilibrium conditions hold in expectation under the law of the initial state distribution).
We prove the following main theorems:

\begin{theorem}[Abridged version of Theorem \ref{thm:perfect}]\label{thm: perfect_intro}
Computing a perfect stationary Markov $\Approx$-CCE in a discounted two-player single-controller Markov game is \PPAD-hard, even when the discount factor $\gamma$ is any fixed constant in  $[\nicefrac{1}{16}, 1),$\footnote{Please see also Remark \ref{rem:discount}.\label{fn:remark}} all rewards lie in $[0,1]$, and accuracy $\Approx>0$ is less than an absolute constant.
\end{theorem}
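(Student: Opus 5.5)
The abridged statement adds one thing to Theorem~\ref{thm:perfect}: the discount factor may be any fixed $\gamma\in[1/16,1)$, not just $\gamma=1/16$. My plan is to reduce to Theorem~\ref{thm:perfect} by the lazy self-loop transformation of Remark~\ref{rem:discount}. That theorem already gives constant accuracy, rewards in $[0,1]$, two players, binary actions and single-controller structure for $\beta:=1/16$. So it suffices to map $G_0(I)$ with discount $\beta$ to a game $\widetilde G$ with discount $\gamma$ such that every perfect Markov $\varepsilon$-CCE of $\widetilde G$ is a perfect Markov $c\varepsilon$-CCE of $G_0(I)$, for a constant $c$ depending only on $\gamma$.

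\textbf{Construction.} Define $\widetilde P(s'\mid s,a)=(1-\tau)\ind\{s'=s\}+\tau P(s'\mid s,a)$ with $\tau=\beta(1-\gamma)/(\gamma(1-\beta))$. Since $\gamma\ge\beta$, we have $\tau\in(0,1]$. Keep the same states, actions and rewards, and use discount $\gamma$. The self-loop is independent of actions, so $\widetilde P$ still depends only on $a_{\Controller}$ and the game remains single-controller. All quantities are rational with polynomially many bits.

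\textbf{Value rescaling.} Fix any stationary joint policy $\sigma$, which may be a deviated profile $\DeviationProfile{i}$. The unnormalized value $\widetilde U=\sum_t\gamma^t r_t$ at $s$ satisfies
\[
\widetilde U(s)=r^\sigma(s)+\gamma(1-\tau)\widetilde U(s)+\gamma\tau\,P^\sigma\widetilde U(s).
\]
Dividing by $1-\gamma(1-\tau)$ gives a Bellman equation with discount $\gamma\tau/(1-\gamma+\gamma\tau)$. The choice of $\tau$ makes this effective discount exactly $\beta$. Hence $\widetilde U=U_\beta/(1-\gamma+\gamma\tau)$ state by state, where $U_\beta$ is the unnormalized value of $\sigma$ in the original game. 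For normalized values this gives $\widetilde V_i^\sigma(s)=\kappa\,V_i^\sigma(s)$ with
\[
\kappa=\frac{1-\gamma}{(1-\beta)(1-\gamma+\gamma\tau)},
\]
which is a positive constant for fixed $\gamma$. Because the identity holds for every stationary profile, including each one-player deviation, every deviation gain in $\widetilde G$ equals $\kappa$ times the corresponding gain in $G_0(I)$. So a perfect Markov $\varepsilon$-CCE of $\widetilde G$ is a perfect Markov $(\varepsilon/\kappa)$-CCE of $G_0(I)$. This rescaling is the step I would double-check algebraically; it is the main potential obstacle.

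\textbf{Conclusion.} By Theorem~\ref{thm:perfect}, the decoded assignment solves $I$ whenever $\varepsilon<\kappa\varepsilon_0$, where $\varepsilon_0=1/65536$. For each fixed $\gamma$ this threshold is a positive constant, which yields \PPAD-hardness at constant accuracy.

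\textbf{Membership in \PPAD.} For completeness of the \PPAD-completeness claim, I would cite the existence of stationary Nash equilibria, which are in particular perfect CCE, and the known \PPAD{} membership of approximate stationary NE in discounted games from the cited prior work. The hardness direction is the substance of the statement.
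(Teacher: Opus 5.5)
Your proposal is correct and follows the paper's own route: Theorem~\ref{thm:perfect} at $\gamma=1/16$, extended to every $\gamma\in[1/16,1)$ by the lazy self-loop transformation of Remark~\ref{rem:discount}, and your rescaling algebra checks out. Your constant $\kappa$ in fact equals exactly $1$, because $1-\gamma+\gamma\tau=(1-\gamma)/(1-\beta)$. So normalized values, and hence all perfect CCE inequalities, are preserved exactly (as the remark asserts), and no rescaling of $\varepsilon$ is needed.
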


\begin{theorem}[Abridged version of Theorem \ref{thm:nonperfect}]\label{thm: non_perfect_intro}
Assume that \PCP-for-\PPAD for \textsc{Pure-Circuit} holds. Then, computing a non-perfect stationary Markov $\Approx$-CCE in a discounted two-player single-controller Markov game is \PPAD-hard, even when the discount factor $\gamma$ is any fixed constant in $[\nicefrac{1}{16}, 1)$,\footref{fn:remark} all rewards lie in $[0,1]$, and accuracy $\Approx>0$ is less than an absolute constant.
\end{theorem}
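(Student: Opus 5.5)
The plan is to derive this abridged statement from the full non-perfect result, Theorem~\ref{thm:nonperfect}, and then extend the discount factor through Remark~\ref{rem:discount}. For $\Discount=1/16$, I would reduce from the gap version of \PureCircuit{} guaranteed by \Cref{conj:pcp-ppad}, with $\delta=\delta_{\mathrm{pcp}}$ and $\eta=\delta/4$.

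\textbf{Construction and parameters.} I build $G_\lambda(I)$ from Section~\ref{sec:construction}, with $\lambda$ satisfying \eqref{eq:lambda-choice}. The initial distribution $\InitialDist$ is uniform on the output states $\OutputStates$. The accuracy threshold $\varepsilon'$ is chosen by \eqref{eq:rhoNP-choice}. Since $\delta_{\mathrm{pcp}}$ is an absolute constant, so are $\eta$, $\lambda$ and $\varepsilon'$, and the game has polynomial bit-size.

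\textbf{Bounding the failure sets.} Given any non-perfect Markov $\varepsilon$-CCE with $\varepsilon<\varepsilon'$, I bound the two failure sets separately.
\begin{itemize}
\item The Average copy lemma (Lemma~\ref{lem:average-copy}) gives $\InitialDist(\Copy)<\eta$. It rests on the decorrelation identity: the reward player's best-response deviation and $\lambda^{-1}$ times the controller's decorrelation deviation together charge every copy failure at least $c_{\mathrm{copy}}$.
\item The Average forcing lemma (Lemma~\ref{lem:average-forcing}) gives $\InitialDist(\Force)<\eta$. It uses the single controller deviation $\Policy^F_{\Controller}$ together with the performance-difference lemma. The $\lambda$-losses outside $\Force$ are absorbed because $\lambda\le \eta c_{\mathrm{force}}/4$.
\end{itemize}

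\textbf{Decoding.} A union bound gives $\InitialDist(\Bad)<\delta/2$. Proposition~\ref{prop:bad-gates} then shows that the decoded assignment \eqref{eq:decode} satisfies a $(1-\delta)$ fraction of the gates. This is a solution to $\delta_{\mathrm{pcp}}$-\PureCircuit{}, which gives \PPAD-hardness at $\Discount=1/16$.

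\textbf{Extending to $\Discount\in[1/16,1)$.} I would use the lazy self-loop transformation $\widetilde P=(1-\tau)\ind\{s'=s\}+\tau P$ with $\tau=\beta(1-\Discount)/(\Discount(1-\beta))$ and $\beta=1/16$; this lies in $(0,1]$ exactly when $\Discount\ge 1/16$. The point to verify is that, for every stationary profile, the normalized values under $(\widetilde P,\Discount)$ equal those under $(P,\beta)$.

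I would check this via the Bellman equation. Under $\widetilde P$,
\[
V=(1-\Discount)r+\Discount(1-\tau)V+\Discount\tau PV,
\]
which rearranges to
\[
\bigl(1-\Discount(1-\tau)\bigr)V=(1-\Discount)r+\Discount\tau PV.
\]
Dividing through, the value satisfies the Bellman equation for discount $\beta'=\Discount\tau/(1-\Discount+\Discount\tau)$ with normalization $1-\beta'$. This gives $\beta'=\beta$ by the choice of $\tau$.

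The lazy loop does not depend on actions, so single-controller structure is preserved. Since values coincide policy by policy, every CCE inequality, perfect or non-perfect, transfers exactly.

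\textbf{Main obstacle.} The only delicate step is this value-equivalence computation, in particular confirming that the normalization factors match so that gains are not rescaled. Everything else is a direct invocation of results already established.
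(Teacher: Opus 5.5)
Your proposal is correct and follows the paper's route essentially verbatim: the proof of Theorem~\ref{thm:nonperfect} with $\eta=\delta_{\mathrm{pcp}}/4$, Lemmas~\ref{lem:average-copy} and~\ref{lem:average-forcing}, a union bound, and Proposition~\ref{prop:bad-gates}, followed by the lazy self-loop of Remark~\ref{rem:discount} to reach every $\Discount\in[1/16,1)$. Your Bellman-equation check is also correct, and the paper's remark only asserts it. It shows that $(\widetilde P,\Discount)$ and $(P,1/16)$ give identical normalized values for every stationary joint policy, including the deviated profiles, and that $\tau\in(0,1]$ exactly when $\Discount\ge 1/16$.
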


Similarly to \cite{daskalakis2023complexity}, we also provide an unconditional hardness result for the non-perfect notion (stated formally in Theorem \ref{thm:nonperfect-unconditional}), where the hardness holds for inverse polynomial accuracy parameter instead of constant.
Moreover, membership in \PPAD directly follows from the \PPAD-membership of approximate stationary Markov perfect NE in finite discounted general-sum stochastic games due to \cite{deng2023complexity}.
Since every stationary Markov perfect NE is a stationary Markov perfect CCE, the corresponding CCE search problem is in \PPAD.


Remarkably, our results imply that the \PPAD-hardness of stationary Markov CCE is robust to the single-controller restriction. 
The computational obstruction is more basic than previously understood:
the \PPAD-hardness results of stationary Markov CCE shown in prior work \cite{daskalakis2023complexity, HansenNie2025, sidford_complexity} relied on turn-based games which enforce equilibrium collapse phenomena, whereby hardness for NE immediately translated to CCE.
Our result shows that neither the switching-controller structure nor the equilibrium collapse property are what actually drives the hardness of stationary Markov CCE; it is enough that one player solely controls the transitions while both players affect rewards.\footnote{The joint control of rewards is a necessary condition for hardness in single-controller Markov games, since the restricted setting where a single player controls the transitions, and only the other player(s) control(s) the rewards, is in \P. This result is quite straightforward: the reward-controlling players compute a CCE strategy per-state, while the transition-controlling player solves a single-agent MDP problem given the CCE profile of the other players.}
In fact, it is impossible for a single-controller hard instance to have equilibrium-collapse because this would contradict the positive result of \cite{panageasSingleController}.
This renders the collapse property a tight computational barrier characterizing the complexity of stationary Markov CCE in single-controller stochastic games.

Based on the above, to the best of our knowledge, our result is the first to show \PPAD-hardness for computing CCE without relying on equilibrium collapse phenomena or reductions from hard Nash instances \cite{golowich-regret,anagnostides2023complexity, peng2024complexity}.
Instead, we construct single-controller gadgets whose local incentive constraints force the solution of a \textsc{Pure-Circuit} \cite{purecircuit} instance even under strongly correlated stationary policies (see also our technical overview in Section \ref{sec: technical_overview}).
To elaborate on the above, Section~\ref{sec:example} gives a formal example of a \textsc{Pure-Circuit} instance and a stationary Markov CCE of the corresponding reduced single-controller stochastic game. 
The decoded policy yields a valid solution of the circuit instance, but the CCE does not collapse to an approximate NE of the reduced game.

Furthermore, while no-regret learning has been shown to be intractable in several general Markov-game settings~\cite{golowich-regret,abbasi2013online,bai2020near}, the single-controller structure makes no-regret learning efficiently implementable.\footnote{This is due to the simple observation that only the transition-controlling player's stationary Markovian policy $\Policy_{\Controller}$ affects the induced discounted occupancy measure
$d(s,a_{\Controller})=(1-\Discount)\sum_{t\ge0}\Discount^t
\Pr^{\Policy_{\Controller}}[s_t=s,a_{\Controller,t}=a_{\Controller}\mid s_0\sim\InitialDist]$.
This is the state-action analogue of the visitation distribution defined in the preliminaries (Section \ref{sec: prelims}). The controller's utility is linear in $d$, and once $d$ is fixed, each non-controller's utility is linear in its own stationary policy. Hence, one can implement no-regret learning by letting the controller optimize over the occupancy-measure polytope~\cite{kakadeBook}, while the remaining players run no-regret algorithms over their stationary Markovian policy spaces.}
The empirical distribution of play is then a CCE of the normal-form game whose actions are stationary Markovian policies; equivalently, it is a sparse distribution over pure stationary Markovian policy profiles sampled by an initial common seed \cite{golowich-regret}.  
Such a CCE is stationary, since every sampled policy profile is stationary, but it is not itself a Markov CCE:
the initial seed can correlate behavior across different states and across different visits to the same state.  
Our result therefore identifies an interesting computational frontier in single-controller stochastic games.  
If one allows this slightly richer, seed-dependent policy class, no-regret learning can efficiently produce a CCE.  
If instead one insists that the correlated recommendation rule itself be stationary and Markovian, then computing a CCE is \PPAD-hard.

\input{attempts}

\input{technical_overview}

%% file: attempts.tex
\subsection{Where attempts for a positive result break down}

One may have expected that the single-controller structure could give a positive result for stationary Markov CCE, escaping the \PPAD-hardness appearing in switching-controller stochastic games. A natural first approach is to extend the classical framework of Hart and Schmeidler~\cite{hart1989existence} for normal-form games. This framework shows that the existence of (Coarse) Correlated Equilibria can be derived from a duality/minimax principle rather than from an explicit appeal to Brouwer's fixed point theorem.

The corresponding idea for single-controller stochastic games would be to exploit the fact that only one player controls the transition kernel. Since the non-controller's strategic role appears, at least superficially, through linear incentive constraints, one may try to view the controller together with the correlation device as the primal player of the minimax framework, and to introduce the controller's deviator as the dual adversary. This leads to a formulation reminiscent of a two-player zero-sum stochastic game: the correlating side chooses stationary recommendations and transition-inducing actions, while the deviator searches for a profitable deviation from the prescribed Markovian recommendation rule.
At first sight, this approach seems promising. Although the resulting problem is not convex-concave in any obvious static sense, one might hope to recover strong duality through the dynamic programming structure of discounted zero-sum stochastic games. Shapley's theorem for such games is based on the contraction of the discounted dynamic programming operator~\cite{Shapley1953}, and therefore ultimately on a Banach-type fixed point argument. If this program had succeeded with a well-conditioned contraction, this would have placed the problem in \CLS \cite{daskalakis2011continuous,daskalakis2018converse}.


The obstacle is that such a construction has constraints that are intrinsically coupled across states, which is not allowed in standard two-player zero-sum stochastic games. Consequently, the minimax formulation does not decompose into independent Hart--Schmeidler-type conditions state by state. The coupled constraints create extraneous solutions: spurious stationary points of the relaxed formulation need not correspond to stationary Markov CCE of the original stochastic game.
Apart from contraction arguments, gradient-domination/Polyak--{\L}ojasiewicz-type conditions are also not satisfied, which would otherwise allow one to get efficient convergence to the solution via policy-gradient-like methods.

Our hardness result explains why these attempts were too optimistic. 
The difficulty is not merely a technical inconvenience in the dual formulation; the stationary Markov CCE problem retains an unavoidable Brouwer-type fixed-point structure.
Thus, unlike the normal-form (Coarse) Correlated Equilibrium setting of Hart and Schmeidler, the single-controller stochastic-game setting does not seem to admit a straightforward existence or computation proof based only on strong duality, minimax contraction, or Banach fixed point arguments. 
The \PPAD-hardness result shows that the Brouwer-type nature of the problem is intrinsic: any approach proving existence of stationary Markov CCE in single-controller settings via a contraction/duality framework is unlikely to exist as it would have unexpected complexity-theoretic consequences (since $\CLS = \PPAD \cap \PLS$ \cite{fearnley2022complexity}, it would imply $\PPAD \subseteq \PLS$).

%% file: technical_overview.tex
\subsection{Technical overview} \label{sec: technical_overview}
The known hardness reductions \cite{daskalakis2023complexity,HansenNie2025} for stationary Markov CCE in two-player stochastic games (with binary actions) start from a circuit problem, such as \GeneralizedCircuit{} \cite{chen2009settling} or \PureCircuit{} \cite{purecircuit}. Such a circuit is a collection of variables connected by gates, where each gate imposes a local constraint on the values of its input and output variables. 
The goal is to find an assignment of values to all variables that approximately satisfies every gate. 
The common reduction recipe is to represent each circuit variable by a behavioral signal in the stationary policy, implement each gate by a local game gadget whose incentives enforce the corresponding constraint, and then glue the gadgets so that the output signal of one gate is used as the input signal of another. 

However, previous reductions use turn-based Markov games, which are well-suited for the application of the above recipe:
control alternation allows every state to have its own active controlling player, so the probability with which that player plays a specific action (e.g., action 1) provides a canonical signal, and the bipartite structure of the circuit can be used to glue gadgets while avoiding conflicts between the rewards assigned to the same player. 
Moreover, at a turn-based game the reward and transition depend only on the active player's action, thus satisfying equilibrium collapse where every CCE marginalizes to a NE. 
This allows such a reduction to construct a correlated policy which is a product distribution that affects the game only through the active player's policy. 
By contrast, the single-controller restriction removes precisely these simplifications. 
Since the same player controls the transitions for all states, gate inputs and outputs cannot be separated by changing state ownership. 
Furthermore, the equilibrium collapse property does not hold, so the equilibrium must be a genuinely correlated joint Markov policy.
This means that there is no a priori unique signal associated with a variable state, as the controller marginal, the other-player marginal, and the joint matching probability may all encode different information.
Thus, it is highly non-trivial how to encode the circuit assignments through such correlated policies.
Finally, CCE deviations in the single-controller setting are intrinsically less powerful than the correlated policies they are tested against: a deviating player must switch to an independent stationary policy against the other player's marginal \textit{for all states}, so the deviation cannot preserve any value that relies on correlation between the players' recommendations. 
Consequently, the task of implementing the gates becomes much more challenging.


Our reduction follows the same circuit-based recipe but introduces new machinery to tackle the challenges arising from the single-controller restriction. 
We reduce the \PPAD-complete problem \textsc{Pure-Circuit} to the problem of computing a stationary Markov $\Approx$-CCE in a two-player single-controller Markov game with binary actions. 
At each variable state, the controller's marginal action is the quantity that determines the next-state transition. 
We define the \textit{non-controller's marginal action} to be the quantity from which the Boolean value of the corresponding circuit node is decoded. 
The core task is therefore to synchronize these two marginals using only coarse-deviation constraints, without forcing the joint policy to become product. 
This is achieved through two {single-controller gadget conditions}: the \textit{copy gadget}~\hyperref[copy]{\gadgetC} and the \textit{controller-forcing gadget}~\hyperref[controller-forcing]{\gadgetF}.
The {copy gadget} forces an almost-Boolean controller marginal to be copied into the non-controller's marginal. 
The {controller-forcing gadget} uses gaps between the controller's two continuation values\footnote{The continuation value of the controller for selecting action $b$ is the expected value of the controller at the next state conditioned on the controller selecting action $b$ at the current state. It is formally defined in \eqref{eq:W-def}.} to force the controller marginal itself to be almost Boolean.
Once these two conditions hold, the decoded circuit assignment implements the NOT, OR, and PURIFY gates of the \textsc{Pure-Circuit} instance (see Proposition \ref{prop:gate-soundness}).

Crucially, unlike turn-based reductions, transition control is assigned to a single player throughout the game, rather than changing from state to state. Thus the construction cannot give input and output gate nodes different strategic roles through state ownership or gate-specific reward tables. Instead, all variable states use the same reward functions for both players, while the gate logic is implemented through the controller's transition targets and the sink values.
In Figure \ref{fig:gates}, we provide a graphical illustration of our construction along with the defined reward functions of both players (formally defined in \eqref{eq:lambda-reward}).
More specifically, the non-controller player receives reward $1$ exactly when its action matches the controller's action; this is what makes a failed copy condition expose a profitable coarse deviation for the non-controller. 
The controller receives a reward determined primarily by the non-controller's action, so that its own immediate action does not directly affect its immediate reward. 
Consequently, the controller's incentives are driven by continuation-value differences rather than by immediate payoffs. 
Finally, the absorbing sink states give the controller fixed continuation values, which are chosen so that, in each NOT, OR, and PURIFY gate, the controller is incentivized to select the output action prescribed by the corresponding gate constraint.

To make our construction work for both perfect and non-perfect notions, we parameterize the controller's variable-state reward by a constant $\lambda \ge 0$. 
For the perfect Markov CCE hardness result, we set $\lambda=0$, so the controller's immediate reward at variable states is independent of its own action. Moreover, because the perfect CCE constraints hold from every initial state, each gadget condition can be proved state by state: a violation at a single variable state directly yields a profitable deviation from that state.
The non-controller player cannot affect transitions, so if its marginal fails to copy an almost-deterministic controller marginal, a one-state coarse deviation would yield an immediate matching-payoff gain that dominates any possible discounted continuation loss. 
This gives the copy gadget condition for the perfect case (see Lemma \ref{lem:perfect-copy}).
For the controller, its immediate reward at variable states is independent of its own action. Hence its incentive between the two actions at an output state comes only from their continuation values. 
A continuation-value gap therefore creates a local action advantage (see Lemma \ref{lem:best-adv-perfect}); the perfect CCE inequalities, together with a performance-difference argument \cite{agarwal2021theory}, imply that such advantages cannot remain large on the action that the controller plays with substantial probability. 
This yields the controller-forcing condition (see Lemma \ref{lem:perfect-force}) and, consequently, an unconditional constant-accuracy hardness result for perfect Markov CCE (see Theorem \ref{thm:perfect}).

The non-perfect case is more subtle because the equilibrium inequalities hold only in expectation from the initial state distribution. 
In the forcing argument, the controller deviation can be chosen so that its local advantage is nonnegative at every state: it plays the continuation-better action on forcing-error states and preserves the original controller marginal elsewhere. Hence the positive gains from forcing errors cannot be canceled by losses at other states. The copy argument is different. The reward player's best-matching deviation may gain on copy-error states but lose on well-correlated non-error states, so a gain--loss cancellation can occur when the non-perfect CCE inequality is averaged from the initial distribution. The $\lambda$-bonus supplies a controller decorrelation deviation whose gain cancels the dependence on the actual joint matching probability, replacing it by a nonnegative marginal witness of copy failure.
The bonus is chosen small enough to preserve the intended value ordering inside each gate, while ensuring that copy failures cannot be averaged away through correlation.
Intuitively, if the copy condition fails on many visited states, then either the non-controller can profit by switching to a better matching action, or correlation is masking this failure; in the latter case, the controller can profit from the decorrelation bonus by resampling its own action according to its marginal.
Lemma~\ref{lem:decorrelation-identity} formalizes the above incentives, and Lemma~\ref{lem:average-copy} turns it into an averaged copy condition.
A separate averaged controller-forcing argument (see Lemma \ref{lem:average-forcing}) then shows that a large mass of continuation-gap violations gives the controller a profitable stationary deviation. 
Finally, using the above average gadget arguments, Proposition~\ref{prop:bad-gates} converts small mass of copy and forcing failures into a bound on the number of unsatisfied \textsc{Pure-Circuit} gates.
This gives the conditional constant-accuracy non-perfect hardness theorem (Theorem~\ref{thm:nonperfect}) under \PCP-for-\PPAD for \PureCircuit{} \cite{deligkas2026fisher} (see Conjecture~\ref{conj:pcp-ppad}), and the same averaging framework yields the unconditional inverse-polynomial theorem (Theorem~\ref{thm:nonperfect-unconditional}).

%% file: example.tex
\section{Example: Pure-Circuit solution with no equilibrium collapse}
\label{sec:example}

\begin{figure}[h]
\centering
\begin{tikzpicture}[
    scale=.94,
    transform shape,
    x=1cm,
    y=1cm,
    every node/.style={font=\small},
    examplesource/.style={
        draw=ownerA,
        very thick,
        rounded corners=2pt,
        minimum width=18mm,
        minimum height=10mm,
        align=center,
        fill=white,
        inner sep=2pt
    },
    exampleoutput/.style={
        draw=ownerA,
        very thick,
        rounded corners=2pt,
        minimum width=18mm,
        minimum height=10mm,
        align=center,
        fill=white,
        inner sep=2pt
    },
    examplegate/.style={
        draw=black,
        rounded corners=2pt,
        minimum width=18mm,
        minimum height=8mm,
        align=center,
        fill=white,
        inner sep=2pt
    },
    examplewire/.style={-{Latex[length=2mm]}, very thick}
]

\node[examplesource] (a) at (0,3.0)
  {\begin{tabular}{c}$a$\\$\Decoded{a}=0$\end{tabular}};
\node[examplegate] (gnot) at (2.3,3.0) {\NOT};
\node[exampleoutput] (d) at (4.6,3.0)
  {\begin{tabular}{c}$d$\\$\Decoded{d}=1$\end{tabular}};

\node[examplesource] (b) at (4.6,0.0)
  {\begin{tabular}{c}$b$\\$\Decoded{b}=\bot$\end{tabular}};
\node[examplegate] (gor) at (7.2,1.5) {\OR};
\node[exampleoutput] (e) at (9.8,1.5)
  {\begin{tabular}{c}$e$\\$\Decoded{e}=1$\end{tabular}};

\node[examplegate,minimum width=23mm] (gpur) at (12.4,1.5) {\PURIFY};
\node[exampleoutput] (hzero) at (15.1,2.7)
  {\begin{tabular}{c}$h_0$\\$\Decoded{h_0}=1$\end{tabular}};
\node[exampleoutput] (hone) at (15.1,0.3)
  {\begin{tabular}{c}$h_1$\\$\Decoded{h_1}=1$\end{tabular}};

\draw[examplewire] (a) -- (gnot);
\draw[examplewire] (gnot) -- (d);
\draw[examplewire] (d.east) to[out=0,in=155] (gor.north west);
\draw[examplewire] (b.east) to[out=0,in=205] (gor.south west);
\draw[examplewire] (gor) -- (e);
\draw[examplewire] (e) -- (gpur);
\draw[examplewire] (gpur.north east) to[out=20,in=180] (hzero.west);
\draw[examplewire] (gpur.south east) to[out=-20,in=180] (hone.west);

\end{tikzpicture}
\caption{The three-gate \PureCircuit{} instance in
\eqref{eq:small-example-circuit}.  Red rounded rectangles are circuit
variables.  The labels show the decoded assignment induced by the policy below.}
\label{fig:small-mixed-cce-circuit}
\end{figure}

We give a small instance of the construction that illustrates two features of stationary Markov CCE.  First, the joint policy may remain genuinely correlated at individual states; the equilibrium constraints do not force it to collapse to the product of its one-player marginals.  Second, the same stationary policy is an exact perfect Markov CCE for $G_0(I)$ and, for positive $\lambda$, an exact non-perfect Markov CCE from the output-supported initial distribution.

\paragraph{The circuit and the Markov-game instance.}
Consider the connected three-gate circuit
\begin{equation}
d=\NOT(a), \qquad e=\OR(d,b), \qquad (h_0,h_1)=\PURIFY(e).
\label{eq:small-example-circuit}
\end{equation}
Its node set and output-state set are
\begin{equation}
\NodeSet=\{a,b,d,e,h_0,h_1\}, \qquad
\OutputStates=\{\State{d},\State{e},\State{h_0},\State{h_1}\}.
\label{eq:small-example-output-set}
\end{equation}
Figure~\ref{fig:small-mixed-cce-circuit} displays the circuit in a left-to-right layout. 
The gate boxes separate converging and diverging wires, so no two arrows overlap.

We fix $\Discount=\FixedDiscount$ and construct the two-player game $G_\lambda(I)$ from Section~\ref{sec:construction}. 
Thus the rewards at every variable state are those in \eqref{eq:lambda-reward}. 
The source states $\State{a}$ and $\State{b}$ are not outputs of gates, so both controller actions at either state lead to $\Sink{0}$.  The remaining transitions are shown in Figure~\ref{fig:small-mixed-cce-markov-game}.

\begin{figure}[t]
\centering
\begin{tikzpicture}[
    x=1cm,
    y=1cm,
    every node/.style={font=\small},
    exampleoutputstate/.style={
        draw=ownerA,
        very thick,
        rounded corners=2pt,
        minimum width=18mm,
        minimum height=8mm,
        align=center,
        fill=white
    },
    examplesourcestate/.style={
        draw=black,
        fill=ownerBfill,
        very thick,
        circle,
        minimum size=10mm,
        align=center
    },
    examplesink/.style={
        draw=black,
        fill=sinkfill,
        rounded corners=2pt,
        minimum width=20mm,
        minimum height=8mm,
        align=center
    },
    examplezeroedge/.style={-{Latex[length=2mm]}, very thick, actionzero},
    exampleoneedge/.style={-{Latex[length=2mm]}, very thick, actionone},
    examplebothedge/.style={-{Latex[length=2mm]}, thick, dotted},
    exampleprob/.style={font=\scriptsize, fill=white, inner sep=1pt, text=black},
    gadgetlabel/.style={
        font=\scriptsize\bfseries,
        rounded corners=2pt,
        fill=white,
        inner sep=2pt
    }
]


\filldraw[
    rounded corners=7pt,
    draw=Purple!80!black,
    line width=1.1pt,
    fill=Purple!22,
    fill opacity=.24
] (-7.20,6.70) rectangle (7.20,3.45);

\filldraw[
    rounded corners=7pt,
    draw=DarkGreen!80!black,
    line width=1.1pt,
    fill=DarkGreen!22,
    fill opacity=.22
] (-6.55,4.35) rectangle (4.10,1.25);

\filldraw[
    rounded corners=7pt,
    draw=LightBlue!80!black,
    line width=1.1pt,
    fill=LightBlue!30,
    fill opacity=.24
] (-6.35,2.35) rectangle (-1.15,-0.75);


\node[exampleoutputstate] (hzero) at (-3.4,5.95) {$\State{h_0}$};
\node[exampleoutputstate] (hone)  at ( 3.4,5.95) {$\State{h_1}$};
\node[examplesink] (zhzero) at (-6.0,5.25) {$\Sink{89/128}$};
\node[examplesink] (zhone)  at ( 6.0,5.25) {$\Sink{39/128}$};

\node[exampleoutputstate] (e) at (0,3.75) {$\State{e}$};
\node[examplesink] (zor) at (-5.0,2.9) {$\Sink{21/64}$};

\node[exampleoutputstate] (d) at (-2.6,1.75) {$\State{d}$};
\node[examplesourcestate] (b) at (2.6,1.95) {$\State{b}$};
\node[examplesink] (zhalf) at (-5.0,0.65) {$\Sink{1/2}$};

\node[examplesourcestate] (a) at (-2.6,-0.00) {$\State{a}$};
\node[examplesink] (zzero) at (0,-1.85) {$\Sink{0}$};


\draw[examplezeroedge] (hzero.west) to[out=200,in=35] (zhzero.north east);
\draw[exampleoneedge] (hzero.south east) to[out=-35,in=155] (e.north west);
\draw[examplezeroedge] (hone.east) to[out=-20,in=145] (zhone.north west);
\draw[exampleoneedge] (hone.south west) to[out=-145,in=25] (e.north east);

\draw[examplezeroedge] (e.west) to[out=190,in=25] (zor.north east);
\draw[exampleoneedge] (e.south west) to[out=-135,in=70]
  node[exampleprob,pos=.55,left] {$1/2$} (d.north);
\draw[exampleoneedge] (e.south east) to[out=-45,in=110]
  node[exampleprob,pos=.55,right] {$1/2$} (b.north);

\draw[examplezeroedge] (d.south) -- (a.north);
\draw[exampleoneedge] (d.west) to[out=180,in=20] (zhalf.east);

\draw[examplebothedge] (a.south east) to[out=-45,in=160]
  node[exampleprob,pos=.55,below left] {$0,1$} (zzero.north west);
\draw[examplebothedge] (b.south west) to[out=-135,in=20]
  node[exampleprob,pos=.55,below right] {$0,1$} (zzero.north east);


\node[gadgetlabel, text=Purple!80!black] at (0,6.38) {\PURIFY{} GATE};
\node[gadgetlabel, text=DarkGreen!80!black] at (0,2.55) {\OR{} GATE};
\node[gadgetlabel, text=LightBlue!80!black] at (-5.00,-0.35) {\NOT{} GATE};

\end{tikzpicture}
\caption{The transition graph of the example.  Orange arrows correspond to controller action $0$, blue arrows to controller action $1$, and dotted arrows indicate that both controller actions lead to the same sink.  The translucent boxes highlight the \NOT{}, \OR{}, and \PURIFY{} gates.  The red states are precisely the output states in \eqref{eq:small-example-output-set}.}
\label{fig:small-mixed-cce-markov-game}
\end{figure}

\paragraph{The stationary correlated policy.}
For each variable state $\State{u}$, we write
\begin{equation}
\Policy_u(i,j)
:=
\Policy(\Action{\Controller}=i,\Action{\RewardPlayer}=j\mid\State{u}). \nonumber
\label{eq:small-example-policy-entry}
\end{equation}
At the two source states, we define
\begin{equation}
\begin{array}{c|cccc|ccc}
 u
 & \Policy_u(0,0)
 & \Policy_u(0,1)
 & \Policy_u(1,0)
 & \Policy_u(1,1)
 & \RewMarg{u}
 & \CtrlMarg{u}
 & \MatchProb{u}
\\
\hline
 a & 4/5 & 0   & 0   & 1/5 & 1/5 & 1/5 & 1   \\
 b & 1/4 & 1/4 & 1/4 & 1/4 & 1/2 & 1/2 & 1/2
\end{array}
\label{eq:small-example-source-policy}
\end{equation}
and at every output state let
\begin{equation}
\Policy((1,1)\mid\State{d})
=
\Policy((1,1)\mid\State{e})
=
\Policy((1,1)\mid\State{h_0})
=
\Policy((1,1)\mid\State{h_1})=1.
\label{eq:small-example-output-policy}
\end{equation}
At every sink, $\Policy$ plays the unique dummy action.  The policy is genuinely correlated at $\State{a}$, where it mixes between $(0,0)$ and $(1,1)$.  At $\State{b}$, however, it is the product of two uniform marginals. This product row is what prevents an on-path controller decorrelation gain when
$\lambda>0$.

By \eqref{eq:decode}, the signal marginals in
\eqref{eq:small-example-source-policy}--\eqref{eq:small-example-output-policy} decode as
\begin{equation}
\begin{array}{c|cccccc}
 u & a & b & d & e & h_0 & h_1 \\
\hline
\RewMarg{u} & 1/5 & 1/2 & 1 & 1 & 1 & 1 \\
\Decoded{u} & 0 & \bot & 1 & 1 & 1 & 1
\end{array}.
\label{eq:small-example-decoding}
\end{equation}
The assignment in \eqref{eq:small-example-decoding} satisfies every gate in \eqref{eq:small-example-circuit}: the \NOT{} gate maps $0$ to $1$; the \OR{} gate has the Boolean input $\Decoded{d}=1$ and therefore valid output $\Decoded{e}=1$; and the two \PURIFY{} outputs both equal the Boolean input $\Decoded{e}=1$.

\begin{proposition}
\label{prop:small-mixed-cce-example}
The stationary joint policy $\Policy$ defined in
\eqref{eq:small-example-source-policy} and
\eqref{eq:small-example-output-policy} has the following properties:
\begin{enumerate}[label=\textup{(\roman*)},leftmargin=8mm]
\item For $\lambda=0$, it is an exact perfect Markov CCE of $G_0(I)$.
\item For every $0<\lambda<1/180$, it is an exact non-perfect Markov CCE of
$G_\lambda(I)$ from the distribution $\InitialDist$ that is uniform on
$\OutputStates$.
\item For every $\lambda>0$, it is not a perfect Markov $\Approx$-CCE, for $\Approx < \frac{3\lambda}{4}$.
\end{enumerate}
\end{proposition}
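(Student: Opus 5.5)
The plan is to verify the three claims by computing the controller's value function under $\Policy$ explicitly and then applying Lemma~\ref{lem:performance-difference-deviation-policy} to each player separately.

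\textbf{Computing the values.} Because every output state plays $(1,1)$ deterministically, values can be computed bottom-up. Sinks are trivial. The source states lead to $\Sink{0}$, so
\[
\Value{\Controller}{\Policy}{\State{a}}=(1-\Discount)\bigl((1-\lambda)\tfrac15+\lambda\cdot 0\bigr),
\qquad
\Value{\Controller}{\Policy}{\State{b}}=(1-\Discount)\bigl((1-\lambda)\tfrac12+\tfrac{\lambda}{2}\bigr).
\]
For the output states:
\begin{itemize}[leftmargin=7mm]
\item $\Value{\Controller}{\Policy}{\State{d}}=(1-\Discount)(1-\lambda)+\Discount\cdot\tfrac12$;
\item $\Value{\Controller}{\Policy}{\State{e}}=(1-\Discount)(1-\lambda)+\tfrac{\Discount}{2}\bigl(\Value{\Controller}{\Policy}{\State{d}}+\Value{\Controller}{\Policy}{\State{b}}\bigr)$;
\item $\Value{\Controller}{\Policy}{\State{h_0}}$ and $\Value{\Controller}{\Policy}{\State{h_1}}$ both equal $(1-\Discount)(1-\lambda)+\Discount \Value{\Controller}{\Policy}{\State{e}}$.
\end{itemize}
At $\lambda=0$ this gives $3/16,\ 15/32,\ 31/32$, and $\Value{\Controller}{\Policy}{\State{e}}\approx 0.98$. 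Consequently, at every output state the action played ($1$) has continuation value larger by a constant:
\begin{itemize}[leftmargin=7mm]
\item at $\State{d}$: $1/2$ versus $3/16$;
\item at $\State{e}$: $23/32$ versus $21/64$;
\item at $\State{h_0}$: $\Value{\Controller}{\Policy}{\State{e}}$ versus $89/128$;
\item at $\State{h_1}$: $\Value{\Controller}{\Policy}{\State{e}}$ versus $39/128$.
\end{itemize}
For $\lambda<1/180$ these gaps shift only by $O(\lambda)$.

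\textbf{Reward player, (i) and (ii).} The reward player does not affect transitions, so its deviation gain is $\sum_s d(s)\,(\text{immediate gain at } s)$ with the visitation distribution unchanged; it therefore suffices to show that no fixed action improves the immediate matching payoff anywhere. At output states and at $\State{a}$ we have $\MatchProb{u}=1$. At $\State{b}$ we have $\MatchProb{b}=1/2=B(1/2)$. At sinks the reward is $0$ under every policy. Hence every local advantage is $\le 0$, and this part of the argument is independent of $\lambda$ and of the initial state.

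\textbf{Controller.} For (i), with $\lambda=0$, the immediate reward does not depend on the controller's action.
\begin{itemize}[leftmargin=7mm]
\item At $\State{a}$ and $\State{b}$ both actions lead to $\Sink{0}$, so every local advantage is $0$.
\item At output states the local advantage of playing $0$ with probability $x$ is $-x\Discount(W_1-W_0)\le0$.
\end{itemize}
All local advantages are thus $\le0$, and Lemma~\ref{lem:performance-difference-deviation-policy} gives nonpositive gain from every initial state, so $\Policy$ is an exact perfect Markov CCE.

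For (ii) the difficulty is $\State{a}$. There a deviation gains from the mismatch bonus: the fixed action $1$ earns $(1-\Discount)\lambda\cdot\frac45$, whereas $\Policy$ earns $0$, since $\MatchProb{a}=1$. At $\State{b}$, by contrast, the joint policy is already a product, so any deviation keeps the mismatch bonus at $\lambda/2$ and transitions are unchanged; the advantage there is $0$. At an output state $\State{v}$, a deviation putting probability $x$ on action $0$ has local advantage at most
\[
x\bigl(-\Discount(W_1-W_0)+(1-\Discount)\lambda\bigr).
\]
This is $\le -x\,c$ for a constant $c$, because $\lambda<1/180$ is negligible against the continuation gaps; putting the action $1$ instead yields bonus $\lambda(1-\RewMarg{v})=0$, so no gain. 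Under $\InitialDist$, the state $\State{a}$ is reached only through $\State{d}$ with action $0$. Its visitation is therefore at most $\Discount x_d\,\VisitDist{\InitialDist}{\widetilde\Policy}{\State{d}}$, where $x_d$ is the deviation's probability of $0$ at $\State{d}$. The gain at $\State{a}$ is then at most $\Discount x_d\, d(\State{d})(1-\Discount)\lambda$, which is dominated by the loss $x_d\, d(\State{d})\,c$ incurred at $\State{d}$. Summing in Lemma~\ref{lem:performance-difference-deviation-policy} gives total gain $\le 0$.

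This charging of the $\State{a}$-gain to the $\State{d}$-loss is the main step to get right. The key check is that $c$, which is roughly $\Discount\bigl(\tfrac12-\tfrac3{16}\bigr)-O(\lambda)$, exceeds $\Discount\lambda$, which holds comfortably for $\lambda<1/180$.

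\textbf{(iii).} Start the controller at $\State{a}$ and deviate by playing $1$ at $\State{a}$, keeping $\Policy$ elsewhere. Both actions at $\State{a}$ lead to $\Sink{0}$, so the continuation is identical. The immediate reward changes from $(1-\lambda)\frac15$ to $(1-\lambda)\frac15+\lambda\cdot\frac45$. The gain is therefore
\[
(1-\Discount)\cdot\tfrac45\lambda=\tfrac{15}{16}\cdot\tfrac45\lambda=\tfrac{3\lambda}{4}>\Approx,
\]
which violates the perfect condition at $\State{a}$.
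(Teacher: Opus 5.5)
Your proposal is correct. The reward-player argument and part (iii) match the paper, but your controller argument takes a different route.

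\textbf{The paper's route.} For the controller, the paper does not use the performance-difference lemma. Since the non-sink transition graph is acyclic, it computes by backward induction the optimal value $B(s)$ of the controller's MDP against the reward player's marginal. It then checks that action $1$ is Bellman-optimal at every output state; note that the gap at $\State{d}$, $(5-249\lambda)/256$, is evaluated with $B(\State{a})=\tfrac{3}{16}+\tfrac{9\lambda}{16}$ in place of $\Value{\Controller}{\Policy}{\State{a}}$. From this it concludes $B=\Value{\Controller}{\Policy}{\cdot}$ on $\OutputStates$ (and everywhere when $\lambda=0$), and reads off (iii) as $B(\State{a})-\Value{\Controller}{\Policy}{\State{a}}=3\lambda/4$.

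\textbf{Your route.} You take advantages relative to $\Value{\Controller}{\Policy}{\cdot}$ and apply Lemma~\ref{lem:performance-difference-deviation-policy}. You then absorb the only positive advantage, at $\State{a}$, via the exact identity $\VisitDist{\InitialDist}{\widetilde\Policy}{\State{a}}=\Discount x_d\,\VisitDist{\InitialDist}{\widetilde\Policy}{\State{d}}$, which uses $\InitialDist(\State{a})=0$.

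\textbf{Why they agree.} The two arguments encode the same inequality. Bound the advantage at $\State{a}$ by its exact maximum $(1-\Discount)\tfrac45\lambda$. Your net coefficient at $\State{d}$ then becomes $\tfrac{5-237\lambda}{256}-\Discount(1-\Discount)\tfrac45\lambda=\tfrac{5-249\lambda}{256}$, which is the paper's gap. At $\State{e},\State{h_0},\State{h_1}$ your coefficients are exactly the paper's gaps $\tfrac{25-990\lambda}{1024}$, $\tfrac{147-8175\lambda}{8192}$, $\tfrac{347-8175\lambda}{8192}$.

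\textbf{Trade-off.} The paper's route gives exact optimal values and treats (i)--(iii) uniformly. Yours reuses the performance-difference machinery of the hardness proofs and needs only signs plus one charging step.

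\textbf{What to tighten.} Replace ``shift only by $O(\lambda)$'' and ``negligible'' with these explicit coefficients. They are all positive for $\lambda<1/180$, the tightest being at $\State{h_0}$. Keep your check $\tfrac{5-237\lambda}{256}\ge\Discount\lambda$ alongside them.
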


\begin{proof}
We check deviations directly from the definition of value.

\paragraph{Reward player.}
Let $\DeviationPolicy{\RewardPlayer}$ be any stationary deviation and set
$\alpha_s:=\DeviationPolicy{\RewardPlayer}(1\mid s)$.  At a variable state
$s$, its expected matching payoff is
\begin{equation}
\alpha_s\CtrlMarg{s}+(1-\alpha_s)(1-\CtrlMarg{s})
\le \max\{\CtrlMarg{s},1-\CtrlMarg{s}\}. \nonumber
\label{eq:small-example-direct-reward-bound}
\end{equation}
For $\State{a}$, $\State{b}$, and the four output states, respectively, the
pairs
$\bigl(\MatchProb{s},\max\{\CtrlMarg{s},1-\CtrlMarg{s}\}\bigr)$
are
\[
(1,4/5),
\qquad
(1/2,1/2),
\qquad
(1,1).
\]
Thus the deviation cannot increase the reward player's expected payoff at any state.  Its action does not affect transitions, so the deviated and original state trajectories have the same law.  Therefore, for every initial distribution $\nu$, it holds that
\begin{equation}
\begin{aligned}
\Value{\RewardPlayer}{\DeviationProfile{\RewardPlayer}}{\nu}
-\Value{\RewardPlayer}{\Policy}{\nu}
&=
(1-\Discount)\Expect_{\nu}^{\Policy}
\left[
\sum_{t\ge0}\Discount^t
\left(
\alpha_{s_t}\CtrlMarg{s_t}
+(1-\alpha_{s_t})(1-\CtrlMarg{s_t})
-\MatchProb{s_t}
\right)
\right]\le 0,
\end{aligned}
\label{eq:small-example-direct-reward-global}
\end{equation}
where the summand is $0$ at sinks.

\paragraph{Controller.}
Against the fixed reward-player marginal, we define
\begin{equation}
\bar r_{\Controller}(s,b)
:=
\sum_{j\in\Bits}\OtherPolicy{\Controller}(j\mid s)
\left[(1-\lambda)j+\lambda\ind[b\ne j]\right].
\label{eq:small-example-direct-controller-stage} \nonumber
\end{equation}
The non-sink transition graph is acyclic.  Hence the largest value obtainable from a state is given, by backward induction, by
\begin{equation}
\begin{aligned}
B(\Sink{\theta})&=\theta,\\
B(s)&=
\max_{b\in\Bits}
\left\{
(1-\Discount)\bar r_{\Controller}(s,b)
+
\Discount\sum_{s'}\TransKernel(s'\mid s,b)B(s')
\right\}.
\end{aligned}
\label{eq:small-example-direct-bellman}
\end{equation}
Indeed, a mixed deviating action only forms a convex combination of the two quantities in \eqref{eq:small-example-direct-bellman}.  Thus every stationary controller deviation has value at most $B(s)$ from $s$.

For controller actions $(0,1)$, the corresponding values of the one-step payoff term $\bar r_{\Controller}(s,b)$ are
\[
(1/5,1/5+3\lambda/5)
\text{ at }\State{a},
\qquad
(1/2,1/2)
\text{ at }\State{b},
\qquad
(1,1-\lambda)
\text{ at every output state}.
\]
Consequently, we have
\begin{equation}
\Value{\Controller}{\Policy}{\State{a}}
=\frac3{16}(1-\lambda),
\qquad
B(\State{a})=\frac3{16}+\frac{9\lambda}{16},
\qquad
\Value{\Controller}{\Policy}{\State{b}}
=B(\State{b})=\frac{15}{32}.
\label{eq:small-example-direct-source-values}
\end{equation}
Evaluating \eqref{eq:small-example-direct-bellman} successively at
$\State{d},\State{e},\State{h_0},\State{h_1}$, define the gap at a state
$s$ to be the Bellman value of controller action $1$ minus the Bellman value of
controller action $0$ at $s$.  These gaps are
\begin{equation}
\begin{array}{c|c}
 s & \text{gap} \\
\hline
\State{d}   & (5-249\lambda)/256 \\
\State{e}   & (25-990\lambda)/1024 \\
\State{h_0} & (147-8175\lambda)/8192 \\
\State{h_1} & (347-8175\lambda)/8192
\end{array}. \nonumber
\label{eq:small-example-direct-action-gaps}
\end{equation}
All four gaps are positive for $0\le\lambda<1/180$.  Thus action $1$ is optimal at every output state.  Since $\Policy$ plays action $1$ there, and the action-$1$ successors are, successively, sinks, $\State{d}$ and $\State{b}$, and $\State{e}$, backward induction gives
\begin{equation}
B(s)=\Value{\Controller}{\Policy}{s}
\qquad
\text{for every }s\in\OutputStates.
\label{eq:small-example-direct-output-optimal}
\end{equation}
Let $\InitialDist$ be uniform on $\OutputStates$.  For every stationary controller deviation,
\begin{equation}
\Value{\Controller}{\DeviationProfile{\Controller}}{\InitialDist}
\le
\sum_{s\in\OutputStates}\InitialDist(s)B(s)
=
\Value{\Controller}{\Policy}{\InitialDist}. \nonumber
\label{eq:small-example-direct-nonperfect}
\end{equation}
Together with \eqref{eq:small-example-direct-reward-global}, this proves
part~\textup{(ii)}.

When $\lambda=0$, \eqref{eq:small-example-direct-source-values} also gives $B(\State{a})=\Value{\Controller}{\Policy}{\State{a}}$; equality already holds at $\State{b}$, at every output state by \eqref{eq:small-example-direct-output-optimal}, and at every sink.  Hence no controller deviation improves from any state.  Together with \eqref{eq:small-example-direct-reward-global}, this proves part~\textup{(i)}.

Finally, for every $\lambda>0$, deviating to action $1$ from $\State{a}$ gives
\begin{equation}
B(\State{a})-
\Value{\Controller}{\Policy}{\State{a}}
=\frac{3\lambda}{4}. \nonumber
\label{eq:small-example-direct-not-perfect}
\end{equation} 
Thus the policy is  not a perfect Markov $\Approx$-CCE, for $\Approx < \frac{3\lambda}{4}$, proving part~\textup{(iii)}.
\end{proof}

\paragraph{Failure of equilibrium collapse.}
The correlation at $\State{a}$ is essential.  Let $\widehat{\Policy}$ be the product policy with the same one-player marginals as $\Policy$.  Both marginals
at $\State{a}$ put probability $1/5$ on action $1$, so under
$\widehat{\Policy}$ the reward player's matching probability is
\begin{equation}
\ProdMatch{a}
=
\left(1-\frac15\right)^2+
\left(\frac15\right)^2
=
\frac{17}{25}. \nonumber
\label{eq:small-example-product-match}
\end{equation}
A fixed deviation to action $0$ instead matches with probability
$1-\CtrlMarg{a}=4/5$.  Because both controller actions at $\State{a}$ lead to $\Sink{0}$, this yields normalized value gain
\begin{equation}
(1-\Discount)
\left(\frac45-\frac{17}{25}\right)
=
\frac9{80}>0. \nonumber
\label{eq:small-example-product-deviation}
\end{equation}
Thus the product policy with the same marginals is not even a reward-player best response at $\State{a}$.  The example therefore exhibits a genuinely correlated stationary Markov CCE rather than an equilibrium obtained by independent mixing.

%% file: conclusion.tex
\section{Conclusion}

In this paper, we prove that computing approximate stationary perfect Markov CCE in discounted single-controller stochastic games is \PPAD-complete, even with two players, binary actions, a fixed discount factor, and a constant approximation parameter. 
Our results show that the hardness of stationary Markov CCE is not an artifact of switching-controller transition setup or equilibrium-collapse phenomena: it persists even when a single player controls all state transitions and joint policies are highly correlated. For the non-perfect notion, we obtain constant-accuracy hardness under the \PCP-for-\PPAD hypothesis and unconditional hardness for inverse-polynomial accuracy.

\medskip

\noindent
There are several important open questions for follow-up research:

\begin{itemize}
    \item While our result settles the computational landscape of stationary Markov CCE, it remains a compelling open question whether efficient no-regret learning is possible in discounted general-sum stochastic games.
    In particular, although no-regret learning has been shown to be intractable in many Markov game settings \cite{golowich-regret,abbasi2013online,bai2020near}, it remains unclear whether there exist efficient learning algorithms which play stationary Markovian policies and are no-regret against stationary Markovian benchmarks in general-sum discounted stochastic games.
    \item Furthermore, our paper leaves open the complementary statistical question of identifying the optimal sample complexity of learning nonstationary Markov CCE and CE.
    In particular, while nonstationary Markov CCE and CE are known to be learnable with $\mathcal{O}(1/\varepsilon^2)$ sample complexity~\cite{cui2023breaking}, the optimal sample complexity remains unclear.
    The same question is also interesting for the single-controller Markov game setting.
\end{itemize}

%% file: appendix.tex
\appendix

\section{Extended Related Work}\label{appendix:related_work}

\paragraph{Equilibrium computation in Markov games.}

The one-state, one-stage special case of a Markov game is precisely a normal-form game. 
In this setting, the computational landscape is well understood: Nash equilibria in two-player zero-sum games can be computed efficiently via linear programming \cite{v1928theorie}, whereas Nash equilibria in general-sum games are \PPAD-complete \cite{daskalakis2009complexity,chen2009settling}. 
By contrast, correlated equilibria (CE) and coarse correlated equilibria (CCE) admit linear feasibility descriptions and can be computed efficiently in explicitly represented games \cite{Aumann1974}; moreover, correlated equilibria arise naturally as the outcome of decentralized no-regret learning dynamics~\cite{CesaBianchiLugosi2006,stoltz2005internal,blum2007external}.

Moving from this static benchmark to genuine multi-state Markov games, tractability depends sharply on the policy class.
Nonstationary non-Markovian CCE can be computed efficiently by algorithms such as \textsc{V-learning} \cite{JLWY21,SMB21,MB21}, but the resulting policies are generally not Markovian. 
Nonstationary Markov CCE are also tractable via backward induction, or via decentralized learning algorithms \cite{cui2023breaking, daskalakis2009complexity}.
By contrast, as established in \cite{daskalakis2023complexity,sidford_complexity,HansenNie2025}, computing stationary Markov CCE in general-sum stochastic games is \PPAD-complete, even with two players, constant discount factor and constant approximation parameter. 
The reductions used in these works, however, rely on turn-based constructions in which each state is controlled by a single player, with control alternating across states, so that every player controls both rewards and transitions at some states.
This structure effectively collapses Markov CCE to  Markov NE, thereby allowing hardness for NE to immediately transfer to CCE. 
A key question left open is whether the hardness persists for single-controller stochastic games, where Markov CCE do not collapse to Nash equilibria, or whether it crucially relies on both players being able to affect the state dynamics.

\paragraph{Single-controller Markov games.}
Single-controller Markov games constitute a central subclass of stochastic games, in which rewards may depend on the joint action profile, but the transition kernel depends only on the action of one distinguished player.
This model goes back to classical work on stochastic games \cite{ParthasarathyRaghavan1981,FilarVrieze2012}, and captures dynamic strategic settings in which one player controls state evolution while all players may affect payoffs. 
Classical algorithmic work on single-controller stochastic games developed linear-programming formulations for zero-sum cases and finite-step characterizations of stationary Markov Nash equilibria in the general-sum case via matrix-game, quadratic-programming, and LCP/Lemke-type reductions \cite{sc1, sc2, sc3, sc4}; however, these results are finite-step or global-optimization characterizations rather than polynomial-time algorithms for  equilibrium computation, and they do not address correlated equilibria.
Over the years, single-controller Markov games have been studied in communications, control, machine learning, and economics \cite{Basar1986,Eldosouky2016,FilarVrieze2012, Guan2016}. 
More recently, single-controller Markov games have emerged as a tractable frontier for learning and equilibrium computation under specific assumptions. 
In particular, fictitious-play-type dynamics converge to stationary Markov NE in two-player zero-sum and many-player identical-interest single-controller Markov games \cite{ozgaglar_controller, sc_icml};
single-controller structure allows efficient computation of nonstationary Markov NE in polymatrix zero-sum Markov games through an equilibrium-collapse phenomenon whereby CCE induce NE \cite{kalogiannis_polymatrix};
and optimistic policy-gradient methods converge to stationary Markov NE in single-controller stochastic games whenever such an equilibrium-collapse property holds \cite{panageasSingleController}.

\section{Proof of Lemma \ref{lem:performance-difference-deviation-policy}}\label{appendix:performance}

\begin{lemma*}[Performance-difference  lemma for one-player deviations]
Fix a stationary joint policy $\Policy$ and a deviation
policy $\DeviationPolicy{i}$ for player $i$.  
Then
\begin{equation}
\Value{i}{\DeviationProfile{i}}{\State{0}}-\Value{i}{\Policy}{\State{0}}
=
\frac{1}{1-\Discount}
\sum_{s\in\StateSpace}
\VisitDist{\State{0}}{\DeviationProfile{i}}{s}
\CoarseAdvantage{i}{\Policy}{\DeviationPolicy{i}}{s}.
\label{eq:coarse-perf-diff_}
\end{equation}
\end{lemma*}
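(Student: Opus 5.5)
The plan is the standard telescoping argument of the classical performance-difference lemma, applied to the deviated joint policy $\widetilde{\Policy}:=\DeviationProfile{i}$ in place of a single-agent policy. Let $(\State{t},a_t)_{t\ge0}$ be the trajectory generated by $\widetilde{\Policy}$ from $\State{0}$, so $a_t\sim\widetilde{\Policy}(\cdot\mid\State{t})$ and $\State{t+1}\sim\TransKernel(\cdot\mid\State{t},a_t)$. I will write the deviated value as a discounted sum of rewards along this trajectory and add and subtract the original value function $\Value{i}{\Policy}{\cdot}$ evaluated along it:
\[
\Value{i}{\widetilde{\Policy}}{\State{0}}-\Value{i}{\Policy}{\State{0}}
=\Expect^{\widetilde{\Policy}}\Bigl[\sum_{t\ge0}\Discount^t\Bigl((1-\Discount)\RewardSymbol_i(\State{t},a_t)+\Discount\Value{i}{\Policy}{\State{t+1}}-\Value{i}{\Policy}{\State{t}}\Bigr)\Bigr].
\]
This identity holds because $\sum_{t\ge0}\Discount^t(\Discount \Value{i}{\Policy}{\State{t+1}}-\Value{i}{\Policy}{\State{t}})$ telescopes to $-\Value{i}{\Policy}{\State{0}}$. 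Values are bounded by $1$ and $\Discount<1$, so the series converges absolutely and the rearrangement is justified by dominated convergence.

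Next I condition on $(\State{t},a_t)$ inside each summand. By the tower property and the Markov property, $\Expect[\Value{i}{\Policy}{\State{t+1}}\mid \State{t},a_t]=\sum_{s'}\TransKernel(s'\mid\State{t},a_t)\Value{i}{\Policy}{s'}$. Hence, by \eqref{eq:qvalue}, the summand has conditional expectation $\QValue{i}{\Policy}{\State{t},a_t}-\Value{i}{\Policy}{\State{t}}$. Conditioning further on $\State{t}$ alone and averaging over $a_t\sim\widetilde{\Policy}(\cdot\mid\State{t})$ gives exactly $\CoarseAdvantage{i}{\Policy}{\DeviationPolicy{i}}{\State{t}}$, using the product form \eqref{eq:deviation-product-policy} of $\widetilde{\Policy}$ (with $a=(b_i,a_{-i})$). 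Therefore
\[
\Value{i}{\widetilde{\Policy}}{\State{0}}-\Value{i}{\Policy}{\State{0}}
=\sum_{t\ge0}\Discount^t\sum_{s}\Prob[\State{t}=s\mid\State{0},\widetilde{\Policy}]\,\CoarseAdvantage{i}{\Policy}{\DeviationPolicy{i}}{s}.
\]
Swapping the finite sum over $s$ with the series and recognising the definition of $\VisitDist{\State{0}}{\widetilde{\Policy}}{s}$ yields \eqref{eq:coarse-perf-diff} with the factor $\frac{1}{1-\Discount}$.

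The only point needing care is the first step: writing the normalized deviated value as $\Expect^{\widetilde{\Policy}}[(1-\Discount)\sum_t\Discount^t\RewardSymbol_i(\State{t},a_t)]$ and verifying the telescoping with the normalization $(1-\Discount)$ placed correctly. This works because \eqref{eq:value} and \eqref{eq:qvalue} use the same normalization. The rest, namely the Markov property and the interchange of sums justified by boundedness, is routine. No single-controller structure is needed, since the argument holds for any stationary joint policy.
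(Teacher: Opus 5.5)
Your proof is correct and is essentially the paper's argument. Both write the value gap as the discounted sum of the local advantages $\CoarseAdvantage{i}{\Policy}{\DeviationPolicy{i}}{\State{t}}$ along the deviated trajectory via $\QValue{i}{\Policy}{s,a}=(1-\Discount)\RewardSymbol_i(s,a)+\Discount\sum_{s'}\TransKernel(s'\mid s,a)\Value{i}{\Policy}{s'}$, then swap the time series with the finite state sum. The only cosmetic difference is that you telescope $\Value{i}{\Policy}{\cdot}$ pathwise, with tail $\Discount^T\Value{i}{\Policy}{\State{T}}\to 0$, whereas the paper derives and unrolls the recursion $D(s)=\CoarseAdvantage{i}{\Policy}{\DeviationPolicy{i}}{s}+\Discount\,\Expect[D(\State{1})\mid\State{0}=s,\DeviationProfile{i}]$ for $D=\Value{i}{\DeviationProfile{i}}{\cdot}-\Value{i}{\Policy}{\cdot}$, killing the tail $\Discount^T D(\State{T})$.
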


\begin{proof}
Define
\[
D(s):=\Value{i}{\DeviationProfile{i}}{s}-\Value{i}{\Policy}{s}.
\]
Also define the transition kernel induced by $\DeviationProfile{i}$ as
\[
\TransKernel^{\DeviationProfile{i}}(s'\mid s)
:=
\sum_{b\in\ActionProfileSet{s}}
(\DeviationProfile{i})(b\mid s)\TransKernel(s'\mid s,b).
\]
By the Bellman equation for the deviated policy $\DeviationProfile{i}$,
\[
\Value{i}{\DeviationProfile{i}}{s}
=
\sum_{b\in\ActionProfileSet{s}}
(\DeviationProfile{i})(b\mid s)
\left[
(1-\Discount)\RewardSymbol_i(s,b)
+
\Discount\sum_{s'\in\StateSpace}\TransKernel(s'\mid s,b)
\Value{i}{\DeviationProfile{i}}{s'}
\right].
\]
Add and subtract $\Value{i}{\Policy}{s'}$ inside the continuation term.  Then
\[
\begin{aligned}
\Value{i}{\DeviationProfile{i}}{s}
&=
\sum_{b\in\ActionProfileSet{s}}
(\DeviationProfile{i})(b\mid s)
\left[
(1-\Discount)\RewardSymbol_i(s,b)
+
\Discount\sum_{s'\in\StateSpace}\TransKernel(s'\mid s,b)
\Value{i}{\Policy}{s'}
\right]
\\
&\quad+
\Discount
\sum_{b\in\ActionProfileSet{s}}
(\DeviationProfile{i})(b\mid s)
\sum_{s'\in\StateSpace}
\TransKernel(s'\mid s,b)
\left(
\Value{i}{\DeviationProfile{i}}{s'}-\Value{i}{\Policy}{s'}
\right).
\end{aligned}
\]
By the definition of $\QValue{i}{\Policy}{s,b}$, this gives
\[
\Value{i}{\DeviationProfile{i}}{s}
=
\sum_{b\in\ActionProfileSet{s}}
(\DeviationProfile{i})(b\mid s)\QValue{i}{\Policy}{s,b}
+
\Discount
\sum_{s'\in\StateSpace}
\TransKernel^{\DeviationProfile{i}}(s'\mid s)D(s').
\]
Subtracting
\[
\Value{i}{\Policy}{s}=
\sum_{a\in\ActionProfileSet{s}}
\Policy(a\mid s)\QValue{i}{\Policy}{s,a}
\]
and using the definition of $\CoarseAdvantage{i}{\Policy}{\DeviationPolicy{i}}{s}$ yields the
one-step identity
\[
D(s)
=
\CoarseAdvantage{i}{\Policy}{\DeviationPolicy{i}}{s}
+
\Discount
\sum_{s'\in\StateSpace}
\TransKernel^{\DeviationProfile{i}}(s'\mid s)D(s').
\]
Equivalently,
\[
D(s)
=
\CoarseAdvantage{i}{\Policy}{\DeviationPolicy{i}}{s}
+
\Discount\Expect[D(\State{1})\mid \State{0}=s,\DeviationProfile{i}].
\]
Iterating this identity from $\State{0}$, for every integer $T\ge1$ we obtain
\[
D(\State{0})
=
\Expect\left[
\sum_{t=0}^{T-1}\Discount^t
\CoarseAdvantage{i}{\Policy}{\DeviationPolicy{i}}{\State{t}}
+
\Discount^T D(\State{T})
\;\middle|\;
\State{0},\DeviationProfile{i}
\right].
\]
Because rewards lie in $[0,1]$, all normalized discounted values lie in
$[0,1]$.  Hence $|D(\State{T})|\le1$, so
\[
\left|
\Discount^T\Expect[D(\State{T})\mid \State{0},\DeviationProfile{i}]
\right|
\le \Discount^T\to0.
\]
Letting $T\to\infty$ gives
\[
D(\State{0})
=
\Expect\left[
\sum_{t\ge0}\Discount^t
\CoarseAdvantage{i}{\Policy}{\DeviationPolicy{i}}{\State{t}}
\;\middle|\;
\State{0},\DeviationProfile{i}
\right].
\]
Since the state space is finite, this may be rewritten as
\[
\begin{aligned}
D(\State{0})
&=
\sum_{t=0}^{\infty}\Discount^t
\sum_{s\in\StateSpace}
\Prob[\State{t}=s\mid \State{0},\DeviationProfile{i}]
\CoarseAdvantage{i}{\Policy}{\DeviationPolicy{i}}{s}
\\
&=
\sum_{s\in\StateSpace}
\left(
\sum_{t=0}^{\infty}\Discount^t
\Prob[\State{t}=s\mid \State{0},\DeviationProfile{i}]
\right)
\CoarseAdvantage{i}{\Policy}{\DeviationPolicy{i}}{s}.
\end{aligned}
\]
By definition of the normalized discounted visitation distribution,
\[
\sum_{t=0}^{\infty}\Discount^t
\Prob[\State{t}=s\mid \State{0},\DeviationProfile{i}]
=
\frac{1}{1-\Discount}\VisitDist{\State{0}}{\DeviationProfile{i}}{s}.
\]
Substituting this identity yields \eqref{eq:coarse-perf-diff_}.

\end{proof}